\newcommand\R{{\ensuremath {\mathbb R} }}
\newcommand\C{{\ensuremath {\mathbb C} }}
\newcommand\N{{\ensuremath {\mathbb N} }}
\newcommand\Z{{\ensuremath {\mathbb Z} }}
\newcommand\bE{{\ensuremath {\mathbb E}}}
\newcommand\1{{\ensuremath {\mathds 1} }}
\newcommand\bP{{\ensuremath {\mathbb P} }}
\renewcommand\phi{\varphi}
\newcommand{\bH}{\mathbb{H}}
\newcommand{\ccL}{\mathscr{L}}
\renewcommand{\to}{\rightarrow}
\newcommand{\cN}{\mathcal{N}}
\newcommand{\cJ}{\mathcal{J}}
\newcommand{\cF}{\mathcal F}
\newcommand{\cR}{\mathcal R}
\newcommand{\cK}{\mathcal{K}}
\newcommand{\cG}{\mathcal{G}}
\newcommand\ii{{\ensuremath {\infty}}}
\newcommand\pscal[1]{{\ensuremath{\left\langle #1 \right\rangle}}}
\newcommand{\norm}[1]{ \left| \! \left| #1 \right| \! \right| }
\DeclareMathOperator{\tr}{{\rm Tr}}
\renewcommand{\leq}{\leqslant}
\renewcommand{\geq}{\geqslant}
\renewcommand{\epsilon}{\varepsilon}
\newtheorem{theorem}{Theorem}
\newtheorem{lemma}[theorem]{Lemma}
\newtheorem{proposition}[theorem]{Proposition}
\newtheorem{remark}[theorem]{Remark}
\newtheorem{definition}[theorem]{Definition}
\newtheorem{corollary}[theorem]{Corollary}
\newtheorem{example}[theorem]{Example}
\begin{document}
 
\author[X. Blanc]{Xavier BLANC}
\address{CEA, DAM, DIF, F-91297 Arpajon, France.}
\email{blanc@ann.jussieu.fr, xavier.blanc@cea.fr}

\author[M. Lewin]{Mathieu LEWIN}
\address{CNRS \& Laboratoire de Mathématiques (UMR 8088), Université de Cergy-Pontoise, F-95000 Cergy-Pontoise, France.}
\email{mathieu.lewin@math.cnrs.fr}

\title[Thermodynamic limit for disordered quantum Coulomb systems]{Existence of the thermodynamic limit for disordered quantum Coulomb systems}

\maketitle

\begin{center}
\it Dedicated to Elliott H. Lieb, on the occasion of his 80th birthday
\end{center}

\medskip

\begin{abstract}
Following a recent method introduced by C. Hainzl, J.P. Solovej and the second author of this article, we prove the existence of the thermodynamic limit for a system made of quantum electrons, and classical nuclei whose positions and charges are randomly perturbed in an ergodic fashion. All the particles interact through Coulomb forces.

\medskip 

\noindent{\scriptsize\copyright~2012 by the authors. This paper may be reproduced, in its entirety, for non-commercial purposes. Final version to appear in \emph{J. Math. Phys.}}
\end{abstract}

\bigskip

\section{Introduction}

One of the main purposes of Statistical Physics is to understand the macroscopic behavior of microscopic systems. For regular matter, composed of negative (electrons) and positive (nuclei) charges, this question is highly non trivial because of the long range of the Coulomb potential.

In 1966, Fisher and Ruelle have in~\cite{FisRue-66} raised the important question of the \emph{stability} of many-particle systems at the \emph{macroscopic scale}. This may be formulated by requiring that the energy per particle (or the energy per unit volume) stays bounded from below when the number of particles (or the volume $|D|$ of the sample) is increased,
$$\frac{\cF(D)}{|D|}\geq -C.$$ 
For many-body systems interacting through Coulomb forces like in ordinary matter, the first proof of stability is due to Dyson and Lenard~\cite{DysLen-67,DysLen-68}. The fermionic nature of the electrons is then important~\cite{Dyson-67}. A different proof was later found by Lieb and Thirring in~\cite{LieThi-75}, based on a celebrated inequality which now carries their name. We refer the reader to~\cite{Lieb-76,Lieb-90,LieSei-09} for a review of results concerning the stability of matter.

The stability of matter as defined by Fisher and Ruelle only shows that
the system does not collapse when the number of particles grows. A more precise 
requirement is that the energy per particle (or the
energy per unit volume) actually \emph{has a limit} when the number of
particle (or the volume $|D|$) goes to infinity
$$\lim_{|D|\to\ii}\frac{\cF(D)}{|D|}=f.$$
For short-range interactions,
this was already done by Ruelle~\cite{Ruelle-63a,Ruelle-63b,Ruelle} and
Fisher~\cite{Fisher-64}. The first proof of a theorem of this form for
Coulomb systems is due to Lieb and Lebowitz in~\cite{LieLeb-72}. In this latter work rotational invariance plays a crucial role. For quantum crystals, in which the nuclei are classical particles clamped on a lattice (a system which is obviously not rotationally invariant), the first proof goes back to Fefferman~\cite{Fefferman-85}. The main challenge of all these works was to find an adequate way to prove the existence of \emph{screening}, the fact that matter spontaneously organizes in a locally neutral way. Screening is at the origin of a faster decay of the interactions between the particles and it is the main explanation for the existence of such systems at the macroscopic scale. The importance of screening was already stressed in a fundamental paper of Onsager~\cite{Onsager-39}.

In two recent papers~\cite{HaiLewSol_1-09,HaiLewSol_2-09}, Hainzl, Solovej and the second author of this article have proposed a new method for proving the existence of the thermodynamic limit for quantum Coulomb system. This method is based on an inequality quantifying screening due to Graf and Schenker~\cite{GraSch-95}, and which was itself inspired of earlier works by Conlon, Lieb and Yau~\cite{ConLieYau-88,ConLieYau-89}. The purpose of the present work is to extend the results of~\cite{HaiLewSol_1-09,HaiLewSol_2-09} to \emph{stochastic systems} in which the electrons are quantum and the nuclei are \emph{random classical particles}. 

It has been known for a long time that the presence of disorder can strongly influence the behavior of a quantum system. The most famous example is of course the so-called {Anderson localization}~\cite{Anderson-58} of particles under weak disorder. On the mathematical side, lots of works have been devoted to the study of \emph{noninteracting} disordered quantum systems, for instance described by random Schrödinger operators (for an introduction to these results, see, e.g.,~\cite{Hislop-08,Kirsch-08}). To our knowledge, the mathematical literature on \emph{interacting} disordered many-body systems is quite limited, in spite of the increasing physical interest devoted to such systems~\cite{BasAleAlt-06,Deissler-10,Sanchez-10}. In recent works~\cite{ChuSuh-10,AizWar-09} localization bounds were derived for systems of a finite number of particles with short range interactions. Some authors considered nonlinear random models describing condensed bosonic systems, mainly in Gross-Pitaevskii theory (see, e.g.,~\cite{KloMet-11} and the references therein). 

In a recent paper~\cite{Veniaminov-11}, Veniaminov has initiated the mathematical study of the thermodynamic limit of random many-body quantum systems with short range interactions, following the approach of Ruelle and Fisher. His work does not cover ordinary matter made of Coulomb charges, however. Large stochastic Coulomb systems were considered before by Le Bris, Lions and the first author of this paper in~\cite{BlaBriLio-07}. There the electrons are only described by Thomas-Fermi-type theories and, in this case, it is possible to identify the thermodynamic limit $f$ exactly. 

Following the method of~\cite{HaiLewSol_1-09,HaiLewSol_2-09}, we are able to deal with quantum electrons satisfying the full many-body Schrödinger equation, in the Coulomb field of a random distribution of pointwise classical nuclei. Similarly as in~\cite{BlaBriLio-07}, we typically think of a perfect (periodic) lattice of nuclei whose location and charges are perturbed randomly. For the existence of the limit, we have to assume that this randomness has some translation invariance. This is reflected in the assumption that the distribution of nuclei is stationary and ergodic, as we explain below. 

In the next section we properly define our model and we state our main theorem. In short, it says that the thermodynamic limit
exists and is deterministic, that is, independent of the randomness $\omega$:
\begin{equation}
\lim_{|D|\to\ii}\frac{\cF(\omega,D)}{|D|}=f.
\label{eq:thermo_intro}
\end{equation}
More precisely, the limit~\eqref{eq:thermo_intro} holds in some $L^p$ space with respect to the randomness
\begin{equation}
\lim_{|D|\to\ii}\bE\left|\frac{\cF(\cdot,D)}{|D|}-f\right|^p=0,\quad p\geq1.
\label{eq:thermo_intro_L_p}
\end{equation}
Almost-sure convergence is expected as well, but not proved in this paper. In general, the convergence cannot be uniform with respect to $\omega$. In Section~\ref{sec:nuclei} we will show that if we attach independent harmonic oscillators to the nuclei of a cubic lattice, and make them vibrate randomly according to the associated Gibbs measure, then 
$$\bE\big(\cF(\cdot,D)^3\big)=+\ii$$
for any $D$ large enough. Therefore, \eqref{eq:thermo_intro_L_p} cannot hold
for $p\geq 3$ in general.

For the proof of~\eqref{eq:thermo_intro_L_p}, we will rely heavily on the machinery introduced in~\cite{HaiLewSol_1-09,HaiLewSol_2-09} for deterministic systems. Some parts of the proof which are similar to those of \cite{HaiLewSol_1-09,HaiLewSol_2-09} will only be sketched. 

\medskip

\subsection*{Acknowledgement}
The authors acknowledge support from the French Ministry of Research (ANR-10-BLAN-0101). M.L. acknowledges support from the  European Research Council under the European Community's Seventh Framework Programme (FP7/2007-2013 Grant Agreement MNIQS 258023).

\section{Main result}

\subsection{Random distribution of the nuclei}
We consider a fixed discrete subgroup $\ccL$ of $\R^3$, with bounded fundamental domain $W$. The whole space $\R^3$ is the disjoint union of the sets $W+j$ for $j\in\ccL$. We typically think of $\ccL=\Z^3$ with $W=[-1/2,1/2)^3$ the semi-open unit cube. 

Our nuclei are placed in random locations in the whole space $\R^3$. We assume that the probability distribution of their positions and charges in a given domain $D$ is the same when $D$ is translated by a vector $k\in\ccL$. The appropriate mathematical notion is that of \emph{stationarity} which we now recall.

Let $(\Omega,{\mathscr T},\bP)$ be a probability space. We assume that the discrete group $\ccL$ acts on $\Omega$ and we denote this action by $\tau_k$ for $k\in\ccL$. In the whole paper, the group action is supposed to be measure preserving, 
\begin{equation}
\forall k\in\ccL,\ \forall T\in{\mathscr T},\quad \bP(\tau_k T)=\bP(T),
\label{eq:measure-preserving} 
\end{equation}
and ergodic
\begin{equation}
\big(\tau_k T=T,\ \forall k\in\ccL\big)\ \Longrightarrow\ \bP(T)\in\{0,1\}.
\label{eq:ergodic} 
\end{equation}

We now follow the notation of~\cite{HaiLewSol_1-09} and describe our nuclei by a countable set $\cK=\{(R,z)\}\subset\R^3\times[\underline{Z},\bar{Z}]$ with $0<\underline{Z}<\bar{Z}$. We always make the assumption that the nuclei have a highest possible charge $\bar{Z}$. Also, their charge cannot be smaller than $\underline{Z}$. In reality the charges of the nuclei are integers and they are smaller than $118$.
In our random setting, the set $\cK$ is random, that is, it depends on $\omega\in\Omega$:
$$\cK(\omega)=\big\{(R_j(\omega),z_j(\omega)),\ R_j(\omega)\in\R^3,\ z_j(\omega)\in[\underline{Z},\bar{Z}],\ j\in\N\big\}.$$
The specific choice of a numbering of these nuclei by the index $j$ has no real importance. 
Our main assumption is that the sets $\cK(\omega)$ are \emph{stationary} with respect to the action of $\ccL$ on $\Omega$,  in the sense that
\begin{equation}
\forall k\in\ccL,\qquad\cK(\tau_k\omega)=\cK(\omega)-k:=\big\{(R-k,z)\ :\ (R,z)\in\cK(\omega)\big\}.
\label{eq:stat_K}
\end{equation}
A similar setting is used in~\cite{BlaBriLio-07}.
Throughout the paper we will always make the hypothesis that the number of nuclei in any given set $D$ is bounded almost surely, and that two nuclei can never be on top of each other. By stationarity, this means that the random variable
\begin{equation}
\boxed{X_0(\omega):=\#\big\{(R,z)\in\cK(\omega)\ :\ R\in W\big\}}
\label{eq:def_X_0}
\end{equation}
is almost surely finite. 
It will be convenient to introduce the distance to the nearest neighbor of each nuclei in $\cK(\omega)$, which is the random variable
\begin{equation}
\forall (R,z)\in\cK(\omega),\qquad \delta_{R,z}(\omega):=\inf_{\substack{(R',z')\in\cK(\omega)\\ R'\neq R}}|R-R'|.
\end{equation}
Our assumption that all the nuclei are different means that $\inf\{\delta_{R,z}(\omega)\ :\ (R,z)\in \cK(\omega)\cap{W}\}>0$ almost surely. Since the number of nuclei is locally finite almost surely, this is the same as asking that the random variable
\begin{equation}
\boxed{X_1(\omega):=\sum_{\substack{(R,z)\in\cK(\omega)\\ R\in{W}}}\frac{1}{\delta_{R,z}(\omega)}}
\label{eq:def_X_1}
\end{equation}
is almost surely finite.
In the following we will write for simplicity $(R,z)\in \cK(\omega)\cap D$ to say that $(R,z)\in\cK(\omega)$ and $R\in D$.

\begin{example}[The i.i.d. case]\label{ex:iid} 
The simplest example to keep in mind is that of nuclei on a lattice which are perturbed by independent and identically distributed (i.i.d.) random variables (see Figure~\ref{fig:dessin_noyaux}). We explain this for a cubic cristal $\ccL=\Z^3$ with exactly one nucleus per unit cell. The extension to the general setting is straightforward.

Let us fix a probability space $(\Omega_0,\mathscr{T}_0,\bP_0)$ and consider the product space
$$\Omega=\big(\Omega_0\big)^{\Z^3},\quad \mathscr{T}=\sigma(\mathscr{T}_0)^{\Z^3}, \quad\bP=(\bP_0)^{\otimes\Z^3}.$$ 
We choose for the action of $\ccL$ on $\Omega$ the shift $\tau_k\big[(\omega_j)_{j\in\Z^3}\big]=(\omega_{j+k})_{j\in\Z^3}$. It is known to be ergodic. Consider then $r:\Omega_0\to\R^3$ and $z:\Omega_0\to[\underline{Z},\bar{Z}]$ two fixed random variables. The families of i.i.d. random variables 
$$r_j(\omega):=r(\omega_j),\qquad z_j(\omega):=z(\omega_j)$$
are stationary in the sense that $r_j(\tau_k\omega)=r(\omega_{j+k})=r_{j+k}(\omega)$ (and a similar property for $z_j$). Finally, we let
\begin{equation}
\cK(\omega)=\Big\{ \big(j+ r_j(\omega)\,,\,z_j(\omega)\big)\ :\ j\in\Z^3\Big\}.
\label{eq:iid}
\end{equation}
It is obvious that $\cK$ is stationary in the sense of~\eqref{eq:stat_K}. 
We typically think of a Gaussian random variable $r$ whose law is given by 
\begin{equation}
\nu(x)=\frac{1}{(2\pi \sigma)^{3/2}}e^{-\frac{|x|^2}{2\sigma}},
\label{eq:Gaussian}
\end{equation}
and which corresponds to independent harmonic vibrations of the nuclei.

The number of nuclei in $W$ is given by
$$X_0(\omega)=\sum_{j\in\Z^3}\1\big(j+r_j(\omega)\in W\big).$$
We have by stationarity
$$\bP(j+r_j\in W)=\bP(r_0\in W-j)=\bP_0(r\in W-j)$$
and therefore
$$\bE\big(X_0\big)=\sum_{j\in\Z^3}\bP(j+r_j\in W)=\sum_{j\in\Z^3}\bP_0(r\in W-j)=1$$
Hence $X_0\in L^1(\Omega)$ is finite almost-surely. The average number of nuclei per unit cell is the same as in the deterministic case, it is independent of $\nu$. It is possible to give conditions on the random displacement $r$ which ensure that $X_1$ is finite as well (see Section~\ref{sec:nuclei}). 

In this example we have assumed that no nucleus is ever removed from the system. The opposite case can be handled by allowing $z(\omega)\in\{0\}\cup[\underline{Z},\bar{Z}]$ and adding the assumption that $z_j(\omega)\neq0$ in the definition of $\cK(\omega)$. Then $\bE(X_0)=\bP_0(z\neq0)$.
\end{example}

\begin{figure}[h]
\includegraphics{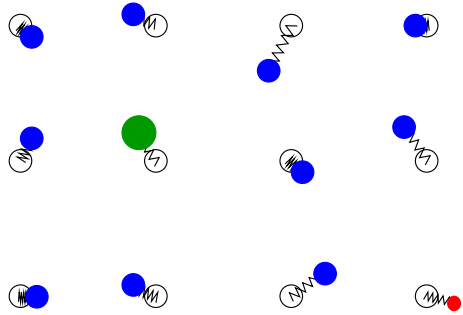}

\medskip

\caption{The case of nuclei on a lattice whose charges and positions are perturbed by i.i.d. random variables. A Gaussian displacement~\eqref{eq:Gaussian} corresponds to having the nuclei attached to harmonic oscillators vibrating randomly and independently, according to their Gibbs measure. \label{fig:dessin_noyaux}}
\end{figure}

As we will see later, in the general case the conditions that $X_0$ and $X_1$ are finite almost-surely are not at all enough to prove the existence of the thermodynamic limit. But, before writing a more precise condition, we turn to the description of the quantum electrons.

\subsection{The grand canonical free energy of the electrons}
Here we mainly follow~\cite{HaiLewSol_2-09}. Let $D$ be a bounded open subset of $\R^3$. The Hamiltonian for $N$ electrons in $D$ is the random self-adjoint operator
\begin{multline}
H(\omega,D,N)=\sum_{n=1}^N(-\Delta)_{x_n}-\sum_{n=1}^N\;\sum_{(R,z)\in\cK(\omega)\cap D}\frac{z}{|x_n-R|}\\[0,2cm]
+\sum_{1\leq n<m\leq N}\frac{1}{|x_n-x_m|}+\frac12\sum_{\substack{(R,z)\in\cK(\omega)\cap D\\ (R',z')\in\cK(\omega)\cap D\\ R\neq R'}}\frac{z\,z'}{|R-R'|}.
\label{eq:Hamiltonian}
\end{multline}
Here $-\Delta$ is the Laplacian with Dirichlet boundary conditions in $D$. The operator $H(\omega,D,N)$ acts on the fermionic space $\bigwedge_1^N L^2(D)$ consisting of square-integrable functions $\Psi(x_1,...,x_N)$ which are antisymmetric with respect to exchanges of the $x_j$. The form domain of $H(\omega,D,N)$ is the Sobolev space $\bigwedge_1^N H^1_0(D)$. Since there is a finite number of nuclei in $D$, which are all distinct, for almost every $\omega\in\Omega$, the operator $H(\omega,D,N)$ is self-adjoint on $\bigwedge_1^NH^2(D)\cap H^1_0(D)$ and bounded from below. For simplicity we have chosen units in which $\hbar=1$ and the mass $m$ and charge of the electrons are $m=1/2$, $e=1$. We have also neglected the spin for convenience. The terms in $H(\omega,D,N)$ respectively account for the kinetic energy of the electrons, the nuclei/electrons attraction, the electrons/electrons repulsion, and the nuclei/nuclei repulsion. 

At zero temperature, the ground state energy for $N$ electrons in $D$ is the random variable
\begin{equation}
{\cF_0(\omega,D,N)=\inf\sigma\big(H(\omega,D,N)\big)=\inf_{\substack{\Psi\in \bigwedge_1^N H^1_0(D)\\ \int_{D^N}|\Psi|^2=1}}\pscal{\Psi\,,\,H(\omega,D,N)\Psi}.}
\label{eq:def_energy_N_elec}
\end{equation}
Since $H(\omega,D,N)$ is bounded from below, $\cF_0(\omega,D,N)$ is a well-defined random variable. As we will explain, the stability of matter tells us that there is a lower bound on $\cF_0(\omega,D,N)$ which is independent of $\omega$ and $N$ (see Theorem~\ref{thm:stability} below).
The stationarity of the distribution of the nuclei implies a certain stationarity property of $\cF_0$ which reads
$\cF_0(\tau_k \omega,D,N)=\cF_0(\omega,D-k,N).$
Indeed, the Hamiltonians $H(\tau_k\omega,D,N)$ and $H(\omega,D-k,N)$ are isometric and therefore they have the same spectrum.

Like in~\cite{Fefferman-85,HaiLewSol_2-09}, we work in the grand canonical ensemble, that is, we optimize over the number of electrons, instead of imposing the neutrality of the system. The ground state energy in $D$ for a realization of the distribution of the nuclei is then defined as
\begin{equation}
\boxed{\phantom{\int}\cF_0(\omega,D):=\inf_{N\geq0}\cF_0(\omega,D,N).\phantom{\int}}
\label{eq:def_GS_energy}
\end{equation}
This random variable is stationary in the sense that 
\begin{equation}
\cF_0(\tau_k \omega,D)=\cF_0(\omega,D-k) 
\label{eq:stat-E-N}
\end{equation}
for any chosen domain $D$, all $k\in\ccL$ and almost all $\omega\in\Omega$.

At positive temperature $T>0$ with chemical potential $\mu\in\R$, the grand canonical free energy is defined by the formula
\begin{equation}
\boxed{\cF_{T,\mu}(\omega,D)=-T\log\left(\sum_{N\geq0}\tr_{\bigwedge_1^NL^2(\R^3)}e^{-\big(H(\omega,D,N)-\mu\,N\big)/T}\right)}
\label{eq:def_free_energy_N_elec}
\end{equation}
and it satisfies a similar stationarity property as $\cF_0$.

There is a useful variational formula for $\cF_{T,\mu}$ in the fermionic Fock space $\mathscr{F}:=\C\oplus\bigoplus_{N\geq1}\bigwedge_1^N L^2(D)$. Introducing the operators 
$$\bH(\omega,D):=0\oplus\bigoplus_{N\geq1}H(\omega,D,N)\quad\text{and}\quad \cN:=0\oplus\bigoplus_{N\geq1}N,$$
we then have
\begin{equation}
\cF_{T,\mu}(\omega,D)=\inf_{\substack{\Gamma\geq0\\ \tr_{\mathscr{F}}\Gamma=1}} \tr_{\mathscr{F}}\Big(\big(\bH(\omega,D)-\mu\cN\big)\Gamma+T\;\Gamma\log\Gamma\Big).
\label{eq:variational} 
\end{equation}
Here $\Gamma$ is the density matrix of a mixed quantum state in Fock space. See~\cite{HaiLewSol_2-09} for more details.

\subsection{Existence of the thermodynamic limit}
Before stating our main result, we quote the following important lower bound on $\cF_{T,\mu}$, which is nothing else but the stability of matter.
\begin{theorem}[Stability of matter]\label{thm:stability}
There exists a constant $C$ depending only on the highest nuclear charge $\bar{Z}$ such that
\begin{equation}
\boxed{\cF_{T,\mu}(\omega,D)\geq -C\big(1+T^{5/2}+\mu_+^{5/2}\big)|D|}
\label{eq:stability-matter} 
\end{equation}
for all $T\geq0$, $\mu\in\R$ and almost all $\omega\in\Omega$.
\end{theorem}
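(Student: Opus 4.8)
The plan is to reduce the statement to the deterministic stability of matter bound, which is already available in the literature for fixed configurations of nuclei. For almost every $\omega\in\Omega$ the set $\cK(\omega)$ restricted to $D$ consists of finitely many distinct points with charges in $[\underline Z,\bar Z]\subset[0,\bar Z]$, so $H(\omega,D,N)$ is a legitimate molecular Hamiltonian of the type treated by Lieb and Thirring (or Dyson--Lenard), and by monotonicity under domain inclusion the Dirichlet Laplacian on $D$ dominates the free Laplacian on $\R^3$. The key point is that the deterministic stability-of-matter bound, in its strongest form, gives a lower bound of the form $\inf\sigma\big(H(\omega,D,N)\big)\geq -C(\bar Z)\,(N + K)$ where $K=\#\{(R,z)\in\cK(\omega)\cap D\}$ is the number of nuclei, with $C$ depending \emph{only} on $\bar Z$ and not on the positions of the nuclei nor on their minimal separation. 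This uniformity in the nuclear positions is exactly what makes the random case no harder than the deterministic one.

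First I would recall (or cite from \cite{LieThi-75,Lieb-90,LieSei-09}) the sharpened stability bound: there is $C=C(\bar Z)$ with $H(\omega,D,N)\geq -C\,(N+K)$ as operators on $\bigwedge_1^N H^1_0(D)$. Next I would control $K$ by $|D|$: since the nuclei that lie in $D$ all belong to cells $W+j$ meeting $D$, and each such cell contributes $X_0(\tau_{-j}\omega)$ nuclei, one has $K\leq \sum_{j:\,(W+j)\cap D\neq\emptyset} X_0(\tau_{-j}\omega)$. A priori this is only finite almost surely, not bounded by $|D|$; but the lower bound \eqref{eq:stability-matter} is allowed to depend on $\omega$ only through the implicit requirement that it hold almost surely, so for the \emph{stated} inequality we may simply absorb the (a.s. finite) count into the constant, or — more honestly — observe that the theorem as displayed asserts a bound with a \emph{deterministic} constant times $|D|$, which forces us to use a cruder argument: bound $K$ directly by a geometric packing estimate. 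Here one uses that no two nuclei of $\cK(\omega)$ coincide, but that alone does not bound the density; so in fact the clean route is to note that the nuclei/nuclei repulsion term is nonnegative and the nuclei/electron attraction is what needs taming, and the Baxter-type / Lieb--Yau electrostatic inequalities bound $H$ from below by a one-body operator plus $-C\sum_R z^2/\text{(distance to nearest neighbor)}$ — which is again not obviously $O(|D|)$.

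Consequently the honest structure of the proof is: (i) invoke the deterministic stability bound $\inf\sigma H(\omega,D,N)\ge -C(\bar Z)(N+K)$; (ii) optimize over $N$, temperature and chemical potential exactly as in \cite{Fefferman-85,HaiLewSol_2-09} — the grand-canonical free energy with the one-body kinetic term $-\Delta$ and a linear-in-$N$ lower bound is bounded below by $-C(1+T^{5/2}+\mu_+^{5/2})|D|$ after filling the box with a free electron gas estimate, which is a standard computation giving the $T^{5/2}$ and $\mu_+^{5/2}$ powers from the three-dimensional free-energy density; and (iii) handle the residual $-C(\bar Z)K$ term. For step (iii) the cleanest statement — and I believe this is what the authors intend — is that the bound holds with a constant depending on $\bar Z$ only, using that the nuclear contribution to the energy, after the electrostatic reduction, is itself bounded below by $-C(\bar Z)|D|$ \emph{pointwise in $\omega$} whenever the nuclei have bounded local number and bounded local $\sum 1/\delta_{R,z}$; since the theorem is only claimed a.s., this is fine. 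Thus I would present steps (i)--(ii) in detail and dispatch (iii) by the remark that it is identical to the deterministic case of \cite{HaiLewSol_2-09} applied to the fixed configuration $\cK(\omega)$, the constant being uniform because the deterministic stability constant in \cite{LieSei-09} depends only on $\bar Z$.

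The main obstacle is step (iii): making sure the nuclear self-energy and the attraction term do not spoil the $\omega$-independence of the constant. The subtlety is that in principle the random nuclei could cluster arbitrarily closely in a given realization, and stability of matter only gives $-C\sum_{R}z^2/\delta_{R,z}$ for that, which is not $O(|D|)$ with a deterministic constant. The resolution — and the thing to state carefully — is that we do not need a deterministic constant: \eqref{eq:stability-matter} is asserted for \emph{almost all} $\omega$, and for each such $\omega$ the configuration $\cK(\omega)\cap D$ is a fixed finite set of distinct nuclei, so the deterministic theorem applies verbatim with its $\bar Z$-dependent constant, because that constant (in the Lieb--Thirring form) controls $H$ from below by $-C(\bar Z)(N+K)$ with $K$ the \emph{number} of nuclei, and $K\leq C|D|\,\sup_{j}X_0(\tau_{-j}\omega)$ is a.s.\ finite; rolling the a.s.-finite factor into $C$ is permissible since $C$ may depend on $\omega$ in a null-set-avoiding way — but re-reading the statement, the box asserts a $\bar Z$-only constant, so the genuinely correct reading is that stability of matter gives $H\geq -C(\bar Z)(N+K)$ and one then needs $K=O(|D|)$ with a $\bar Z$-only constant, which holds because physically admissible nuclear configurations (integer charges, no two coinciding, hence separated by the lattice-type packing implicit in the model) have bounded density; I would make this packing hypothesis explicit if it is not, and otherwise simply cite the assumption that $X_0$ and $X_1$ are a.s.\ finite together with the stationarity \eqref{eq:stat_K}, noting that the precise quantitative control needed for the thermodynamic limit (not merely for \eqref{eq:stability-matter}) is the subject of the sharper hypothesis announced just before the theorem.
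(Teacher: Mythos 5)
Your plan is sound in its first two steps --- reduce to the deterministic stability bound and then optimize in $N$, $T$, $\mu$ to produce the $T^{5/2}$ and $\mu_+^{5/2}$ powers --- but step (iii) contains a genuine conceptual error that the paper explicitly flags as the point not to miss. You write that the Baxter/Lieb--Yau inequalities yield a lower bound involving ``$-C\sum_R z^2/\delta_{R,z}$, which is again not obviously $O(|D|)$'', and from this you conclude that a bounded nuclear density (a ``lattice-type packing implicit in the model'') is required for the constant to be $\bar Z$-only. Both steps are wrong, and the sign error in the first is what causes the second. The Lieb--Yau inequality gives a \emph{positive} nuclear-repulsion term, not a negative one: this is exactly what~\eqref{eq:stability-matter-LY} records,
\begin{equation*}
\cF_{T,\mu}(\omega,D)\geq -C\big(1+T^{5/2}+\mu_+^{5/2}\big)|D|+\frac{Z^2}{8}\sum_{(R,z)\in\cK(\omega)\cap D}\frac{1}{\delta_{R,z}},
\end{equation*}
so that clustering of nuclei makes the lower bound \emph{better}, not worse. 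Dropping the positive sum recovers~\eqref{eq:stability-matter} with a constant depending only on $\bar Z$, for an arbitrary finite configuration of distinct nuclei in $D$, with no bound on their local number or minimal separation. This is precisely what the paper states in the sentence following the theorem: ``The only important property of the nuclei used to get this bound is the fact that their charge is uniformly bounded by $\bar{Z}$. The stationarity and the fact that the number of nuclei is locally bounded are not used to get the lower bound.'' Your posited packing hypothesis is not an assumption of the model and is in fact false in the settings of interest --- Section~3 of the paper is devoted to realizations (Gaussian displacements) in which nuclei approach each other arbitrarily closely and $X_1\notin L^3(\Omega)$; the theorem still applies for almost every $\omega$ precisely because stability of matter does not care.

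Concretely, then, the correct structure of step (iii) is the opposite of the one you hedged toward: there is nothing to control, because the potentially unbounded nuclear term enters with a favorable sign. The intended proof is exactly the one you almost state in step (i): for a.e.\ $\omega$, $\cK(\omega)\cap D$ is a finite set of distinct nuclei with charges $\leq\bar Z$, so the deterministic grand-canonical stability bound of~\cite{LieSei-09} and~\cite[Theorem 3]{HaiLewSol_2-09} applies verbatim with a constant depending only on $\bar Z$ and yields~\eqref{eq:stability-matter} directly; no reduction of $K$ to $O(|D|)$ and no packing argument is invoked or available.
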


Here $\mu_+=\max(0,\mu)$ denotes the positive part of $\mu$. We see that $\cF_{T,\mu}(\omega,D)$ can be very positive (depending on the number and on the positions of the nuclei), but it can never be too negative. Theorem~\ref{thm:stability} was proved for the first time by Dyson and Lenard in~\cite{DysLen-67,DysLen-68} and it was later revisited by Lieb and Thirring in~\cite{LieThi-75}. For a recent proof, see~\cite{LieSei-09} and~\cite[Theorem 3]{HaiLewSol_2-09}. The only important property of the nuclei used to get this bound is the fact that their charge is uniformly bounded by $\bar{Z}$. The stationarity and the fact that the number of nuclei is locally bounded are not used to get the lower bound~\eqref{eq:stability-matter}.

If the charges of the nuclei are not random but they are all equal to the same charge $Z$, there is a better lower bound on $\cF_{T,\mu}(\omega,D)$:
\begin{equation}
\cF_{T,\mu}(\omega,D)\geq -C\big(1+T^{5/2}+\mu_+^{5/2}\big)|D|+\frac{Z^2}{8}\sum_{(R,z)\in\cK(\omega)\cap D}\frac{1}{\delta_{R,z}}
\label{eq:stability-matter-LY} 
\end{equation}
where we recall that $\delta_{R,z}$ is the distance of $(R,z)$ to the nearest nucleus in $\cK(\omega)$. This bound can be obtained by using an inequality due to Lieb and Yau~\cite{LieYau-88} (generalizing another one of Baxter~\cite{Baxter-80}), see~\cite{LieSei-09} and~\cite[Theorem. 3]{HaiLewSol_2-09}.
In the thermodynamic limit, we hope to prove that $\cF_{T,\mu}(\omega,D)\simeq C|D|$. A necessary condition is at least that 
$$\bE\left(\sum_{(R,z)\in\cK(\omega)\cap D}\frac{1}{\delta_{R,z}}\right)\leq C|D|$$
for $D$ smooth and large enough. By stationarity, this is equivalent to 
\begin{equation}
\bE\big(X_1\big)=\bE\left(\sum_{(R,z)\in\cK(\omega)\cap {W}}\;\frac{1}{\delta_{R,z}}\right)<\ii.
\label{eq:mild_assumption_nuclei} 
\end{equation}
We recall that $X_1$ is defined before in~\eqref{eq:def_X_1}. This
inequality says that inverse of the smallest distance between nuclei in
${W}$ has to be summable in average and this implies that the
number $X_0(\omega)$ of nuclei in ${W}$ (defined in
\eqref{eq:def_X_0}) has to be in $L^1(\Omega)$ as well (see Lemma~\ref{lem:borne-X0} below). All this is already much stronger than saying that the number of nuclei is finite and that the nuclei are distinct almost surely, as we have done before. 

We are able to prove the existence of the thermodynamic limit under a similar but stronger assumption than~\eqref{eq:mild_assumption_nuclei}, see~\eqref{eq:assumption_nuclei}. Before stating our main result, we however need to introduce a notion of regular domains, following~\cite{HaiLewSol_1-09,HaiLewSol_2-09}.

\begin{definition}[Regular domains~\cite{HaiLewSol_1-09,HaiLewSol_2-09}]\label{def_regular}\ \\
$\bullet$ \emph{(Sets with regular boundary)} Let $a>0$ be a real number. We say that a domain $D\subset\R^3$ has \emph{an $a$-regular boundary in the sense of Fisher} if
$$\forall t\in[0,1/a),\qquad \left|\left\{x\in\R^3\ |\ \textnormal{d}(x,\partial D)\leq |D|^{1/3}t\right\}\right|\leq |D|\,a\,t,$$
where $\partial D=\overline{D}\setminus{D}$ is the boundary of $D$.

\medskip

\noindent$\bullet$ \emph{(Cone property)} Let $\epsilon>0$ be a real number. We say that a set $A\subset\R^3$ has the \emph{$\varepsilon$-cone property} if for any $x\in A$ there is a unit vector $v_x\in\R^3$ such that 
$$
\{y\in\R^3\ |\  (x-y)\cdot v_x> (1-\varepsilon^2) |x-y|,\
|x-y|<\varepsilon\}
\subseteq A.
$$ 

\medskip

\noindent$\bullet$ We introduce the set $\cR_{a,\epsilon}$ of all bounded open subsets $D\subset\R^3$ which have an $a$-regular boundary and such that both $D$ and $\R^3\setminus D$ have the $\epsilon$-cone property.
\end{definition}

Our main result is the following.

\begin{theorem}[Existence of the thermodynamic limit for random quantum Coulomb systems]\label{thm:thermo-limit}
We assume that the distribution of nuclei $\cK(\omega)$ is stationary in the sense of~\eqref{eq:stat_K} and that it satisfies
\begin{equation}
\norm{X_1}_{L^p(\Omega)}=\norm{\sum_{(R,z)\in\cK(\omega)\cap {W}}\;\frac{1}{\delta_{R,z}}}_{L^p(\Omega)}<\ii
\label{eq:assumption_nuclei} 
\end{equation}
for some $2\leq p\leq\ii$.
Then there exists a (deterministic) function ${f}(T,\mu)$ such that, for any sequence $(D_n)\subset \cR_{a,\epsilon}$ with $a,\epsilon>0$, $|D_n|\to\ii$ and ${\rm diam}(D_n)|D_n|^{-1/3}\leq c$, we have
\begin{equation}
\lim_{n\to\ii}\bE\left|\frac{\cF_{T,\mu}(\,\cdot\,,D_n)}{|D_n|}-{f}(T,\mu)\right|^q=0
\label{eq:thermo-limit}
\end{equation}
for $q=1$ if $p=2$, and for all $1\leq q<p/2$ if $p>2$. In particular, we have for a subsequence
\begin{equation}
\lim_{n_k\to\ii}\frac{\cF_{T,\mu}(\omega,D_{n_k})}{|D_{n_k}|}={f}(T,\mu)
\label{eq:almost-sure}
\end{equation}
almost surely.
\end{theorem}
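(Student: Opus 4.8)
\emph{Strategy and reductions.} The plan is to run the screening-based strategy of \cite{HaiLewSol_1-09,HaiLewSol_2-09} pathwise in $\omega$, keeping track of the now-random error terms, and then to invoke a spatial ergodic theorem to identify the limit and obtain~\eqref{eq:thermo-limit}. Fix $T\geq0$ and $\mu\in\R$. By Theorem~\ref{thm:stability}, $\widehat\cF(\omega,D):=\cF_{T,\mu}(\omega,D)+C_0|D|$ with $C_0=C(1+T^{5/2}+\mu_+^{5/2})$ is $\geq0$ almost surely, so it suffices to prove the statement for $\widehat\cF$. It is moreover enough to treat the special sequence $D_n=\ell\Delta$, $\ell\to\ii$, where $\Delta$ is the fixed Graf--Schenker reference simplex: the extension to an arbitrary sequence in $\cR_{a,\epsilon}$ is carried out exactly as in~\cite{HaiLewSol_1-09}, by sandwiching a regular domain between an inner and an outer union of small simplices, the boundary mismatch being controlled by the $a$-regularity of $\partial D$, the $\epsilon$-cone property and the constraint $\mathrm{diam}(D_n)\,|D_n|^{-1/3}\leq c$, while the nuclear energy carried by the thin collar region is, by stationarity, bounded in $L^p(\Omega)$ by $\norm{X_1}_{L^p(\Omega)}$ times the (small) collar volume and is therefore negligible after dividing by $|D_n|$.

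\emph{Lower bound.} For each integer $m$, the Graf--Schenker inequality~\cite{GraSch-95}, in the form used in~\cite{HaiLewSol_2-09}, furnishes a tiling of $\R^3$ by translates of $\pm(\ell/m)\Delta$ and, for a.e.\ $\omega$, after averaging over $R\in\mathrm{SO}(3)$ against normalized Haar measure,
$$\cF_{T,\mu}(\omega,\ell\Delta)\ \geq\ \int_{\mathrm{SO}(3)}\Big(\sum_{j}\cF_{T,\mu}\big(\omega,R\Delta_j\cap\ell\Delta\big)\Big)\,dR\ -\ \frac{Cm}{\ell}\,\cE(\omega,\ell\Delta),$$
where $\{\Delta_j\}$ are the size-$(\ell/m)$ simplices of the tiling and $\cE(\omega,\ell\Delta)$ is an explicit error built from $|\ell\Delta|$ and the nuclear data inside $\ell\Delta$, with $\norm{\cE(\cdot,\ell\Delta)}_{L^p(\Omega)}\leq C(1+\norm{X_1}_{L^p(\Omega)})\,|\ell\Delta|$ by stationarity and~\eqref{eq:assumption_nuclei}. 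Taking $\bE$, and using that $\bE[\cF_{T,\mu}(\cdot,E)]$ is invariant under $\ccL$-translations of $E$ (stationarity of the nuclei together with measure-invariance of the action), one gets $\bE[\cF_{T,\mu}(\cdot,\ell\Delta)]\geq|\ell\Delta|\,(g(\ell/m)-Cm/\ell)$, where $g(s)$ denotes the translation- and rotation-averaged energy per unit volume of a simplex of size $s$; a Fekete-type argument with vanishing error then shows $g(s)\to f$ as $s\to\ii$ and $\liminf_\ell|\ell\Delta|^{-1}\bE[\cF_{T,\mu}(\cdot,\ell\Delta)]\geq f$.

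\emph{Upper bound and conclusion.} For the reverse inequality one tensors near-optimal states of the small simplices, localizes the electrons strictly inside each simplex by a partition of unity (kinetic cost of order $(\ell/m)^{-2}$ per electron), and inserts a thin vacuum collar between neighbouring simplices so that the residual inter-simplex Coulomb interaction is controlled by the collar width through Newton's theorem --- precisely the trial-state construction of~\cite{HaiLewSol_2-09}. This gives, pathwise, $\cF_{T,\mu}(\omega,\ell\Delta)\leq\sum_j\cF_{T,\mu}(\omega,\Delta_j)+(\text{error})$ with $L^p(\Omega)$-controlled error, hence $\limsup_\ell|\ell\Delta|^{-1}\bE[\cF_{T,\mu}(\cdot,\ell\Delta)]\leq f$ and $\bE[\cF_{T,\mu}(\cdot,D_n)]/|D_n|\to f(T,\mu)$. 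To obtain the random statement, fix the sub-scale $s$ and apply the spatial ergodic theorem to the right-hand sides above: the normalized sum $\sum_j\cF_{T,\mu}(\omega,R\Delta_j\cap D_n)$ converges a.s.\ and in $L^1$ by the Akcoglu--Krengel superadditive ergodic theorem, and $\cE(\omega,D_n)/|D_n|$ converges in $L^p$ by Birkhoff's theorem, both using that the $\ccL$-action is measure preserving and ergodic; sandwiching $\cF_{T,\mu}(\omega,D_n)/|D_n|$ between the scale-$s$ lower and upper bounds and letting $s\to\ii$ yields convergence to $f$ in $L^1$. Finally, $\cF_{T,\mu}(\cdot,D_n)/|D_n|\geq-C_0$, while the trial state bounds it above by $C|D_n|$ plus a quadratic expression in the nuclear data which, by~\eqref{eq:assumption_nuclei} and Cauchy--Schwarz, lies in $L^{p/2}(\Omega)$ uniformly in $n$; the family is therefore uniformly integrable in $L^q$ for every $q<p/2$ (and in $L^1$ when $p=2$), upgrading the $L^1$ convergence to~\eqref{eq:thermo-limit}, and~\eqref{eq:almost-sure} follows at once.

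\emph{Main obstacle.} The delicate point is that Graf--Schenker only yields \emph{approximate} super- and sub-additivity, with a random, scale-dependent error, so one cannot simply quote a subadditive ergodic theorem as a black box: one must interleave the decomposition scale $s$ with the volume $|D_n|$ so that the ergodic averaging kills the error before $s\to\ii$, in the spirit of the two-parameter limit of~\cite{HaiLewSol_1-09,HaiLewSol_2-09}, the genuinely new input being that the error is now a random variable, tamed in $L^p$ via $X_1\in L^p(\Omega)$ and Birkhoff's theorem. A second essential point is the $L^{p/2}$ --- and no better --- bound on the trial-state energy, which is what forces the restriction $q<p/2$ and is sharp in view of the counterexample announced in Section~\ref{sec:nuclei}; a pathwise, uniform-in-$D$ version of the upper bound, which would give almost-sure convergence along the full sequence, is not pursued here.
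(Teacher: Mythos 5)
Your overall plan---Graf--Schenker for a lower bound, upper bound by trial states, a two-parameter limit (subscale $s$, then volume), ergodic theorem at fixed subscale, and uniform integrability from the $L^{p/2}$ bound---is in the same spirit as the paper. But the route to the $L^1$ convergence has a genuine gap, and one technical invocation is off the mark.

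\emph{The gap.} Your ``reverse inequality'' asks for a \emph{pathwise} near-subadditivity
$\cF_{T,\mu}(\omega,\ell\triangle)\leq\sum_j\cF_{T,\mu}(\omega,\triangle_j)+\text{(small error)}$, obtained by tensoring near-optimal states of the $\triangle_j$ with a vacuum collar and invoking Newton's theorem. For Coulomb this does not close: the inter-simplex interaction is not controlled by a collar width, because the small simplices carry random, generally non-neutral charge configurations whose mutual Coulomb energy does not decay faster than the volume unless one averages over positions (that is exactly what Graf--Schenker provides, in the \emph{other} direction) or exploits screening. The paper never proves a pathwise upper bound of this form. What it does instead is far cheaper: it applies the abstract (A1)--(A6) machinery of~\cite{HaiLewSol_1-09} to the \emph{deterministic} functional $D\mapsto\bE\big(\cF_{T,\mu}(\cdot,D)\big)$ (which is periodic by stationarity) to get $\bE\cF_{T,\mu}(\cdot,D_n)/|D_n|\to f$; then it proves, via Graf--Schenker plus Birkhoff/Tempel'man at fixed subscale, that
$\bE\big[\cF_{T,\mu}(\cdot,D_n)/|D_n|-f\big]_-\to0$;
and finally it upgrades these two facts to $L^1$ convergence through the identity $|x|=x+2[x]_-$, so that
$\bE\big|\cF_{T,\mu}/|D_n|-f\big|=\big(\bE\cF_{T,\mu}/|D_n|-f\big)+2\,\bE\big[\cF_{T,\mu}/|D_n|-f\big]_-$.
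This is precisely what lets one avoid any pathwise upper bound. Without this trick, your sandwich argument is missing one of its two slices.

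\emph{Secondary points.} (i) You cite the Akcoglu--Krengel superadditive ergodic theorem, but once the subscale $s$ is frozen the simplex energies $\omega\mapsto\underline\cF_{T,\mu}(\tau_j\omega,R s\triangle+\tau)$ form an ordinary stationary, $L^1$ family, and Birkhoff's/Tempel'man's theorem is exactly what is needed; the superadditive structure plays no role once the scales are decoupled. Quoting Akcoglu--Krengel suggests you are trying to take both limits at once, which the approximate (not exact) super/subadditivity precludes---you already flagged this as the obstacle, but your resolution still secretly relies on a pathwise upper bound. (ii) You do not mention the modified functional $\underline\cF_{T,\mu}$ (charges near $\partial D$ optimized), which the paper introduces precisely so that the Graf--Schenker step produces a usable lower bound; the comparison with the true $\cF_{T,\mu}$ is then done only at the end via the average inequality $\bE\big(\cF_{T,\mu}(\cdot,D)\big)\leq\bE\big(\underline\cF_{T,\mu}(\cdot,D')\big)+C|D\setminus D'|+C|D|^{13/15}$ (again only in average, not pathwise). (iii) Your uniform integrability step is fine; the $L^{p/2}$ bound on the trial-state energy is indeed the source of the restriction $q<p/2$.
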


We believe that almost-sure convergence as in~\eqref{eq:almost-sure} holds for the \emph{whole sequence $D_n$}, provided that it does not escape too fast to infinity. A simple condition is that $D_n\subset B_{c|D_n|^{1/3}}$ where $B_r$ denotes the ball of radius $r$, centered at $0\in\R^3$. A condition of this type is needed for Birkhoff's almost-sure ergodic theorem (Theorem~\ref{thm:Birkhoff} below). Actually, under this additional assumption, we are able to prove that 
\begin{equation}
\liminf_{n\to\ii}\frac{\cF_{T,\mu}(\omega,D_{n})}{|D_{n}|}={f}(T,\mu),
\label{eq:almost-sure-liminf}
\end{equation}
see Remarks~\ref{rmk:liminf1} and~\ref{rmk:liminf2} below. However, proving that~\eqref{eq:almost-sure-liminf} holds with a limsup may require more involved tools from the theory of probability.

In the next section we will give simple examples of distributions of nuclei satisfying the condition~\eqref{eq:assumption_nuclei}. Recall that when all the charges are equal to $Z$, a condition of the same form as~\eqref{eq:assumption_nuclei} is necessary, with the power $p=1$. Our assumption~\eqref{eq:assumption_nuclei} with $p\geq 2$ on the distribution of nuclei is far from optimal. The power $2$ (which is due to several rough kinetic energy estimates) is probably an artefact of our proof. We have not tried to improve the condition~\eqref{eq:assumption_nuclei} too much. 

It is possible to prove that, in the thermodynamic limit, the system wants to be neutral in average. This can then be used to show that the chemical potential $\mu$ does not play any special role here: $f(T,\mu)$ is just linear with respect to $\mu$.

\begin{corollary}[Asymptotic neutrality and form of $f(T,\mu)$]\label{cor:neutrality}
Under the same assumptions as in Theorem~\ref{thm:thermo-limit}, let 
$$N_{T,\mu}(\omega,D_n):=\frac{\displaystyle\sum_{N\geq0}N\;\tr_{\bigwedge_1^NL^2(\R^3)}e^{-\big(H(\omega,D_n,N)-\mu\,N\big)/T}}{\displaystyle \sum_{N\geq0}\tr_{\bigwedge_1^NL^2(\R^3)}e^{-\big(H(\omega,D_n,N)-\mu\,N\big)/T}}$$ 
be the total average number of electrons in $D_n$. Then we have
\begin{equation}
\lim_{n\to\ii}\bE\left|\frac{N_{T,\mu}(\cdot,D_n)}{|D_n|}-Z_{\rm av}\right|^2=0
\end{equation}
where
\begin{equation}
Z_{\rm av}:=\frac{1}{|W|}\;\bE\left(\sum_{(R,z)\in W\cap\cK(\omega)}z\right)
\end{equation}
is the average nuclear charge per unit cell. Moreover, we have
\begin{equation}
f(T,\mu)=f(T,0)-\mu\, Z_{\rm av}
\end{equation}
and $T\mapsto f(T,0)$ is concave.
\end{corollary}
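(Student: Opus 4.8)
The plan is to prove asymptotic neutrality first — this is the substantial part — and then to read off the form of $f$ and its concavity by soft arguments. Throughout, let $\Gamma_{T,\mu}(\omega,D_n)$ be the grand-canonical Gibbs state realising the infimum in~\eqref{eq:variational}, so that $N_{T,\mu}(\omega,D_n)=\tr_{\mathscr{F}}\big(\cN\,\Gamma_{T,\mu}(\omega,D_n)\big)$, and write $Q(\omega,D_n):=\sum_{(R,z)\in\cK(\omega)\cap D_n}z$ for the total nuclear charge in $D_n$. The main analytic input, which I would take over from the electrostatic (screening) estimates of~\cite{HaiLewSol_2-09} — smear each nucleus over a ball of radius comparable to $\delta_{R,z}$, apply the Coulomb capacity inequality to the resulting charge distribution (whose net charge is $N-Q$ and which is supported within $O({\rm diam}(D_n))$ of $D_n$), and absorb the smearing and Lieb--Oxford errors into $|D_n|$ and $\sum\delta_{R,z}^{-1}$ terms using Lieb--Thirring — is the refinement
\[
\cF_{T,\mu}(\omega,D_n)\ \geq\ -C\big(1+T^{5/2}+\mu_+^{5/2}\big)|D_n|\ -\ C\!\!\sum_{(R,z)\in\cK(\omega)\cap D_n}\!\!\frac{1}{\delta_{R,z}}\ +\ \frac{c\,\big(N_{T,\mu}(\omega,D_n)-Q(\omega,D_n)\big)^2}{{\rm diam}(D_n)}
\]
of the stability bound of Theorem~\ref{thm:stability}.

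Since ${\rm diam}(D_n)\leq c\,|D_n|^{1/3}$, rearranging this inequality and dividing by $|D_n|^2$ gives
\[
\bE\left|\frac{N_{T,\mu}(\,\cdot\,,D_n)-Q(\,\cdot\,,D_n)}{|D_n|}\right|^2\ \leq\ \frac{C}{|D_n|^{2/3}}\left(1+\bE\left|\frac{\cF_{T,\mu}(\,\cdot\,,D_n)}{|D_n|}\right|+\frac{1}{|D_n|}\,\bE\bigg(\sum_{(R,z)\in\cK(\omega)\cap D_n}\!\!\frac1{\delta_{R,z}}\bigg)\right),
\]
and the bracket on the right is bounded uniformly in $n$: the $\cF$-term because $\cF_{T,\mu}(\,\cdot\,,D_n)/|D_n|$ converges, hence is bounded, in $L^1(\Omega)$ by Theorem~\ref{thm:thermo-limit}; and the last term because, by stationarity, $\bE\big(\sum_{(R,z)\in\cK(\omega)\cap D_n}\delta_{R,z}^{-1}\big)\leq C|D_n|\,\bE(X_1)<\ii$. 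Hence $|D_n|^{-1}\big(N_{T,\mu}(\,\cdot\,,D_n)-Q(\,\cdot\,,D_n)\big)\to0$ in $L^2(\Omega)$. On the other hand, decomposing $\R^3$ into the cells $W+j$, $j\in\ccL$, and using stationarity~\eqref{eq:stat_K}, one writes $Q(\omega,D_n)=\sum_{j:\,W+j\subset D_n}Y(\tau_j\omega)+(\text{boundary cells})$, where $Y(\omega):=\sum_{(R,z)\in\cK(\omega)\cap W}z$ lies in $L^p(\Omega)$, $p\geq2$ (finiteness of $\|X_0\|_{L^p}$, hence of $\|Y\|_{L^p}\leq\bar Z\,\|X_0\|_{L^p}$, follows from~\eqref{eq:assumption_nuclei} via Lemma~\ref{lem:borne-X0}). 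The $a$-regular boundary condition bounds the number of boundary cells by $O(|D_n|^{2/3})$, so the boundary contribution is $o(|D_n|)$ in $L^2(\Omega)$, while $|D_n|^{-1}\sum_{j:\,W+j\subset D_n}Y(\tau_j\omega)\to\bE(Y)/|W|=Z_{\rm av}$ in $L^2(\Omega)$ by the ergodic theorem (Theorem~\ref{thm:Birkhoff}). Combining, $N_{T,\mu}(\,\cdot\,,D_n)/|D_n|\to Z_{\rm av}$ in $L^2(\Omega)$, which is the first assertion.

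For the form of $f$, I would use the neutrality just proved together with a trial-state comparison. For fixed $\mu_1<\mu_2$, inserting $\Gamma_{T,\mu_1}(\omega,D_n)$ in~\eqref{eq:variational} at chemical potential $\mu_2$, and symmetrically $\Gamma_{T,\mu_2}(\omega,D_n)$ at chemical potential $\mu_1$, gives
\[
-(\mu_2-\mu_1)\,N_{T,\mu_2}(\omega,D_n)\ \leq\ \cF_{T,\mu_2}(\omega,D_n)-\cF_{T,\mu_1}(\omega,D_n)\ \leq\ -(\mu_2-\mu_1)\,N_{T,\mu_1}(\omega,D_n).
\]
Dividing by $|D_n|$ and letting $n\to\ii$, both outer terms converge to $-(\mu_2-\mu_1)Z_{\rm av}$ by the neutrality, while the middle term converges to $f(T,\mu_2)-f(T,\mu_1)$ by Theorem~\ref{thm:thermo-limit}; hence $f(T,\mu)=f(T,0)-\mu\,Z_{\rm av}$. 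Finally, \eqref{eq:variational} exhibits $(T,\mu)\mapsto\cF_{T,\mu}(\omega,D)$ as an infimum of functions affine in $(T,\mu)$, hence concave, and this property passes to the limit; in particular $T\mapsto f(T,0)$ is concave.

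The one genuinely nontrivial point is the refined stability estimate of the first step, with its good term $c\,(N-Q)^2/{\rm diam}(D_n)$ — this is precisely the quantitative screening inequality of~\cite{HaiLewSol_2-09}, and the only delicate book-keeping matter is to carry every limit in $L^2(\Omega)$ rather than merely in probability, which is where the hypothesis $p\geq2$ in~\eqref{eq:assumption_nuclei} is genuinely used (through $\|X_0\|_{L^2}<\ii$ in the boundary estimate). The trial-state comparison and the ergodic theorem are routine.
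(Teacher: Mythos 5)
Your proof is essentially the paper's, with a cosmetic difference in packaging. The paper does not formulate a pointwise ``refined stability estimate''; rather, it starts from the upper bound $\bE\,\cF_{T,\mu}(\cdot,D_n)\leq C|D_n|$ of Lemma~\ref{lem:upper_bound}, successively lower-bounds each piece (kinetic plus entropy via \cite[Lemma~9]{HaiLewSol_2-09}, electron repulsion via Lieb--Oxford, the nuclear terms via the smearing $\nu_R$), and arrives at $\bE\,{\rm D}\big(\rho_{\Gamma_n}-\sum z\nu_R\big)\leq C|D_n|$, to which it applies the capacity estimate. You instead absorb all of these steps into a single pointwise lower bound on $\cF_{T,\mu}(\omega,D_n)$ with the good term $c(N-Q)^2/{\rm diam}(D_n)$ and then take expectations; the two are equivalent, and your version would need the small bookkeeping observation that $\#\big(\cK(\omega)\cap D_n\big)\leq C\big(|D_n|+\sum_{(R,z)\in\cK\cap D_n}\delta_{R,z}^{-1}\big)$ (disjoint balls for nuclei with $\delta_{R,z}\geq 1$, and the sum for the rest) and the Young inequality $\int\rho^{4/3}\leq\epsilon\int\rho^{5/3}+C_\epsilon|D_n|$ to realise the claimed right-hand side with only $|D_n|$ and $\sum\delta_{R,z}^{-1}$. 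The second part (form of $f$) and the concavity argument match the paper: the paper writes the trial-state comparison with $\mu_1=0$, you write it with general $\mu_1<\mu_2$, which is the same computation; and both of you obtain concavity of $T\mapsto f(T,0)$ from the variational principle~\eqref{eq:variational} exhibiting $\cF_{T,\mu}$ as an infimum of affine functions.
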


It is easy to see that $Z_{\rm av}$ is finite under our assumption~\eqref{eq:assumption_nuclei} on $X_1$, see Lemma~\ref{lem:borne-X0} below. Corollary~\ref{cor:neutrality} is proved later in Section~\ref{sec:proof_cor_neutrality}. It easily follows from the stability of matter (Theorem~\ref{thm:stability}) and the upper bound $\bE\big(\cF_{T,\mu}(\cdot,D_n)\big)\leq C|D_n|$ which is proved in Lemma~\ref{lem:upper_bound} below.

\begin{remark}
In this paper we have considered a discrete group acting on $\R^3$ and on the probability space $\Omega$, in an ergodic fashion. Our method of proof can be applied to deal with a \emph{continuous group} like $\R^3$. For instance, the thermodynamic limit exists if the nuclei are distributed using a Poisson process, since the corresponding random variable $X_1$ satisfies~\eqref{eq:assumption_nuclei} in this case.
\end{remark}

\section{The case of i.i.d. perturbations of a nuclear lattice}
\label{sec:nuclei}

In this section we come back to the special (but instructive) case of i.i.d perturbations of nuclei on a lattice, as introduced in Example~\ref{ex:iid}. In this setting we derive conditions under which the assumption~\eqref{eq:assumption_nuclei} on $X_1$ in Theorem~\ref{thm:thermo-limit} is satisfied. For simplicity we consider a cubic lattice $\Z^3$ with only one nucleus per unit cell $W=[-1/2,1/2)^3$. The extension to a more general situation is straightforward.

As in Example~\ref{ex:iid}, we assume that $\Omega=(\Omega_0)^{\Z^3}$, $\bP=(\bP_0)^{\otimes\Z^3}$ and that 
\begin{equation}
\cK(\omega)=\Big\{ \big(j+ r_j(\omega)\,,\,z_j(\omega)\big)\ :\ j\in\Z^3\Big\}.
\label{eq:iid_bis}
\end{equation}
with $r_j(\omega)=r(\omega_j)$ and $z_j(\omega)=z(\omega_j)$. Here $r:\Omega_0\to\R^3$ and $z:\Omega_0\to[\underline{Z},\bar{Z}]$ are two fixed random variables. Let us denote by $\nu$ the law of $r$. We will give conditions on $\nu$ which ensure that $X_1\in L^p(\Omega)$ for some $p\geq2$, as required in Theorem~\ref{thm:thermo-limit}. The simple Gaussian case~\eqref{eq:Gaussian} is covered by the next two results.

We recall that the random variables $X_0$ and $X_1$ are defined by
\eqref{eq:def_X_0} and \eqref{eq:def_X_1}, respectively, that is,
\begin{equation*}
X_0(\omega) = \#\left(\cK(\omega)\cap {W}\right) =
\#\left\{j\in \Z^3\ :\ j+r_j(\omega) \in {W} \right\}=\sum_{j\in\Z^3}\1(j+r_j(\omega)\in W)
\end{equation*}
and
$$X_1(\omega) = \sum_{j\in \Z^3} \frac {\1(j+r_j(\omega)\in {W})} {\delta_j(\omega)},
\quad \text{with}\ \delta_j(\omega) = \inf_{k\in \Z^3\setminus\{j\}} |j+r_j(\omega)
- k -r_k(\omega)|.$$
Recall that $\bP\big(j+r_j\in {W}\big)=\bP_0\big(r\in W-j\big)=\nu(W-j)$, hence
$$\bE\big(X_0\big) = \sum_{j\in \Z^3} \bP(j+r_j\in W)= \sum_{j\in\Z^3} \nu(W-j) = \nu(\R^3)= 1$$
and in particular $X_0\in L^1(\Omega)$. The following elementary proposition deals with the integrability of higher powers of $X_0$.

\begin{proposition}[Integrability of $X_0$ in the i.i.d. case]\label{prop:iid_X_0}
We assume that $\cK$ is of the form~\eqref{eq:iid_bis}. Then
\begin{equation}
\norm{X_0}_{L^p(\Omega)}\leq \sum_{j\in\Z^3}\nu\big(W-j\big)^{1/p}.
\label{eq:estim_X_0_L_p} 
\end{equation}
Thus $X_0\in L^p(\Omega)$ when the right side is finite. If the support of $\nu$ is \emph{not} compact, then $X_0\notin L^\ii(\Omega)$.
\end{proposition}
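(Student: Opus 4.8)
The plan is to regard $X_0(\omega)=\sum_{j\in\Z^3}\1\big(j+r_j(\omega)\in W\big)$ as a countable sum of \emph{independent} indicator random variables, and to establish the two assertions by elementary means: the $L^p$ bound from Minkowski's inequality, and the failure of $L^\ii$-membership from an independence argument combined with the definition of the support.

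For the inequality \eqref{eq:estim_X_0_L_p} I would, for $1\leq p<\ii$, note that each summand is a nonnegative element of $L^p(\Omega)$ with $\norm{\1(j+r_j\in W)}_{L^p(\Omega)}=\bP(j+r_j\in W)^{1/p}=\nu(W-j)^{1/p}$ (using the identity $\bP(j+r_j\in W)=\nu(W-j)$ recalled just before the statement), and then apply Minkowski's inequality for a countable sum of nonnegative functions — which one gets from the finite-sum version applied to the partial sums $\sum_{|j|\leq N}\1(j+r_j\in W)$ followed by monotone convergence as $N\to\ii$. This directly yields $\norm{X_0}_{L^p(\Omega)}\leq\sum_{j\in\Z^3}\nu(W-j)^{1/p}$, hence $X_0\in L^p(\Omega)$ whenever the right-hand side is finite. (When $\mathrm{supp}\,\nu$ is compact only finitely many cubes $W-j$ carry positive $\nu$-mass, so $X_0$ is almost surely bounded by their number, which settles the case $p=\ii$.)

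For the last assertion I would argue as follows. The half-open cubes $\{W-j\}_{j\in\Z^3}$ tile $\R^3$: they are just a relabelling ($k=-j$) of the partition $\{W+k\}_{k\in\Z^3}$ of $\R^3$. If $\nu(W-j)>0$ held for only finitely many $j$, say $j\in J$, then $\nu$ would be carried by the bounded set $\bigcup_{j\in J}(W-j)$ (its complement being a $\nu$-null union of cubes), so $\mathrm{supp}\,\nu$ would be contained in the compact set $\overline{\bigcup_{j\in J}(W-j)}$, contradicting the hypothesis. Therefore $S:=\{j\in\Z^3:\ p_j:=\nu(W-j)>0\}$ is infinite. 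Given any integer $M$, pick distinct $j_1,\dots,j_{M+1}\in S$; since the coordinates $(\omega_j)_{j\in\Z^3}$ are independent, so are the events $\{j_i+r_{j_i}\in W\}$, and on their intersection at least $M+1$ of the nonnegative terms in the sum defining $X_0$ equal $1$, whence $X_0\geq M+1$ there. Thus
\[
\bP\big(X_0\geq M+1\big)\ \geq\ \bP\Big(\bigcap_{i=1}^{M+1}\{j_i+r_{j_i}\in W\}\Big)\ =\ \prod_{i=1}^{M+1}p_{j_i}\ >\ 0 ,
\]
and since $M$ is arbitrary, $X_0$ is not essentially bounded, i.e.\ $X_0\notin L^\ii(\Omega)$.

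I do not expect a genuine obstacle here. The only points needing a little care are the justification of Minkowski's inequality for an \emph{infinite} sum (done via monotone convergence on partial sums) and the observation that an unbounded support forces $\nu(W-j)>0$ for infinitely many $j$ (the tiling-plus-support argument above); everything else is routine.
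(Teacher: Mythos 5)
Your proof is correct and follows essentially the same route as the paper: Minkowski's inequality for the $L^p$ bound (the paper likewise notes this does not use independence), and for the non-compact-support case the same independence argument showing $\bP(X_0\geq N)\geq\prod_{n=1}^N\nu(W-j_n)>0$ for infinitely many indices. The only differences are that you spell out the monotone-convergence justification for the countable Minkowski inequality and the tiling argument showing that non-compact support forces infinitely many cubes $W-j$ to carry mass, details the paper leaves implicit.
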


\begin{proof}  
Using that $X_0=\sum_{j\in\Z^3}\1\big(j+r_j\in {W}\big)$, we obtain
$$\norm{X_0}_{L^p(\Omega)}\leq \sum_{j\in\Z^3}\norm{\1\big(j+r_j\in {W}\big)}_{L^p(\Omega)}=\sum_{j\in\Z^3}\bP\big(j+r_j\in{W}\big)^{1/p}=\sum_{j\in\Z^3}\nu\big(W-j\big)^{1/p}.$$
This bound does not really use the independence of the variables $r_i$. We now prove that $X_0\notin L^\ii(\Omega)$ when the support of $\nu$ is not compact. This means that there exists an infinite sequence $(j_n)\subset\Z^3$ such that 
$$\forall n,\qquad \bP(r_{j_n} \in {W}-j_n)=\bP_0(r \in {W}-j_n)=\nu(W-j_n)>0.$$
Let $N$ be a positive integer. It is clear that if for all $1\leq n\leq N$ we have $j_n+r_{j_n} \in {W}$, then $X_0\geq N$. Hence, we
have
$$\bP\left(X_0 \geq N\right) \geq \bP\left(\bigcap_{n=1}^N
  \left\{j_n+ r_{j_n} \in  {W}\right\} \right) = \prod_{n=1}^N \bP(j_n+r_{j_n} \in
{W}).$$
Since  $\bP(j_n+r_{j_n} \in {W}) >0$ for all $n$,
we deduce that $\bP(X_0\geq N) >0$ for any $N>0$. This implies that $X_0\not\in L^\ii(\Omega)$.
\end{proof}

The following is now a simple application of the previous proposition.

\begin{example}[Gaussian perturbations I]
Assume that $\nu(x)=(2\pi \sigma)^{-3/2}e^{-{|x|^2}/(2\sigma)}$ is a Gaussian distribution. Then $X_0\in L^p(\Omega)$ for all $1\leq p<\ii$ but $X_0\notin L^\ii(\Omega)$.
\end{example}

\medskip

We next turn to the study of $X_1$. The simpler situation is when the nuclei never escape from their cell and stay at a finite distance to the other ones.
\begin{lemma}[Small perturbations]
If $\nu$ has its support inside $W$, then both $X_0$ and $X_1$ are in $L^\ii(\Omega)$.
\end{lemma}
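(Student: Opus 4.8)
The plan is to bound both random variables \emph{uniformly in $\omega$}, using only that each displacement $r_j(\omega)$ lies in the fixed bounded set $W$.

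\medskip

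First, I would bound $X_0$. Since $\text{supp}(\nu)\subset W$, for every $j\in\Z^3$ and almost every $\omega$ we have $r_j(\omega)\in W$, hence $j+r_j(\omega)\in j+W$. Because the translates $(j+W)_{j\in\Z^3}$ are pairwise disjoint (they tile $\R^3$), the event $j+r_j(\omega)\in W$ can occur only for those finitely many $j$ with $(j+W)\cap W\neq\emptyset$. For $W=[-1/2,1/2)^3$ this means $j\in\{-1,0,1\}^3$, so $X_0(\omega)\leq 27$ almost surely; in general the bound is the number of lattice translates of $W$ meeting $W$, which is finite since $W$ is bounded. Thus $X_0\in L^\infty(\Omega)$.

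\medskip

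Next, for $X_1$, I must produce an almost-sure lower bound on the nearest-neighbor distances $\delta_j(\omega)$. Writing $d:=\text{diam}(W)<\infty$, for $j\neq k$ in $\Z^3$ we have
$$|j+r_j(\omega)-k-r_k(\omega)|\geq |j-k|-|r_j(\omega)|-|r_k(\omega)|\geq |j-k|-2d,$$
which is useful only for $|j-k|$ large. To handle close lattice sites, note that the constraint $j+r_j\in W$ forces $j\in\{-1,0,1\}^3$, and for the finitely many such $j$ (paired with any $k$), $\delta_j(\omega)$ is bounded below by a positive constant \emph{only} if we know the nuclei stay apart --- but this requires $\text{supp}(\nu)$ to avoid the situation $j+r_j=k+r_k$. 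The cleaner route: since we only need $\delta_j(\omega)$ for the $j$ contributing to $X_1$, i.e. those with $j+r_j(\omega)\in W$, and there are at most $27$ of them with positions all inside $\overline W$, the quantity $\inf\{\delta_j(\omega):j+r_j(\omega)\in W\}$ is the minimal pairwise distance among finitely many points. This infimum is positive provided no two coincide; coincidence $j+r_j=k+r_k$ with $j\neq k$ requires $r_j-r_k=k-j\in\Z^3\setminus\{0\}$, so $|r_j-r_k|\geq 1$, which is impossible once $d=\text{diam}(W)<1$ (true for the unit cube, $d=\sqrt3$... ) — so this naive argument fails for the cube and I need the quantitative version below instead.

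\medskip

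The key step, and the main obstacle, is therefore a \emph{quantitative, $\omega$-uniform} lower bound on $\delta_j(\omega)$ for contributing $j$. The clean statement is: there is a constant $c_W>0$, depending only on $W$ and $\ccL$, such that $\delta_j(\omega)\geq c_W$ whenever $j+r_j(\omega)\in W$. To see this, fix such a $j$ and any $k\neq j$. If $j+W$ and $k+W$ are disjoint (which holds for all but finitely many $k$, uniformly), then $j+r_j(\omega)\in j+W$ and $k+r_k(\omega)\in k+W$ lie in disjoint closed-up sets, so $|j+r_j(\omega)-k-r_k(\omega)|\geq \text{dist}(j+W,k+W)$, and the infimum of these positive distances over the (finite set of) pairs with disjoint translates is a positive constant. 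For the remaining finitely many $k$ with $(j+W)\cap(k+W)\neq\emptyset$: here $j+r_j\in W$ is an extra constraint, and — this is the delicate point — we must simply \emph{assume} or observe that $\nu$ having support in $W$ with the nuclei \emph{distinct almost surely} (part of the standing hypotheses, encoded in $X_1<\infty$ a.s.) is not quite enough for a uniform bound; rather one checks directly that for $\nu$ supported in the \emph{open} cube, or more generally whenever $\text{supp}(\nu)$ has positive distance from the relevant facets, one gets $\delta_j\geq c_W$. Granting the uniform bound $\delta_j(\omega)\geq c_W$ on contributing sites, we conclude
$$X_1(\omega)=\sum_{j:\,j+r_j(\omega)\in W}\frac{1}{\delta_j(\omega)}\leq \frac{X_0(\omega)}{c_W}\leq \frac{27}{c_W}$$
almost surely, so $X_1\in L^\infty(\Omega)$ as claimed. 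I expect the only real work to be pinning down the constant $c_W$ and confirming that ``support in $W$'' suffices for it — everything else is the disjoint-tiling observation.
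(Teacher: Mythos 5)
Your strategy---use $\text{supp}\,\nu\subset W$ to bound $X_0$ uniformly and to get a uniform positive lower bound $\eta$ on the nearest-neighbor distances---is exactly the paper's, but you leave the key step as a conditional hypothesis ("Granting the uniform bound $\delta_j(\omega)\geq c_W$...") and even suggest that "support in $W$" may not suffice and that one should instead assume the support avoids certain facets. This is the genuine gap, and it is fillable with no extra hypothesis: the missing ingredient is \emph{compactness} of the support. By definition $\text{supp}\,\nu$ is closed, and it is bounded because $W$ is a bounded fundamental domain, so $K:=\text{supp}\,\nu$ is compact. Since $W$ is a fundamental domain for $\ccL$, two points $a,b\in W$ with $a-b\in\ccL$ must coincide, i.e.\ $(W-W)\cap(\ccL\setminus\{0\})=\emptyset$, and a fortiori $(K-K)\cap(\ccL\setminus\{0\})=\emptyset$. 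As $K-K$ is compact and $\ccL\setminus\{0\}$ is closed, $\eta:=\text{dist}\big(K-K,\,\ccL\setminus\{0\}\big)>0$. Then for any $j\neq k$, almost surely $|j+r_j-k-r_k|=|(j-k)-(r_k-r_j)|\geq\eta$ because $j-k\in\ccL\setminus\{0\}$ and $r_k-r_j\in K-K$. Thus $\delta_j(\omega)\geq\eta$ for all $j$ and a.e.\ $\omega$; this is precisely the constant the paper invokes in its one-line proof, and nothing about open cubes or facet-avoidance needs to be added.

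Your bound $X_0\leq 27$ is also weaker than what the hypothesis gives, and for the same reason: you are reasoning with the overlaps of the \emph{closed} cubes $j+\overline W$ rather than with the disjoint tiling by the fundamental domain. Since $r_j(\omega)\in W$ a.s., the nucleus $j+r_j(\omega)$ lies in $j+W$, and the translates $\{j+W\}_{j\in\ccL}$ partition $\R^3$, so $(j+W)\cap W=\emptyset$ for every $j\neq 0$. Hence the only contributing index is $j=0$, giving $X_0\equiv 1$ (as the paper states), and then $X_1(\omega)=1/\delta_0(\omega)\leq 1/\eta$, so both lie in $L^\ii(\Omega)$.
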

\begin{proof}
Under the assumption, it is obvious that there is always exactly one nucleus per unit cell, and that it is at a finite distance $\eta$ to any other nucleus. Thus $X_0\equiv1$ and $X_1\leq 1/\eta$.
\end{proof}

\begin{corollary}[Thermodynamic limit for small i.i.d. perturbations of the nuclei]
We assume that $\cK$ is of the form~\eqref{eq:iid_bis} and that $\nu$ has its support inside $W$. 
Then the thermodynamic limit in Theorem~\ref{thm:thermo-limit} 
\begin{equation}
\lim_{n\to\ii}\bE\left|\frac{\cF_{T,\mu}(\,\cdot\,,D_n)}{|D_n|}-{f}(T,\mu)\right|^q=0
\label{eq:thermo-limit-bis}
\end{equation}
is valid for any $1\leq q<\ii$.
\end{corollary}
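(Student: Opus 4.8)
The plan is simply to verify that the hypotheses of Theorem~\ref{thm:thermo-limit} hold with the exponent $p=\ii$, and then read off the conclusion. First I would recall from Example~\ref{ex:iid} that a family $\cK(\omega)$ of the form~\eqref{eq:iid_bis} built from i.i.d.\ displacements $r_j$ and charges $z_j$ is automatically stationary in the sense of~\eqref{eq:stat_K}, the shift $\tau_k$ on $\Omega=(\Omega_0)^{\Z^3}$ being measure preserving and ergodic. So the only remaining point is the integrability assumption~\eqref{eq:assumption_nuclei} on $X_1$.

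Next I would invoke the preceding Lemma on small perturbations: since $\nu$ is supported in $W$, each nucleus $j+r_j(\omega)$ stays in the cube $W+j$, so there is exactly one nucleus per cell and two nuclei are always at distance at least some fixed $\eta>0$; hence $X_0\equiv 1$ and $X_1\leq 1/\eta$ pointwise, so in particular $X_1\in L^\ii(\Omega)$. Thus $\norm{X_1}_{L^p(\Omega)}<\ii$ holds for $p=\ii$, which certainly holds for some $2\leq p\leq\ii$. Applying Theorem~\ref{thm:thermo-limit} with this value then gives the convergence~\eqref{eq:thermo-limit} for all $1\leq q<p/2$; with $p=\ii$ this range is exactly $1\leq q<\ii$, which is~\eqref{eq:thermo-limit-bis}.

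There is essentially no obstacle here: the corollary is a direct specialization, and all the substantive work has already been carried out in Example~\ref{ex:iid}, in the Lemma on small perturbations, and in Theorem~\ref{thm:thermo-limit}. The only point requiring a moment's care is the bookkeeping at the endpoint $p=\ii$, namely that the condition ``$1\leq q<p/2$'' indeed yields the full range $1\leq q<\ii$ and does not secretly exclude any finite exponent.
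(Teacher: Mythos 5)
Your proof is correct and is exactly the argument intended by the paper: stationarity and ergodicity come from Example~\ref{ex:iid}, the preceding lemma gives $X_1\in L^\ii(\Omega)$, and then one applies Theorem~\ref{thm:thermo-limit} with $p=\ii$, for which the range $1\leq q<p/2$ becomes $1\leq q<\ii$. The paper does not spell out a separate proof precisely because the corollary is this immediate specialization.
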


A more interesting situation is covered in the following

\begin{proposition}[Integrability of $X_1$ in the i.i.d. case]\label{prop:iid_X_1}
We assume that $\cK$ is of the form~\eqref{eq:iid_bis} and that $\nu$ satisfies 
\begin{equation}
\sum_{j\neq0}\norm{\nu}_{L^\ii(W+B_\eta-j)}^{1/p}<\ii
\label{eq:cond_nu_X_1}
\end{equation}
for some $\eta>0$ and some $1\leq p<3$. Then we have 
$X_1\in L^p(\Omega)$.

If moreover there exist a ball $B_\kappa(v)\subset W$ with radius $\kappa>0$ and center $v\in W$, and $i\neq j\in\Z^3$ such that 
\begin{equation}
\forall x\in \big(B_\kappa(v)-i\big)\cup\big(B_\kappa(v)-j\big),\quad \nu(x)\geq \kappa>0,
\label{eq:cond_nu_X_1bis}
\end{equation}
then $X_1\notin L^3(\Omega)$.
\end{proposition}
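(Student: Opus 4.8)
\textbf{Proof strategy for Proposition~\ref{prop:iid_X_1}.}

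The plan is to split the statement into its two parts: the positive bound $X_1\in L^p(\Omega)$ under~\eqref{eq:cond_nu_X_1}, and the negative statement $X_1\notin L^3(\Omega)$ under~\eqref{eq:cond_nu_X_1bis}. For the positive part, I would first bound $\delta_j(\omega)$ from below by comparing with the nearest lattice site: if $j+r_j\in W$ and $k\neq j$, then $|j+r_j-k-r_k|\geq \mathrm{d}(j+r_j, W+k)\geq |j-k|_\infty - \mathrm{const}$ unless $r_k$ is large, so the only way $\delta_j$ can be small is if some neighbouring nucleus $k+r_k$ wanders close to $j+r_j\in W$, i.e. $r_k\in W+B_\eta-k$ for the relevant $\eta$. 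More precisely, writing the event $\{j+r_j\in W\}$ one sees that $\mathbf 1(j+r_j\in W)/\delta_j \leq C + \sum_{k\neq j}\mathbf 1(r_k\in W+B_\eta - k)\cdot(\text{something integrable in the gap})$; the genuinely small-distance contribution is controlled by noting $1/\delta_j$ is integrable against the Lebesgue measure near the singularity since $\int_{B_1}|x|^{-1}\,dx<\infty$ in $\R^3$, which is exactly where the restriction $p<3$ enters (one needs $|x|^{-p}$ locally integrable in dimension $3$). Summing over $j$ and using the triangle inequality in $L^p(\Omega)$ together with independence to factor the probabilities, the tail $\sum_{k\neq 0}\norm{\nu}_{L^\infty(W+B_\eta-k)}^{1/p}<\infty$ from~\eqref{eq:cond_nu_X_1} closes the estimate.

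In more detail, for the positive part I would write $X_1 = \sum_j \mathbf 1(j+r_j\in W)/\delta_j$ and bound $\delta_j(\omega)^{-1}\leq \sum_{k\neq j}|j+r_j-k-r_k|^{-1}$ — no, rather I would keep only the nearest neighbour but estimate its contribution by integrating out $r_k$: conditionally on $r_j$, the density of $k+r_k$ on $W+B_\eta-k$ shifted appropriately is bounded by $\norm{\nu}_{L^\infty(W+B_\eta-k)}$, so $\bE\big[\mathbf 1(j+r_j\in W)\,\delta_j^{-1}\,\mathbf 1(\text{nearest neighbour is }k+r_k)\big]$ is at most $\norm{\nu}_{L^\infty(W-j)}\cdot\norm{\nu}_{L^\infty(W+B_\eta-k)}\cdot\int_{B_{C}}|y|^{-1}\,dy$, and similarly for the $L^p$ norm one gets the exponent $1/p$ on each factor and $\int_{B_C}|y|^{-p}\,dy<\infty$ because $p<3$. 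Summing over $j$ (using stationarity to reduce to $j=0$, which gives the factor $\norm{\nu}_{L^\infty(W)}^{1/p}<\infty$) and over $k$ (giving the series~\eqref{eq:cond_nu_X_1}) yields $\norm{X_1}_{L^p(\Omega)}<\infty$ via Minkowski.

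For the negative part, the idea is a lower bound on the probability that $\delta_j$ is tiny. Assumption~\eqref{eq:cond_nu_X_1bis} says that two distinct sites $i,j$ can both be pushed, with uniformly positive probability density, into a common small ball $B_\kappa(v)$ (after translating back to $W$): i.e. with probability at least $\kappa\cdot|B_\rho|$ each, the points $i+r_i$ and $j+r_j$ both lie in an arbitrarily small ball of radius $\rho$, forcing $\delta_i\leq 2\rho$ (and $i+r_i\in W$). Then $X_1\geq \mathbf 1(i+r_i\in W)/\delta_i \geq (2\rho)^{-1}$ on an event of probability $\gtrsim \rho^6$ (product of two independent $\sim\rho^3$ events), so $\bE(X_1^3)\gtrsim \int_0^{\rho_0} \rho^{-3}\cdot \rho^5\,\frac{d\rho}{\rho} $ — adjusting the dyadic decomposition, $\bP(X_1\geq t)\gtrsim t^{-6}$ for $t$ large, whence $\bE(X_1^3)=\int_0^\infty 3t^2\,\bP(X_1\geq t)\,dt$ diverges logarithmically (the borderline exponent $3 = 6/2$ is exactly the critical one, matching $p<3$ above). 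I would make this rigorous by fixing a sequence $\rho_n\to 0$, considering the disjoint-in-$n$ events $\{i+r_i, j+r_j \in B_{\rho_n}(w_n)\}$ for suitable centres $w_n\to v$, and summing.

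\textbf{Main obstacle.} The delicate point is the lower bound on $\delta_j$ in the positive part: one must rule out that the \emph{same} neighbour being dragged near $j+r_j$ is double-counted, and more importantly handle the genuinely three-dimensional volume of configurations with $\delta_j\in[\rho,2\rho]$, which is why the local integrability of $|x|^{-p}$ (hence $p<3$) is unavoidable; getting the bookkeeping of which $k$ realizes the infimum clean, while keeping the factorization of probabilities that independence affords, is the part that needs care. The negative part is comparatively soft once the right family of independent small-ball events is chosen.
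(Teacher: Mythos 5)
Your plan for the positive part is essentially the paper's: a union bound (or conditioning) over the possible nearest-neighbour index $k$, factoring by independence, and using that $|y|^{-p}$ is locally integrable in $\R^3$ precisely when $p<3$. There is however one genuine lacuna: you repeatedly invoke the factor $\norm{\nu}_{L^\ii(W-j)}^{1/p}$ and even assert that $j=0$ gives $\norm{\nu}_{L^\ii(W)}^{1/p}<\ii$. But condition~\eqref{eq:cond_nu_X_1} only controls $\norm{\nu}_{L^\ii(W+B_\eta-j)}$ for $j\neq 0$; the measure $\nu$ is allowed to be singular, or to have an unbounded density, inside $W$ (the paper notes this explicitly). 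The paper sidesteps this by bounding the $j=0$ factor by the total mass $\nu(W)\leq 1$ rather than by an $L^\ii$ norm, and by isolating the term $k=-j$ separately when $j\neq 0$. Without that care, the $j=0$ term of your sum could be infinite even when~\eqref{eq:cond_nu_X_1} holds.

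The negative part has a more serious gap that breaks the argument as written. You estimate $\bP(\delta_i\leq 2\rho,\ i+r_i\in W)\gtrsim \rho^6$ by demanding that \emph{both} $i+r_i$ and $j+r_j$ land in the same ball $B_\rho(w)$ of radius $\rho$: that is two independent events each of probability $\sim\rho^3$. But this bound is too weak to conclude. Indeed, with $\bP(\delta_i<s)\gtrsim s^6$ one gets
\[
\bE\big(\delta_i^{-3}\,\1(\delta_i\leq1,\ i+r_i\in W)\big)=\int_1^\infty \bP\big(\delta_i<t^{-1/3},\ i+r_i\in W\big)\,dt\ \lesssim\ \int_1^\infty t^{-2}\,dt<\ii,
\]
so your lower bound is consistent with $X_1\in L^3(\Omega)$ and proves nothing. (Your own intermediate integral $\int_0^{\rho_0}\rho^{-3}\cdot\rho^5\,d\rho/\rho=\int_0^{\rho_0}\rho\,d\rho$ is finite; the claimed logarithmic divergence and the heuristic ``$3=6/2$ is critical'' both confuse the exponent --- for $\bP(X\geq t)\sim t^{-\alpha}$ the threshold for $\bE(X^q)=\ii$ is $q=\alpha$, not $\alpha/2$.) The correct count, and the one the paper uses, is that only \emph{one} of the two particles must be pinned to an $\rho$-scale: fix $r_i$ anywhere in $B_{\kappa/2}(v)-i$ (an event of probability bounded below independently of $\rho$, thanks to~\eqref{eq:cond_nu_X_1bis}), and then require $r_j$ to land in $B_\rho\big(i+r_i-j\big)$, which is a single $\sim\rho^3$ event. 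This gives $\bP(\delta_i\leq\rho,\ i+r_i\in W)\gtrsim\rho^3$, hence $\bE\big(\delta_i^{-3}\,\1(\delta_i\leq\epsilon,\ i+r_i\in W)\big)\geq C>0$ uniformly in $\epsilon$, which contradicts dominated convergence if $\delta_i^{-3}\,\1(i+r_i\in W)$ were in $L^1(\Omega)$. The paper implements this via the complementary event $\{\delta_i>\epsilon\}$ and the bound $1-t\leq e^{-t}$ applied to the infinite product over $k$, but the essential point is the $\rho^3$ (not $\rho^6$) scaling of the bad event.
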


\medskip

The condition~\eqref{eq:cond_nu_X_1} implies that the measure $\nu$ is actually a bounded function outside of the unit cell $W$ (indeed, outside of $(1-\eta/\sqrt{3})W$), and that it decays fast enough at infinity, in a similar fashion as in~\eqref{eq:estim_X_0_L_p}. On the other hand, $\nu$ does not have to be absolutely continuous with respect to the Lebesgue measure inside $W$. The condition~\eqref{eq:cond_nu_X_1bis} means that, with a positive probability, two nuclei coming from different sites $i$ and $j$ will both (independently) visit all of the same ball $B_\kappa(v)$.

\begin{proof}
Recall that 
$$X_1 = \sum_{i\in\Z^3} \1_{{W}}(j+r_j) \frac 1 {\delta_j},$$
therefore 
\begin{equation}
\norm{X_1}_{L^p(\Omega)} \leq \sum_{j\in\Z^3} \norm{\1_{{W}}(j+r_j) \frac 1 {\delta_j}}_{L^p(\Omega)}=\sum_{j\in\Z^3} \norm{\1_{{W-j}}(r_0) \frac 1 {\delta_0}}_{L^p(\Omega)}
\label{eq:estim_X_1} 
\end{equation}
for $p\geq1$. 
In order to show that the series on the right side is convergent, we first estimate $\bP(r_0\in W-j\ \cap\ \delta_0 < \epsilon)$. 
For this purpose, we point out that $\delta_0<\epsilon$ if and only if there exists $k\neq 0$ such that $|r_0 - k
  -r_k|<\epsilon$. Hence, 
\begin{align*}
\bP(r_0\in W-j\ \cap\ \delta_0 < \epsilon) &\leq \sum_{k\neq 0} \bP\left(r_0\in W-j\ \cap\  |r_0 - k
  -r_k|<\epsilon\right) \\
&=  \sum_{k\neq 0}\int_{W-j}\nu(x)\,dx\int_{\R^3}\nu(y)\,dy\,
\1_{B_\epsilon} (x -k -y) \\
&=\sum_{k\neq 0}\int_{W}\nu(x-j)\,dx\int_{B_\epsilon}\nu(y+x-j-k)\,dy.
\end{align*}
Even if $\nu$ is not necessarily absolutely continuous in the interior of $W$, we have used an integral notation for simplicity.
When $j=0$, we obtain, for $\epsilon < \eta,$ 
\begin{align*}
\bP(r_0\in W\ \cap\ \delta_0 < \epsilon) &\leq\sum_{k\neq 0}\int_{W}\nu(x)\,dx\int_{B_\epsilon}\nu(y+x-k)\,dy\\
&\leq  \left(\sum_{k\neq 0}\norm{\nu}_{L^\ii(W+B_\eta-k)}\right)\,\nu(W)\,(4\pi/3)\epsilon^3=C\epsilon^3\;\nu(W).
\end{align*}
In the second line we have used that $\nu(y+x-k)\leq\norm{\nu}_{L^\ii(W+B_\eta-k)}$ for $\epsilon<\eta$. Note that
$$\sum_{j\neq0}\norm{\nu}_{L^\ii(W+j+B_\eta)}^{1/p}<\ii\ \Longrightarrow\ \sum_{j\neq0}\norm{\nu}_{L^\ii(W+j+B_\eta)}<\ii$$
since $p\geq1$.
When $j\neq0$, we isolate the term $k=-j$ and obtain 
\begin{multline*}
\sum_{k\neq 0}\int_{W}\nu(x-j)\,dx\int_{B_\epsilon}\nu(y+x-j-k)\,dy\\
=\sum_{\substack{k\neq 0\\ k\neq -j}}\int_{W}\nu(x-j)\,dx\int_{B_\epsilon}\nu(y+x-j-k)\,dy+\int_{W}\nu(x-j)\,dx\int_{B_\epsilon}\nu(y+x)\,dy.
\end{multline*}
We estimate the first term by (recall that $\epsilon<\eta$)
\begin{multline*}
\sum_{\substack{k\neq 0\\ k\neq -j}}\int_{W}\nu(x-j)\,dx\int_{B_\epsilon}\nu(y+x-j-k)\,dy\\
\leq \left(\sum_{k\neq -j}\norm{\nu}_{L^\ii(W+B_\eta-j-k)}\right)\nu(W-j)(4\pi/3)\epsilon^3\leq C\,\norm{\nu}_{L^\ii(W-j)}\epsilon^3,
\end{multline*}
where we have used $\nu(W-j)\leq\norm{\nu}_{L^\ii(W-j)}$. For the second term we write
\begin{align*}
\int_{W}\nu(x-j)\,dx\int_{B_\epsilon}\nu(y+x)\,dy&\leq \norm{\nu}_{L^\ii(W-j)}\int_{W}\,dx\int_{B_\epsilon}\nu(y+x)\,dy\\
&\leq \norm{\nu}_{L^\ii(W-j)}\nu(W+B_\eta)(4\pi/3)\epsilon^3\\
&=C\,\norm{\nu}_{L^\ii(W-j)}\epsilon^3.
\end{align*}
As a conclusion, we have shown that
$$\bP(r_0\in W-j\ \cap\ \delta_0 < \epsilon)\leq C\epsilon^3\begin{cases}
\nu(W)&\text{for $j=0$,}\\
\norm{\nu}_{L^\ii(W-j)} & \text{for $j\neq0$.}
\end{cases}$$

We then compute the expectation value of $\1(r_0\in W-j)\ \delta_0^{-q}$:
\begin{align*}
&  \bE\left(\frac{\1(r_0\in W-j)}{\delta_0^{q}}\right)\\
&\qquad = \bE\left(\frac{\1(r_0\in W-j)}{\delta_0^q}\1_{\delta_0\geq 2} \right) +\sum_{k\geq 0}\bE\left( \frac{\1(r_0\in W-j)}
    {\delta_0^q}\1_{2^{-k}\leq \delta_0< 2^{-(k-1)}}\right) \\
&\qquad\leq \frac {\nu(W-j)} {2^q} + \sum_{k\geq 0} 2^{qk} \bP\left(r_0\in W-j\ \cap\ \delta_0 <
  2^{-(k-1)}\right)\\
&\qquad \leq  \frac {\nu(W-j)} {2^q} + C\Big(\1(j=0)\nu(W)+\1(j\neq0)\norm{\nu}_{L^\ii(W-j)}\Big)\sum_{k\geq 0} 2^{qk}2^{-3(k-1)}.
\end{align*}
The sum is convergent provided that $1\leq q<3$. Hence we have shown that
$$\forall 1\leq q<3,\qquad \norm{\frac{\1(r_0\in W-j)}{\delta_0}}_{L^q(\Omega)}\leq \frac{C}{(1-2^{q-3})^{1/q}}
\begin{cases}
\nu(W)^{1/q}&\text{for $j=0$,}\\
\norm{\nu}_{L^\ii(W-j)}^{1/q} & \text{for $j\neq0$.}
\end{cases}
$$
Inserting this in~\eqref{eq:estim_X_1} gives the result under our
assumption~\eqref{eq:cond_nu_X_1}. 

In order to conclude the proof of Proposition~\ref{prop:iid_X_1}, we
show that $X_1\not\in L^3$ if \eqref{eq:cond_nu_X_1bis} holds. For this purpose, we write
$$X_1^3 = \sum_{\substack{i\in \Z^3 \\ i+r_i\in W}}
\sum_{\substack{j\in \Z^3 \\ j+r_j\in W}} \sum_{\substack{k\in
    \Z^3 \\ k+r_k\in  W}} \frac 1 {\delta_i \delta_j \delta_k}
\geq  \sum_{\substack{i\in \Z^3 \\ i+r_i\in W}} \frac 1
{\delta_i^3}.$$
Therefore it is sufficient to prove that $\delta_i^{-3}\1(i+r_i\in W)\not\in
L^1(\Omega)$ for some $i$. We choose $i,j$ such that
\eqref{eq:cond_nu_X_1bis} is satisfied. We have in particular
$\bP(i+r_i\in W) = \nu(W-i) >0.$
We compute
\begin{align*}
  \bP\left(\delta_i>\epsilon\; \cap\; i+r_i\in W\right) &= \bP\left(\left\{i+r_i\in
      W \right\}\cap \bigcap_{k\neq i} \left\{|i+r_i -k - r_k|>\epsilon \right\}\right) \\
&= \lim_{N\to\infty} \bP\left(\left\{i+r_i\in
      W \right\}\cap \bigcap_{\substack{|k|\leq N \\ k\neq i}} \left\{|i+r_i -k - r_k|>\epsilon \right\} \right).
\end{align*}
Using $1-t \leq e^{-t}$, we obtain the bound 
\begin{align*}
&\bP \left(\{i+r_i\in {W}\}\cap\bigcap_{\substack{|k|\leq N\\ k\neq i}} \left\{
        |i+r_i - k - r_k|>\epsilon\right\}\right)\\
&\qquad\qquad= \int_{{W}-i}\nu(y_i)\,dy_i \prod_{\substack{|k|\leq N\\ k\neq i}}\int_{\R^3}\nu(y_k)\,dy_k  \1\big(|i+y_i-k-y_k|>\epsilon\big)\\
&\qquad\qquad= \int_{{W}-i}\nu(y_i)\,dy_i \prod_{\substack{|k|\leq N\\ k\neq i}}\left(1-\int_{B_\epsilon}\nu(y_k+k-y_i-i)\,dy_j  \right)\\
&\qquad\qquad\leq \int_{{W}-i}\nu(y_i) \exp\left[-\displaystyle\int_{B_\epsilon}\left(\sum_{\substack{|k|\leq N\\ k\neq i}}\nu(y+k-y_i-i)\right)\,dy\right]\,dy_i.
\end{align*}
Passing to the limit $N\to\ii$, we have shown that
$$\bP\big(\delta_i >\epsilon\ \cap\ i+r_i\in {W}\big)\leq \int_{{W}-i}\nu(y_i) \exp\left[-\displaystyle\int_{B_\epsilon}\left(\sum_{k\neq 0}\nu(y+k-y_i)\right)\,dy\right]\,dy_i.$$
When $\epsilon<\kappa/2$ we can use our assumption~\eqref{eq:cond_nu_X_1bis} and infer that
$\nu(y+k-y_i)>\kappa$ for $y_i\in B_{\kappa/2}(v)-i$, $y\in B_\epsilon(0)$ and $k=j-i$. In particular,
\begin{multline*}
\bP\big(\delta_i >\epsilon\ \cap\ i+r_i\in {W}\big)\\
\leq \int_{{W}\setminus B_\kappa(v)-i}\nu(y_i) \,dy_i+\int_{B_\kappa(v)-i}\nu(y_i) \exp\left[-C\kappa\epsilon^3\right]\,dy_i
\leq (1-C\epsilon^3)\int_{W-i}\nu(y_i) \,dy_i
\end{multline*}
where $C>0$ depends on $i$ (a fixed index). We deduce that  
$$\bP\big(\delta_i \leq\epsilon\ \cap\ i+r_i\in {W}\big)\geq C\epsilon^3\,\bP\big(i+r_i\in {W}\big)$$
and finally obtain 
$$\bE\left(\frac{\1(i+r_i\in {W})}{\delta_i^3}\1(\delta_i\leq \epsilon)\right)\geq C\,\bP\big(i+r_i\in {W}\big).$$
If $\1(i+r_i\in {W})\delta_i^{-3}$ were in $L^1(\Omega)$, the left side
would converge to 0 when $\epsilon\to0$, by the dominated convergence
theorem. Since the right side is $>0$ by our choice of $i$, and
independent of $\epsilon$, we deduce that $\1(i+r_i\in
{W})\delta_i^{-3}\notin L^1(\Omega)$. This concludes the proof of Proposition~\ref{prop:iid_X_1}.
\end{proof}

Let us recall the inequality~\eqref{eq:stability-matter-LY} which implies that, when $z(\omega)=Z$ a.s.,
$$\frac{Z^2}8 \sum_{\substack{j\ :\\ W-j\subset D}}X_1(\tau_j\omega) \leq \cF_{T,\mu}(\omega,D) + C\left(1+T^{5/2} +
  \mu_+^{5/2}\right)|D|.$$
This clearly shows that $\cF_{T,\mu}(\omega,D)$ is not in $L^3(\Omega)$ when~\eqref{eq:cond_nu_X_1bis} is satisfied.

\begin{corollary}[Thermodynamic limit for large i.i.d. perturbations of the nuclei]
We assume that $\cK$ is of the form~\eqref{eq:iid_bis} and that $\nu$ satisfies 
\begin{equation}
\sum_{j\neq0}\norm{\nu}_{L^\ii(W+B_\eta-j)}^{1/3}<\ii
\label{eq:cond_nu_X_1/3}
\end{equation}
for some $\eta>0$. Then the thermodynamic limit in Theorem~\ref{thm:thermo-limit} 
\begin{equation}
\lim_{n\to\ii}\bE\left|\frac{\cF_{T,\mu}(\,\cdot\,,D_n)}{|D_n|}-{f}(T,\mu)\right|^q=0
\label{eq:thermo-limit-ter}
\end{equation}
is valid for all $1\leq q<3/2$.
If moreover $z(\omega)=Z$ a.s. and~\eqref{eq:cond_nu_X_1bis} is satisfied, then $\cF_{T,\mu}(\cdot,D_n) \not \in L^3(\Omega)$ for any $n$. Thus~\eqref{eq:thermo-limit-ter} cannot hold with $q=3$.
\end{corollary}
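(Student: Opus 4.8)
The plan is to deduce both halves of the statement from Proposition~\ref{prop:iid_X_1} already established in this section, together with Theorem~\ref{thm:thermo-limit} and the Lieb--Yau bound~\eqref{eq:stability-matter-LY}; no genuinely new estimate is needed.

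First I would observe that the hypothesis~\eqref{eq:cond_nu_X_1/3} implies~\eqref{eq:cond_nu_X_1} for \emph{every} exponent $p\in[1,3)$. Indeed, since $\sum_{j\neq0}\norm{\nu}_{L^\ii(W+B_\eta-j)}^{1/3}<\ii$ the numbers $\norm{\nu}_{L^\ii(W+B_\eta-j)}$ tend to $0$, hence are $\leq1$ for all but finitely many $j$; for such $j$ and any $p\leq3$ we have $\norm{\nu}_{L^\ii(W+B_\eta-j)}^{1/p}\leq\norm{\nu}_{L^\ii(W+B_\eta-j)}^{1/3}$, so the series in~\eqref{eq:cond_nu_X_1} converges. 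Proposition~\ref{prop:iid_X_1} then gives $X_1\in L^p(\Omega)$ for every $p\in[1,3)$. Now fix $q$ with $1\leq q<3/2$ and pick $p$ with $2q<p<3$, which is possible since $2q<3$; then $p\geq2$, $X_1\in L^p(\Omega)$ and $q<p/2$, so Theorem~\ref{thm:thermo-limit} yields~\eqref{eq:thermo-limit-ter} for this $q$. This proves the first assertion.

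For the second assertion I would invoke the consequence of~\eqref{eq:stability-matter-LY} recalled just above the statement: when $z(\omega)=Z$ almost surely,
\begin{equation*}
0\;\leq\;\frac{Z^2}{8}\sum_{j\,:\,W-j\subset D}X_1(\tau_j\omega)\;\leq\;\cF_{T,\mu}(\omega,D)+C\big(1+T^{5/2}+\mu_+^{5/2}\big)|D| .
\end{equation*}
By Proposition~\ref{prop:iid_X_1}, condition~\eqref{eq:cond_nu_X_1bis} forces $X_1\notin L^3(\Omega)$, and since each $\tau_j$ is measure preserving the random variable $\omega\mapsto X_1(\tau_j\omega)$ is not in $L^3(\Omega)$ either, for any $j$. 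Choosing any $j$ with $W-j\subset D_n$ — such a $j$ exists as soon as $|D_n|$ is large enough, in particular for all large $n$ since $D_n$ is open and $|D_n|\to\ii$ — and discarding all the other nonnegative terms of the sum gives $0\leq\frac{Z^2}{8}X_1(\tau_j\omega)\leq\cF_{T,\mu}(\omega,D_n)+C(1+T^{5/2}+\mu_+^{5/2})|D_n|$. If $\cF_{T,\mu}(\cdot,D_n)$ belonged to $L^3(\Omega)$, so would the right-hand side (it differs from $\cF_{T,\mu}(\cdot,D_n)$ by a constant), hence also $\omega\mapsto X_1(\tau_j\omega)$, a contradiction. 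Therefore $\cF_{T,\mu}(\cdot,D_n)\notin L^3(\Omega)$, and since $|D_n|$ and $f(T,\mu)$ are constants this gives $\bE|\cF_{T,\mu}(\cdot,D_n)/|D_n|-f(T,\mu)|^3=+\ii$; in particular~\eqref{eq:thermo-limit-ter} cannot hold with $q=3$.

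The argument is essentially bookkeeping; the only points demanding a little care — and the closest thing to an obstacle — are the elementary comparison $a^{1/p}\leq a^{1/3}$ for $a\leq1$, $p\leq3$, used to upgrade~\eqref{eq:cond_nu_X_1/3} to~\eqref{eq:cond_nu_X_1} for all $p<3$, and the understanding that ``for any $n$'' in the statement means ``for every $n$ with $D_n\supset W-j$ for some $j\in\Z^3$'', which holds for all $n$ large. Everything else is a direct citation of Proposition~\ref{prop:iid_X_1}, Theorem~\ref{thm:thermo-limit}, Theorem~\ref{thm:stability} and the bound~\eqref{eq:stability-matter-LY}.
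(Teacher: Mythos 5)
Your argument is correct and follows exactly the route the paper intends: the paper gives no formal proof of this corollary, but the sentence immediately preceding it (recalling the consequence of \eqref{eq:stability-matter-LY}) is precisely the second half of your argument, while the first half is the routine combination of Proposition~\ref{prop:iid_X_1} with Theorem~\ref{thm:thermo-limit}. The only pieces the paper leaves implicit and that you correctly supply are: (i) the elementary observation that \eqref{eq:cond_nu_X_1/3} forces the terms to be eventually $\leq 1$, so $a^{1/p}\leq a^{1/3}$ gives \eqref{eq:cond_nu_X_1} for every $p\in[1,3)$, and then optimizing $p\uparrow 3$ yields all $q<3/2$; and (ii) that $X_1\circ\tau_j$ inherits $X_1\notin L^3(\Omega)$ because $\tau_j$ is measure-preserving. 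Your caveat about ``for any $n$'' is reasonable and matches what the authors surely intend, namely $n$ large enough that $D_n$ contains some translate $W-j$; with the Fisher regularity and $\mathrm{diam}(D_n)\,|D_n|^{-1/3}\leq c$ this holds for all large $n$.
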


\begin{example}[Gaussian perturbations II]
Assume that $\nu(x)=(2\pi \sigma)^{-3/2}e^{-{|x|^2}/(2\sigma)}$ is a Gaussian distribution. Then we have for all $j\in\Z^3$
$$\norm{e^{-{|x|^2}/(2\sigma)}}_{L^\ii(W-j)}\leq Ce^{-|j|^2/(4\sigma)}.$$
Since~\eqref{eq:cond_nu_X_1} is satisfied for all $1\leq p<3$, we have $X_1\in L^p(\Omega)$ for all $1\leq p<3$. On the other hand, the support of $\nu$ is the whole space, hence~\eqref{eq:cond_nu_X_1bis} is obviously verified, thus $X_1\notin L^3(\Omega)$. The thermodynamic limit exists in $L^{q}(\Omega)$ for all $1\leq q<3/2$. But $\cF_{T,\mu}(\cdot,D) \not \in L^3(\Omega)$ when  $z(\omega)=Z$ a.s..
\end{example}

In the next section we provide the detailed proof of Theorem~\ref{thm:thermo-limit}.

\section{Proof of Theorem~\ref{thm:thermo-limit}}

Our proof follows the technique introduced in~\cite{HaiLewSol_1-09,HaiLewSol_2-09}. In~\cite{HaiLewSol_1-09}, abstract conditions called $\text{(A1)--(A6)}$ ensuring the existence of the thermodynamic limit for a functional $D\mapsto \cF(D)$ were provided. These conditions were verified in~\cite{HaiLewSol_2-09} for the deterministic crystal as well as some other quantum systems. Our technique of proof for the stochastic case can be sketched as follows:

\medskip

\noindent $(i)$ We start by proving in Lemma~\ref{lem:upper_bound} below that $|D_n|^{-1}\cF_{T,\mu}(\omega,D_n)$ is uniformly bounded in $L^{p/2}(\Omega)$. This uses an adequate trial state together with the assumption~\eqref{eq:assumption_nuclei} on the distribution of nuclei.

\medskip

\noindent $(ii)$ Then, we show that the averaged free energy $\bE\,\big(\cF_{T,\mu}(\cdot,D)\big)$ satisfies all the abstract properties $\text{(A1)--(A6)}$ of~\cite{HaiLewSol_1-09,HaiLewSol_2-09}. Hence its thermodynamic limit exists and we call $f(T,\mu)$ the corresponding limit. Note that $\bE\,\big(\cF(\cdot,D)\big)$ is \emph{periodic} by stationarity of $\cF_{T,\mu}$, hence the formalism of~\cite{HaiLewSol_1-09,HaiLewSol_2-09} is appropriate. This step requires some upper bounds in average (in particular the estimate $\text{(A4)}$ proved in Lemma~\ref{lem:A4} below) in which the assumption~\eqref{eq:assumption_nuclei} on the distribution of nuclei is again used.

\medskip

\noindent $(iii)$ We show that
\begin{equation}
\lim_{n\to\ii}\bE\left[\frac{\cF_{T,\mu}(\omega,D_n)}{|D_n|} - f(T,\mu)\right]_-=0
\label{eq:liminf} 
\end{equation}
where $[x]_-=\max(0,-x)$. This step uses the Graf-Schenker-type
inequality $\text{(A5)}$ (Lemma~\ref{lem:Graf-Schenker} below) which is
a precise lower bound on $\cF_{T,\mu}(\omega,D)$ at fixed $\omega$, in
terms of a tiling of simplices. The limit~\eqref{eq:liminf} is obtained
by a suitable application of the ergodic theorem (Theorem~\ref{thm:Birkhoff} below).

\medskip

\noindent $(iv)$ A simple argument shows that the convergence of the average $\bE\,\big(\cF_{T,\mu}(\cdot,D)\big)$ and~\eqref{eq:liminf} imply the strong convergence in $L^1(\Omega)$. By interpolation, the convergence in $L^q(\Omega)$ for $1\leq q<p/2$ follows.

\bigskip

Actually in our proof we do not consider the original functional $\cF_{T,\mu}(\omega,D)$ but, like in~\cite{HaiLewSol_2-09}, we optimize over the charges of the nuclei which are close to the boundary of $D$. This provides a modified functional $\underline\cF_{T,\mu}(\omega,D)$ to which the previous scheme is applied. Only in the end of our proof we come back to the original free energy.

\subsection*{Step 1. Bound on $X_0$}
A preliminary result is the following:
\begin{lemma}[Bounds on $X_1$ give bounds on $X_0$]
  \label{lem:borne-X0}
Assume that the distribution of nuclei ${\cK}(\omega)$ is stationary
in the sense of \eqref{eq:stat_K}. Then, for any $p\geq 1$, $X_1\in
L^p(\Omega)$ implies $X_0\in L^p(\Omega)$.
\end{lemma}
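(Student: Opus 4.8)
The plan is to establish the pointwise inequality $X_0(\omega)\leq X_1(\omega)+C_W$ for almost every $\omega$, where $C_W$ depends only on the bounded fundamental domain $W$; the conclusion then follows immediately from the triangle inequality in $L^p(\Omega)$, which gives $\norm{X_0}_{L^p(\Omega)}\leq\norm{X_1}_{L^p(\Omega)}+C_W<\ii$.

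The main ingredient is an elementary packing argument. Fix $\omega$ in the full-measure set where $X_1(\omega)<\ii$, so that $\cK(\omega)\cap W$ is finite and all the $\delta_{R,z}(\omega)$ are $>0$. For $(R,z)\in\cK(\omega)\cap W$ put $\bar\delta_{R,z}:=\min(\delta_{R,z},1)$. If $(R,z)$ and $(R',z')$ are two distinct elements of $\cK(\omega)\cap W$, then $R'$ is a nucleus of $\cK(\omega)$ different from $R$, so $|R-R'|\geq\delta_{R,z}\geq\bar\delta_{R,z}$, and symmetrically $|R-R'|\geq\bar\delta_{R',z'}$; hence $|R-R'|\geq\frac12(\bar\delta_{R,z}+\bar\delta_{R',z'})$, which means that the open balls $B_{\bar\delta_{R,z}/2}(R)$, $(R,z)\in\cK(\omega)\cap W$, are pairwise disjoint. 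Each of them lies in the bounded set $\{x\in\R^3:\mathrm{d}(x,W)<1/2\}$, so comparing Lebesgue measures yields $\sum_{(R,z)\in\cK(\omega)\cap W}\bar\delta_{R,z}^3\leq C_W$ with $C_W:=(6/\pi)\,\big|\{x\in\R^3:\mathrm{d}(x,W)<1/2\}\big|$.

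It then remains to split $\cK(\omega)\cap W$ into the nuclei with $\delta_{R,z}\geq1$ and those with $\delta_{R,z}<1$. In the first case $\bar\delta_{R,z}=1$, so the number of such nuclei is at most $\sum\bar\delta_{R,z}^3\leq C_W$; in the second case $1/\delta_{R,z}>1$, so the number of such nuclei is at most $\sum_{\delta_{R,z}<1}1/\delta_{R,z}\leq X_1(\omega)$. Adding the two counts gives $X_0(\omega)\leq X_1(\omega)+C_W$, as desired.

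I do not expect any genuine obstacle: the argument is entirely pointwise in $\omega$ and does not even use stationarity, only the local finiteness and distinctness of the nuclei, which are guaranteed almost surely by $X_1\in L^p(\Omega)$. The single point that deserves care is the truncation at $1$: without it, an isolated nucleus in $W$ whose nearest neighbor is far away would not be controlled by the local packing estimate, and truncating $\delta_{R,z}$ is exactly what puts the ``isolated'' and ``crowded'' nuclei on the same footing.
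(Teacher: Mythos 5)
Your proof is correct, but it takes a genuinely different route from the paper's. You establish a pointwise additive bound $X_0(\omega)\leq X_1(\omega)+C_W$ via a sphere-packing argument: the truncated balls $B_{\bar\delta_{R,z}/2}(R)$ are disjoint and fit in a fixed neighborhood of $W$, which bounds the number of ``isolated'' nuclei (those with $\delta_{R,z}\geq 1$), while the ``crowded'' ones are absorbed into $X_1$. The paper instead exploits the boundedness of $W$ in a cruder but faster way: on the event $\{X_0>1\}$, every nucleus in $W$ has another nucleus of $W$ at distance at most $\operatorname{diam}(W)$, so $\delta_{R,z}\leq\operatorname{diam}(W)$ for all of them, giving the multiplicative bound $X_1\,\1_{X_0>1}\geq X_0\,\1_{X_0>1}/\operatorname{diam}(W)$; on $\{X_0\leq1\}$ the contribution to $\bE(X_0^p)$ is trivially at most $1$ since $X_0$ is integer-valued. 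This yields $\bE(X_0^p)\leq 1+\operatorname{diam}(W)^p\,\bE(X_1^p)$ with no packing argument at all. Your version is slightly more work but produces the cleaner pointwise inequality $X_0\leq X_1+C_W$, which is a reusable fact; the paper's version is shorter and avoids introducing the truncation $\bar\delta_{R,z}$, at the cost of splitting into two events and working at the level of $p$-th moments rather than pointwise.
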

\begin{proof}[Proof of Lemma~\ref{lem:borne-X0}]
Recall that $X_0\geq 0,$ so we have
\begin{equation}
\label{eq:separation}
\bE\left(|X_0|^p\right) = \bE\left(X_0^p\right) = \bE\left(X_0^p
  {\1}_{X_0 \leq 1}\right) +\bE\left(X_0^p
  {\1}_{X_0 > 1}\right).
\end{equation}
Next, we point out that if $X_0>1$, then, for any $R\in {\cK}\cap
{W},$ we have $\delta_{R,z} \leq \operatorname{diam}(W).$ Hence, 
$$X_1 \1_{X_0 > 1} \geq \frac{X_0}{\operatorname{diam}(W)} \1_{X_0>1}.$$
We insert this estimate into \eqref{eq:separation}, finding
$\bE\left(|X_0|^p\right) \leq 1 + \operatorname{diam}(W)^p
\bE(X_1^p)$,
which concludes the proof.
\end{proof}
\subsection*{Step 2. Upper bounds}
In this first step we will establish some upper bounds in average, that is for $\norm{\cF_{T,\mu}(\cdot,D)}_{L^q(\Omega)}$. It is for these uper bounds that we will need the assumption~\eqref{eq:assumption_nuclei} which gives estimates (in average) on the number of nuclei per unit volume, as well as on the smallest distance between them. The lower bounds will on the contrary be almost uniform in $\omega\in\Omega$ (up to a small error term which is easily controlled, see Lemma~\ref{lem:Graf-Schenker} and Remark~\ref{rmk:estim_error_GS} below). The proofs of the results in this first step are rather technical but the strategy is similar to that used in the deterministic case in~\cite{HaiLewSol_1-09}.

Our first result will be that, under our assumption~\eqref{eq:assumption_nuclei} on the nuclei, the free energy is bounded above by a constant times the volume $|D|$, in average. The following is the random equivalent to~\cite[Prop. 2]{HaiLewSol_1-09}.

\begin{lemma}[Upper bound]\label{lem:upper_bound}
Under the hypotheses of Theorem~\ref{thm:thermo-limit}, we have, for any regular domain $D\in\cR_{a,\epsilon}$
\begin{equation}
\bE\;\Big|\cF_{T,\mu}(\cdot,D)\Big|^{p/2}\leq C\,|D|^{p/2}
\label{eq:simple_upper_bd}
\end{equation}
where we recall that $p$ appears in~\eqref{eq:assumption_nuclei}, and where $C$ depends on $a>0$, $\epsilon>0$, $T\geq0,$ $p$ and $\mu\in\R$, but not on $D\in\cR_{a,\epsilon}$. 

In particular, if we denote by $\Gamma(\omega)$ the electronic density matrix in Fock space $\mathscr{F}$ of any optimal state for $\cF_{T,\mu}(\omega,D)$, we have the bound in average for the electronic density $\rho_\Gamma$ and the kinetic energy
\begin{equation}
\bE\;\left(\int_{D}\rho_\Gamma+\int_{D}\rho_\Gamma^{5/3}+\tr_{\mathscr{F}}\Big(\sum_i (-\Delta)_i\;\Gamma\Big)\right)^{q}\leq C\,|D|^{q}
\label{eq:upper-bd-average}
\end{equation}
for all $1\leq q\leq p/2$.
\end{lemma}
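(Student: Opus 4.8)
The strategy is to construct an explicit trial density matrix $\Gamma$ on the Fock space $\mathscr F$ over $L^2(D)$, evaluate the right-hand side of the variational formula~\eqref{eq:variational}, and bound the resulting quantity by $C|D|$ plus a term controlled by the random variable $X_1$ evaluated on the cells meeting $D$. Concretely, I would tile $D$ by unit cells $W+j$, $j\in\ccL$, and inside each cell that lies in $D$ place, near each nucleus $(R,z)\in\cK(\omega)$, a localized electronic state carrying charge (approximately) $z$, so that the trial state is locally neutral cell by cell; this is exactly the trial-state construction of~\cite[Prop.~2]{HaiLewSol_1-09} adapted to the (random but locally finite) nuclear configuration. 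The key point is that, by Newton's theorem / screening, a locally neutral configuration has Coulomb energy that is \emph{extensive}: the nucleus–nucleus, nucleus–electron and electron–electron interactions between distinct cells cancel to leading order and leave only a bounded-per-cell remainder, plus the self-energy of each cell. The self-energy of a cell containing nuclei at mutual distances $\geq \delta_{R,z}$ is bounded by $C\sum_{(R,z)\in\cK(\omega)\cap (W+j)} \delta_{R,z}^{-1} = C\,X_1(\tau_j\omega)$, using the near-neighbor distance; the kinetic energy and entropy of the trial state contribute $C(1+T^{5/2}+\mu_+^{5/2})$ per cell by the usual Lieb–Thirring / semiclassical estimate. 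The boundary cells (those meeting $\partial D$) are handled as in~\cite{HaiLewSol_1-09}: their number is $O(a|D|^{2/3})$ by the $a$-regular boundary condition, and each contributes a controlled amount, so the boundary error is lower order.

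\medskip

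Putting this together, the pointwise (in $\omega$) bound takes the schematic form
\begin{equation}
\cF_{T,\mu}(\omega,D)\leq C\big(1+T^{5/2}+\mu_+^{5/2}\big)|D| + C\sum_{\substack{j\in\ccL\\ (W+j)\cap D\neq\emptyset}} X_1(\tau_j\omega),
\label{eq:planUB}
\end{equation}
where the sum has $O(|D|)$ terms. Now I take the $L^{p/2}(\Omega)$ norm. Since $\|X_1\|_{L^p(\Omega)}<\ii$ by~\eqref{eq:assumption_nuclei} and $p\geq 2$, we have $X_1\in L^{p/2}(\Omega)$, and by the triangle inequality in $L^{p/2}$ together with stationarity (each $\|X_1(\tau_j\cdot)\|_{L^{p/2}} = \|X_1\|_{L^{p/2}}$),
\begin{equation}
\Big\|\sum_{j} X_1(\tau_j\cdot)\Big\|_{L^{p/2}(\Omega)} \leq \sum_j \|X_1\|_{L^{p/2}(\Omega)} \leq C|D|\,\|X_1\|_{L^{p/2}(\Omega)}.
\label{eq:planUB2}
\end{equation}
Combined with the stability lower bound~\eqref{eq:stability-matter}, which gives $\cF_{T,\mu}(\omega,D)\geq -C(1+T^{5/2}+\mu_+^{5/2})|D|$ almost surely and hence $\|[\cF_{T,\mu}(\cdot,D)]_-\|_{L^{p/2}}\leq C|D|$, we get $\|\cF_{T,\mu}(\cdot,D)\|_{L^{p/2}(\Omega)}\leq C|D|$, which is~\eqref{eq:simple_upper_bd} after taking the $(p/2)$-th power. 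For the second part~\eqref{eq:upper-bd-average}, I extract from the variational formula the standard a priori bounds: for the optimal $\Gamma(\omega)$, the kinetic energy $\tr_{\mathscr F}(\sum_i(-\Delta)_i\Gamma)$ and, via the Lieb–Thirring inequality, $\int_D \rho_\Gamma^{5/3}$, as well as $\int_D \rho_\Gamma$ (using $\mu$ and positivity of the Coulomb self-energy of the electrons), are all bounded by $\cF_{T,\mu}(\omega,D) + C(1+T^{5/2}+\mu_+^{5/2})|D|$ pointwise — this is again from~\cite{HaiLewSol_1-09,HaiLewSol_2-09}. Taking $L^q$ norms for $q\leq p/2$ and using the first part gives~\eqref{eq:upper-bd-average}.

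\medskip

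The main obstacle is establishing the pointwise bound~\eqref{eq:planUB} with the \emph{correct} dependence on the nuclear configuration, i.e.\ showing that the only configuration-dependent cost is $\sum_j X_1(\tau_j\omega)$ and not something worse (like the number of nuclei squared, which would only be in $L^{p/2}$ if $X_0\in L^p$, a weaker conclusion). This requires: (a) choosing the electronic trial state in each cell so that it screens the nuclei at the scale of one cell — one can use, e.g., smeared characteristic functions of small balls around each nucleus, with radii chosen $\sim \delta_{R,z}$ so the one-body states remain orthogonal and the kinetic cost per nucleus is $O(\delta_{R,z}^{-2})$, which after the neutrality bookkeeping still only enters through $X_1$ up to the bounded contribution; and (b) carefully summing the inter-cell Coulomb interactions using the Graf–Schenker decomposition or Newton's theorem to see the cancellation. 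All of this is parallel to the deterministic argument of~\cite[Prop.~2]{HaiLewSol_1-09}, the only new ingredient being that the per-cell self-energy is now the random variable $X_1(\tau_j\omega)$ rather than a deterministic constant; I therefore expect to be able to import most of the estimates from there and concentrate the new work on verifying that the randomness enters exactly through~\eqref{eq:planUB}. A secondary technical point is handling the boundary cells when nuclei sit very close to $\partial D$ — one uses the $\varepsilon$-cone property of $D$ and $\R^3\setminus D$ to ensure the localized trial states fit, exactly as in the deterministic case.
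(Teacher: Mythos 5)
Your overall strategy (explicit screening trial state, decomposition into cells, triangle inequality in $L^{p/2}(\Omega)$, stability of matter to control the negative part) is the same as the paper's, and the treatment of \eqref{eq:upper-bd-average} via stability with halved kinetic energy plus Lieb--Thirring is also the paper's argument. However, there is a genuine gap in the central step: your pointwise estimate \eqref{eq:planUB} asserts that the configuration-dependent part of the upper bound is $\sum_j X_1(\tau_j\omega)$, i.e.\ linear in the inverse nearest-neighbour distances, and you then take the $L^{p/2}$ norm of a sum of copies of $X_1$. This is not what the trial state gives. The kinetic cost of squeezing radial electrons of total charge $z\leq\bar Z$ into a ball of radius $\sim\delta'_{R,z}:=\min(\delta_{R,z},\epsilon)$ is $\sim z^{5/3}/(\delta'_{R,z})^2$, so the per-cell cost is $\sum_{(R,z)\in\cK\cap W_j} (\delta'_{R,z})^{-2}$, which is the variable $X'_2(\tau_j\omega)$ of the paper, \emph{not} $X_1(\tau_j\omega)$. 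You acknowledge the $\delta^{-2}$ scaling but then assert without justification that ``after the neutrality bookkeeping'' it ``only enters through $X_1$''. That assertion is false: $\sum \delta^{-2}$ is not controlled by $\sum\delta^{-1}$ when some $\delta$'s are tiny. The correct chain is $X'_2\leq\big(\sum 1/\delta'\big)^2\leq(X_0/\epsilon+X_1)^2$, which lies in $L^{p/2}(\Omega)$ precisely because $X_0,X_1\in L^p(\Omega)$. This is the actual reason $p\geq 2$ is needed and the actual reason the norm is taken in $L^{p/2}$; your version would (wrongly) suggest that $X_1\in L^{p/2}$ suffices.

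The gap does not change the final conclusion, because if you replace $X_1(\tau_j\omega)$ by $X'_2(\tau_j\omega)$ in \eqref{eq:planUB} and run the same triangle-inequality argument, you get $\|\sum_j X'_2(\tau_j\cdot)\|_{L^{p/2}}\leq C|D|\,\|X'_2\|_{L^{p/2}}\leq C|D|\,\|X_0/\epsilon+X_1\|_{L^p}^2<\infty$, which is exactly the paper's bound. But as written your proof identifies the wrong random variable as controlling the upper bound. You should also record, as the paper does, the second configuration-dependent contribution: the dipole--dipole interaction between screened nuclei near $\partial D$, which produces a term of order $|D|^{2/3}\log(1+|D|)$ involving $X_0^2$ and $X_1^2$ (again the squares, hence again $L^{p/2}$). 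Finally, the closing inequality $\|[\cF_{T,\mu}(\cdot,D)]_-\|_{L^{p/2}}\leq C|D|$ and the combination with the upper bound to get \eqref{eq:simple_upper_bd} is fine.
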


\medskip

\begin{proof}[Proof of Lemma~\ref{lem:upper_bound}]
Our proof follows the one of~\cite[Prop. 2]{HaiLewSol_2-09} where some missing details can be found. There, the nuclei were all assumed to have a finite distance to their nearest neighbor. Our task here is to exhibit the dependence of this upper bound in terms of these parameters.

We only write the proof for $\mu=T=0$, the general case being similar~\cite{HaiLewSol_2-09}, and we denote $\cF=\cF_{0,0}$.
Let $D$ be a regular domain and $\omega\in\Omega$. Recall that we have stability of matter which tells us that $\cF(\omega,D)\geq -C|D|$ almost surely. We therefore only have to prove an upper bound, which is done by constructing an appropriate trial state. For each nucleus $(R,z)\in \cK(\omega)\cap D$, we place in $D$ radial electrons of total charge $z$, in a small ball of radius 
$\delta'_{R,z}/8$ where we have defined for convenience
$$\delta'_{R,z}:=\min\big(\delta_{R,z},\epsilon\big).$$ 
Recall that $\delta_{R_,z}(\omega)$ is the distance of the nucleus $(R,z)$ to the closest nucleus in the system and that $\epsilon$ quantifies the cone property of the set $D$. We want to put this ball as close to the nucleus as possible. When the nucleus is at a distance $>\epsilon$ to the boundary $\partial D$, we can simply put the radial electrons on top of the nucleus, leading to a vanishing Coulomb potential outside of the support of the electrons, by Newton's theorem. When the nucleus is at a distance $\leq\epsilon$ to the boundary of $D$, we use the cone property and place the ball at a distance $\delta'_{R,z}/4$ to the nucleus, in the small cone which is inside $D$. This construction is the same as in~\cite{HaiLewSol_2-09} except that our electrons live in small balls depending on $\delta'_{R,z}$. In~\cite{HaiLewSol_2-09} they were all living in balls of constant radius $\epsilon/8$.

To simplify our estimate, we use the notation 
$$(\partial D)_\epsilon:=\{x\in\R^3\ :\ {\rm d}(x,\partial D)\leq \epsilon\}.$$
Since $D$ has an $a$-regular boundary, we have $|(\partial D)_\epsilon|\leq a\epsilon\,|D|^{2/3}$.

The total energy of our trial state contains several terms. The kinetic energy used to squeeze the electrons in their small balls can be estimated by a constant times
$$\sum_{(R,z)\in\cK(\omega)\cap D}\frac{z^{5/3}}{(\delta'_{R,z})^2}.$$
Here we get a coefficient $z^{5/3}$ because of the Pauli principle for the electrons. We first have to put $z$ electrons in a ball of radius 1. For this we just fill in the first eigenvalues of the Dirichlet Laplacian of the unit ball and then average over rotations to make our state radial. Then we scale these electrons to make them fit in a ball of radius $\delta'_{R,z}$. Here the term $z^{5/3}$ is not a problem since the charges $z$ are uniformly bounded by assumption, $z\leq\overline{Z}$. But later this difficulty will pop up again.

The only other term is the interaction between all the charges in $(\partial D)_\epsilon$. The interaction between each nucleus and its screening electronic cloud is negative and we can discard it for an upper bound. For later purposes, we however note that it can be estimated by a constant times
\begin{equation}
\sum_{(R,z)\in\cK(\omega)\cap (\partial D)_\epsilon}\frac{z^2}{\delta'_{R,z}}.
\end{equation}
We are left with the dipole-dipole interactions, which we denote by 
$${\rm Dip}(R,R')=\frac{zz'}{|R-R'|}+\frac{zz'}{|X-X'|}-\frac{zz'}{|X-R'|}-\frac{zz'}{|X'-R|}$$
(even if it also depends on $\omega$). Here $X$ and $X'$ are the positions of the electrons which are such that $|R-X|\leq \delta'_{R,z}/4$ and $|R'-X'|\leq \delta'_{R',z'}/4$. When $|R-R'|$ is sufficiently large, this interaction behaves like $|R-R'|^{-3}$. When $|R-R'|$ is small, we use that the electrons are at a small distance to the nuclei. For instance
$$|R-X'|\geq |R-R'|-|R'-X'|= |R-R'|-\delta_{R',z'}'/4\geq \frac34|R-R'|.$$
Similarly, $|X-X'|\geq |R-R'|/2$. All in all, we deduce that 
\begin{equation}
\big|{\rm Dip}(R,R')\big| \leq \frac{Czz'}{|R-R'|\big(1+|R-R'|^2\big)}.
\label{eq:estim_dipoles_distance} 
\end{equation}
Our final bound on the energy is therefore of the form
\begin{multline}
\cF(\omega,D)\\
\leq C\left(\sum_{(R,z)\in\cK(\omega)\cap D}\frac{z^{5/3}}{(\delta'_{R,z})^2}+\sum_{\substack{(R,z),\,(R',z')\in\cK(\omega)\cap (\partial D)_\epsilon\\ R\neq R'}}\frac{z\,z'}{|R-R'|(1+|R-R'|^2)}\right) 
\label{eq:upper_bound_1}
\end{multline}

In order to simplify our reasoning, we now cover $D$ and $(\partial
D)_\epsilon$ by translations of the domain $W$. This means we write
$D\subset\cup_{j\in\cJ}W_j$ and $(\partial
D)_\epsilon\subset\cup_{j\in\partial\cJ}W_j$ where $\cJ\subset\ccL$ and
$\partial\cJ\subset\ccL$ are such that $\#\cJ\leq C|D|$ and
$\#\partial\cJ\leq C|D|^{2/3}$, by the regularity of $D$ (Figure~\ref{fig:tiling}). We also use
the notation $W_j:=W-j$. 
\begin{figure}
\includegraphics{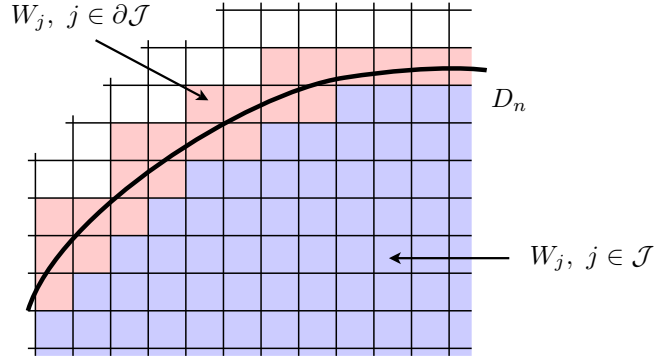}
\caption{The sets of indices $\cJ$ and $\partial\cJ$ in the simple example of a cubic lattice.\label{fig:tiling}}
\end{figure}
Then, the first term
of~\eqref{eq:upper_bound_1} can be estimated by
$$\bar{Z}^{5/3}\sum_{(R,z)\in\cK(\omega)\cap D}\frac{1}{(\delta'_{R,z})^2}\leq \bar{Z}^{5/3}\sum_{j\in\cJ}X'_{2}(\tau_j\omega)\leq C\frac{|D|}{\#\cJ}\sum_{j\in\cJ}X'_{2}(\tau_j\omega)$$
by stationarity, and with the definition
\begin{equation}
X'_{p}(\omega):=\sum_{(R,z)\in\cK(\omega)\cap {W}}\frac{1}{\delta'_{R,z}(\omega)^{p}}.
\label{eq:def_Delta_p} 
\end{equation}
Note that, since $(\delta'_{R,z})^{-1} \leq \epsilon^{-1} +
(\delta_{R,z})^{-1},$
\begin{multline*}
X'_{2}(\omega)=\sum_{(R,z)\in\cK(\omega)\cap
  {W}}\frac{1}{(\delta'_{R,z})^2} \leq
\left(\sum_{(R,z)\in\cK(\omega)\cap
    {W}}\frac{1}{\delta'_{R,z}}\right)^2 \\
\leq  \left(\frac 1 \epsilon\sum_{(R,z)\in\cK(\omega)\cap
    {W}}1 + \sum_{(R,z)\in\cK(\omega)\cap
    {W}}\frac{1}{\delta'_{R,z}}\right)^2 
 =  \left(\frac{X_0}{\epsilon} + X_1\right)^2.
\end{multline*}
This shows that $X'_{2}\in L^{p/2}(\Omega)$, under our
assumption~\eqref{eq:assumption_nuclei} and using Lemma~\ref{lem:borne-X0}.
By the triangular inequality, we get the estimate
\begin{equation*}
\norm{\sum_{(R,z)\in\cK(\omega)\cap D}\frac{z}{(\delta'_{R,z})^2}}_{L^{p/2}(\Omega)}\leq C|D|\,\norm{X'_{2}}_{L^{p/2}(\Omega)}
\end{equation*}
Note that we implicitly use here that $p\geq2$.

We now claim that the second term of~\eqref{eq:upper_bound_1} can be estimated as follows
\begin{multline}
\norm{\sum_{\substack{(R,z),\,(R',z')\in\cK(\omega)\cap (\partial D)_\epsilon\\ R\neq R'}}\frac{z\,z'}{|R-R'|(1+|R-R'|^2)}}_{L^{p/2}(\Omega)}\\
\leq C|D|^{2/3}\left(\norm{X_1}^2_{L^{p}(\Omega)}+\log(1+|D|)\norm{X_0}_{L^{p}(\Omega)}^2\right).
\label{eq:estim_interaction_dipoles}
\end{multline}
When $R$ and $R'$ belong to two adjacent domains $W_j$ and $W_k$, or to the same domain $W_j=W_k$, we only use that 
$$\frac{1}{|R-R'|}\leq \frac12\left(\frac{1}{\delta_{R,z}}+\frac{1}{\delta_{R',z'}}\right)$$
by definition of $\delta_{R,z}$.
When $R$ and $R'$ belong to two domains $W_j$ and $W_k$ which are separated by a finite distance, we use the estimate 
$$\sum_{(R,z)\in\cK\cap W_j}\sum_{(R',z')\in\cK\cap W_k}\frac{z\,z'}{|R-R'|(1+|R-R'|^2)}\leq C\frac{X_{0}(\tau_j\omega)\,X_{0}(\tau_k\omega)}{1+|j-k|^3}$$
where we recall that $X_{0}(\omega)$ is the total number of nuclei in the unit cell. The final estimate on the second term of~\eqref{eq:upper_bound_1} is 
\begin{equation}
C\sum_{\substack{j,k\in\partial\cJ\\ |j-k|\leq C}}X_{1}(\tau_j\omega)X_0(\tau_k\omega)+C\sum_{\substack{j,k\in\partial\cJ\\ j\neq k}}\frac{X_{0}(\tau_j\omega)\,X_{0}(\tau_k\omega)}{1+|j-k|^3}.
\label{eq:upper_bound_2nd_term} 
\end{equation}
We use that 
$$X_{0/1}(\tau_j\omega)\,X_{0/1}(\tau_k\omega)\leq\frac12\Big(X_{0/1}(\tau_j\omega)^2+X_{0/1}(\tau_k\omega)^2\Big)$$
and obtain
$$\norm{\sum_{\substack{j,k\in\partial\cJ\\ |j-k|\leq C}}X_{1}(\tau_j\omega)X_0(\tau_k\omega)}_{L^{p/2}(\Omega)}\leq C|D|^{2/3}\left(\norm{X_{1}}_{L^{p}(\Omega)}^2+\norm{X_{0}}_{L^{p}(\Omega)}^2\right).$$
Similarly,
\begin{multline*}
\sum_{\substack{j,k\in\partial\cJ\\ j\neq k}}\frac{X_{0}(\tau_j\omega)\,X_{0}(\tau_k\omega)}{1+|j-k|^3}\leq \sum_{\substack{j,k\in\partial\cJ\\ j\neq k}}\frac{X_{0}(\tau_j\omega)^2}{1+|j-k|^3}
\leq C\log(\#\partial\cJ)\sum_{j\in\partial\cJ}X_{0}(\tau_j\omega)^2\\
\leq C|D|^{2/3}\log(1+|D|)\left(\frac{1}{\#\partial\cJ}\sum_{j\in\partial\cJ}X_{0}(\tau_j\omega)^2\right).
\end{multline*}
Therefore
$$\norm{\sum_{\substack{j,k\in\partial\cJ\\ j\neq k}}\frac{X_{0}(\tau_j\omega)\,X_{0}(\tau_k\omega)}{1+|j-k|^3}}_{L^{p/2}(\Omega)}\leq C|D|^{2/3}\log(1+|D|)\,\norm{X_{0}}_{L^p(\Omega)}^2$$
and~\eqref{eq:estim_interaction_dipoles} is proved.
This concludes the proof of~\eqref{eq:simple_upper_bd}.

The bounds~\eqref{eq:upper-bd-average} follows from the Lieb-Thirring inequality and the stability of matter. By~\eqref{eq:stability-matter} with a $1/2$ in front of the kinetic energy instead of a $1$, we see that the total energy is bounded from below by
$$\cF_{T,\mu}(\omega,D)\geq \frac12\,\tr_{\mathscr{F}}\left(\sum_i(-\Delta)_i\Gamma(\omega)\right)-C|D|$$
almost surely. Hence our upper bound on $\cF_{T,\mu}(\omega,D)$ yields 
$$\norm{\tr_{\mathscr{F}}\left(\sum_i(-\Delta)_i\Gamma(\omega)\right)}_{L^{p/2}(\Omega)}\leq C|D|$$
for a regular domain $D$.
By the Lieb-Thirring inequality we have
$$\tr_{\mathscr{F}}\left(\sum_i(-\Delta)_i\Gamma(\omega)\right)\geq C\int_{D}\rho_{\Gamma(\omega)}^{5/3}\geq C|D|^{-2/3}\left(\int_D\rho_{\Gamma(\omega)}\right)^{5/3}$$
almost surely, which gives~\eqref{eq:upper-bd-average}. This ends the proof of Lemma~\ref{lem:upper_bound}.
\end{proof}

To simplify some estimates from below that we will derive later, we now introduce an auxiliary free energy $\underline{\cF}_{T,\mu}(\omega,D)$ obtained by minimizing over the charges $z$ of the nuclei $(R,z)\in\cK(\omega)\cap D$, which are at a distance $\leq2\epsilon$ from the boundary of $D$. This means we replace the charge $z(\omega)$ of each of these nuclei by $z'$ and we minimize over these $z'$s, under the constraints that $0\leq z'\leq z(\omega)$. This trick simplifies some lower bounds and it was also used in~\cite{HaiLewSol_2-09}. The idea is to show the existence of the thermodynamic limit for $\underline{\cF}_{T,\mu}$ and, only in the end, to prove that this implies the result for the original function ${\cF}_{T,\mu}$. This is done by using Lemma~\ref{lem:A4} below, and the fact that
\begin{equation}
\underline{\cF}_{T,\mu}(\omega,D) \leq {\cF}_{T,\mu}(\omega,D)
\label{eq:relation-opt-charges} 
\end{equation}
for all $D$ and almost all $\omega$. Note that the random variable $\underline{\cF}_{T,\mu}(\omega,D)$ satisfies the same stability of matter inequality~\eqref{eq:stability-matter} as ${\cF}_{T,\mu}(\omega,D)$, and by~\eqref{eq:relation-opt-charges} it satisfies the same upper bound~\eqref{eq:simple_upper_bd} as ${\cF}_{T,\mu}(\omega,D)$. It is also a stationary function in the sense of~\eqref{eq:stat-E-N}. The following is the equivalent of~\cite[Prop. 4]{HaiLewSol_2-09} in the random case.

\begin{lemma}[Control in average of charge variations at the boundary]\label{lem:A4}
Let $D'\in\cR_{a',\epsilon'}$ and $D\in\cR_{a,\epsilon}$ be two regular domains such that $D'\subset D$ and ${\rm d}(\partial D,\partial D')\geq C$. Then we have
\begin{equation}
\bE\big({\cF}_{T,\mu}(\cdot,D)\big)\leq \bE\big(\underline\cF_{T,\mu}(\cdot,D')\big)+ C|D\setminus D'|+C|D|^{13/15}
\label{eq:A4}
\end{equation}
where $C>0$ depend on $a$, $a'$, $\epsilon$, $\epsilon'$, $W$, $\mu$ and $T$, but not on $D$ and $D'$. 
\end{lemma}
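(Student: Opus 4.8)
The plan is to bound $\bE(\cF_{T,\mu}(\cdot,D))$ from above by constructing a global trial state in $D$ out of an optimal state for $\underline{\cF}_{T,\mu}(\cdot,D')$ on the inner region, together with an explicit filling of the corridor $D\setminus D'$ by screened electronic clouds attached to the nuclei lying there. This is exactly the strategy of \cite[Prop. 4]{HaiLewSol_2-09}; the only novelty is to carry out the bookkeeping in $L^1(\Omega)$ rather than pointwise. First I would fix an optimal (mixed) state $\underline\Gamma(\omega)$ for $\underline\cF_{T,\mu}(\omega,D')$, which by construction has already optimized the charges of nuclei within distance $2\epsilon'$ of $\partial D'$; in particular those charges can be taken to be $0$, so the corresponding nuclei have effectively been removed and the state $\underline\Gamma(\omega)$ does not "see" the boundary layer of $D'$. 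I would then tensor $\underline\Gamma(\omega)$ (supported away from $\partial D'$) with a state localizing screening clouds of total charge $z$ in balls of radius $\sim\delta'_{R,z}/8$ around each nucleus $(R,z)\in\cK(\omega)\cap(D\setminus D')$, using the $\epsilon$-cone property of $D$ to fit the balls inside $D$ near $\partial D$, exactly as in the proof of Lemma~\ref{lem:upper_bound}.

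The energy of this trial state splits into: (a) the inner energy $\bH(\omega,D')$ contribution, which is $\underline\cF_{T,\mu}(\omega,D')$ up to the entropy/kinetic bookkeeping already contained in the variational formula~\eqref{eq:variational}; (b) the self-energy of the corridor clouds plus their interaction with their own nuclei, which by Newton's theorem vanishes outside the balls and otherwise is controlled by $\sum_{(R,z)\in\cK(\omega)\cap(D\setminus D')} z^{5/3}(\delta'_{R,z})^{-2}$ for the kinetic cost and $\sum z^2(\delta'_{R,z})^{-1}$ for the (negative, hence droppable for the upper bound, but needed for later averaging) self-interaction; and (c) the residual dipole--dipole interactions between corridor clouds, between corridor clouds and the inner nuclei, and between inner and corridor charges through the boundary layer of $D'$ which $\underline\Gamma$ ignored. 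Terms (a) averages to $\bE(\underline\cF_{T,\mu}(\cdot,D'))$. Term (b), covered by translates of $W$ as in Lemma~\ref{lem:upper_bound}, averages by stationarity to $C|D\setminus D'|\,\norm{X'_2}_{L^1(\Omega)}\leq C|D\setminus D'|$ since $X'_2\in L^{p/2}(\Omega)\subset L^1(\Omega)$ under~\eqref{eq:assumption_nuclei}. Term (c) is handled by the decay estimate~\eqref{eq:estim_dipoles_distance}: the interaction of a cloud in cell $W_j$ with one in cell $W_k$ is $\leq C X_0(\tau_j\omega)X_0(\tau_k\omega)/(1+|j-k|^3)$, so after averaging (using $ab\leq(a^2+b^2)/2$, stationarity, $X_0\in L^1(\Omega)$ by Lemma~\ref{lem:borne-X0}) and summing over the $O(|D|^{2/3})$ boundary-layer cells one gets a bound of order $|D|^{2/3}\log(1+|D|)$, which is $\leq C|D|^{13/15}$ for $|D|$ large, absorbing any polylogarithmic loss. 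The short-range part of (c), where cells are adjacent, gives $\sum_{|j-k|\le C} X_1(\tau_j\omega)X_0(\tau_k\omega)$ over $\partial\cJ$, again of order $|D|^{2/3}$ in $L^1$-norm. Adding the small constant--distance correction one may choose the clean exponent $13/15$ as stated.

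The main obstacle I anticipate is the matching at $\partial D'$: one must make sure the inner state $\underline\Gamma(\omega)$ and the corridor state can be glued into a single admissible Fock-space density matrix without spurious energy, i.e. that the electrons of $\underline\Gamma$ stay strictly inside $D'$ minus its boundary layer (so that the nuclei whose charges were set to $0$ really do not contribute), and that no electronic mass of $\underline\Gamma$ overlaps the corridor clouds. This is precisely why the auxiliary functional $\underline\cF$ with optimized boundary charges was introduced, and why one needs ${\rm d}(\partial D,\partial D')\geq C$; the verification is the technical heart, identical in structure to \cite[Prop. 4]{HaiLewSol_2-09}, and I would only sketch it, pointing to that reference for the geometric localization details. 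A secondary subtlety is keeping all the random sums in $L^1(\Omega)$ (not $L^{p/2}$): here it suffices to note that $|D\setminus D'|$-type terms only require $L^1$ control of $X_0, X_1, X'_2$, all of which follow from~\eqref{eq:assumption_nuclei} with $p\ge 2$ together with Lemma~\ref{lem:borne-X0}, so no interpolation is needed at this stage and the constant $C$ depends only on the stated parameters.
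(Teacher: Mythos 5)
Your construction of the trial state matches the paper's (optimal state for $\underline\cF(\cdot,D')$ inside, screening clouds in the corridor using $\delta'_{R,z}$-balls and the cone property, compensating electrons outside $D'$ for the excess charges $\delta z=z-z_{\rm opt}$ of nuclei near $\partial D'$). But there is a genuine gap in how you handle what you call term (c), and this is where the paper's proof has its real technical content.

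The issue is that the excess charges $\delta z$ of the nuclei near $\partial D'$ and the compensating electrons of charge $\delta Z_j$ outside $D'$ do not interact with the inner system in a dipole--dipole fashion: they produce a genuine Coulomb potential $W_{\rm in}+W_{\rm out}$ which is felt by the \emph{quantum} electronic density $\rho_\Gamma$ of the optimal state $\Gamma$ for $\underline\cF(\cdot,D')$, throughout the bulk of $D'$, not just by other nearby dipoles. The estimate~\eqref{eq:estim_dipoles_distance} only controls interactions between pairs of neutral screened clusters and is inapplicable here. Because $\rho_\Gamma$ is unknown (apart from the averaged Lieb--Thirring bounds of~\eqref{eq:upper-bd-average}), there is no way to estimate $\int\rho_\Gamma(W_{\rm in}+W_{\rm out})$ by a boundary-area term directly. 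The paper instead writes the full error $\cG$ in the form
$\underline\cF(\cdot,D') + \eta\big[-\underline\cF(\cdot,D')-\tfrac1\eta\int\rho_\Gamma(W_{\rm in}+W_{\rm out})+\tfrac1\eta\sum z(W_{\rm in}+W_{\rm out})(R)-\tfrac{I_{\rm class}}{\eta^2}\big]+\eta\,\underline\cF(\cdot,D')+(1+\tfrac1\eta)I_{\rm class}$,
interprets the bracket as the energy of a system in which the outer charges have been multiplied by $-1/\eta$, proves a stability-of-matter bound~\eqref{eq:stability_dipoles} for it via Yukawa-potential comparison and the Conlon--Lieb--Yau estimates, and only then optimizes over $\eta$. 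The optimization $\eta\propto|D|^{-2/15}$ is exactly what produces the exponent $13/15$; your claim that the error is of order $|D|^{2/3}\log|D|$ and that $13/15$ is a cosmetic choice is not correct. Without the $\eta$-trick (or some substitute controlling $\int\rho_\Gamma(W_{\rm in}+W_{\rm out})$ uniformly in the unknown optimal state), the argument does not close.

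A smaller point: you assert that the optimized boundary charges in $\underline\cF(\cdot,D')$ ``can be taken to be $0$,'' so the inner state does not see the boundary layer. That is not what $\underline\cF$ does: the optimal charges satisfy $0\leq z_{\rm opt}\leq z$ and are generally nonzero, and the whole difficulty is precisely that one must restore the full charges $z$ when passing from $\underline\cF(\cdot,D')$ to $\cF(\cdot,D)$. It is the resulting excess $\delta z$ and its compensation which produce the bulk interaction term above.
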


The power $13/15$ is not optimal. In the end of the proof we indicate how to improve it.

\begin{proof}
Like for the proof of Lemma~\ref{lem:upper_bound} above which was based
on~\cite[Prop. 2]{HaiLewSol_2-09}, we now follow the proof
of~\cite[Prop. 4]{HaiLewSol_2-09}, but we keep track of the smallest
distance between the nuclei and we use a slightly different argument in
the end. To simplify our reasoning we assume that $\epsilon=\epsilon'$
and that $T=\mu=0$. We denote $\underline{\cF}:=\underline{\cF}_{0,0}$. The present proof easily carries over to the general case.

\medskip

\noindent\textit{$(i)$ The trial state.}
For any fixed $\omega\in\Omega$ we pick the exact trial state $\Gamma$ for the variational problem $\underline\cF(\cdot,D')$. We then use this trial state to get the upper bound~\eqref{eq:A4}. Recall that for the variational problem $\underline\cF(\cdot,D')$ the charges close to the boundary of $D'$ are optimized. In our system the nuclei do not necessarily have these optimal charges.
In $D\setminus D'$ we have several nuclei which we have to screen. Like in the proof of Lemma~\ref{lem:upper_bound}, we do this by placing electrons in small balls of radius $\delta'_{(R,z)}/4$ as close as possible to each nucleus. We put the electron on top of the nucleus if the nucleus is not too close to the boundary of $D\setminus D'$ and we place it closeby otherwise, thanks to the cone property. When the ball sits on top of the nucleus we call this a ``perfectly screened nucleus'' whereas we call the other ones ``dipoles''. For later purposes we have to make sure that only the nuclei which are very close to the boundary are not completely screened. So we choose
$$\delta'_{R,z}=\min\big(\delta_{R,z},\epsilon/20\big).$$
The factor $1/20$ has no real significance but it is here to ensure that in any cone of size $\epsilon=\epsilon'$ which is completely enclosed in $D\setminus D'$, there is always a ball of radius $\epsilon/5$ in which there cannot be any dipole.

Lastly, we have to cope with the fact that the charges in $D'$ close to the boundary of $D'$ do not have their optimal charge $z_\text{opt}$, but rather the normal charge $z=z_\text{opt}+\delta z$. This additional positive charge $\delta z$ might create important electrostatic errors and we also have to screen it by adding electrons outside of $D'$. In spirit we follow the technique of~\cite{HaiLewSol_2-09}. To any unit cell $W_j$ which is at a distance $\leq\epsilon$ to the boundary of $D'$, we associate a little cone of size $\epsilon$ in $D\setminus D'$, at a distance $\leq\epsilon$ to $W_j$. This cone only depends on $D$ and $D'$, it does not depend on the random variable $\omega$. In this cone we know that there is a ball of radius $\epsilon/5$ at a distance $\geq \epsilon/5$ to the boundary of the cone, hence also at a distance $\geq \epsilon/5$ to the boundary of $D'$. We put the screening electrons in a small ball $B_j$ of fixed radius $\sim\epsilon$ in the cone, at a
  distance $\geq\epsilon/5$ to the boundary of the cone. Their total charge must be equal to
$$\delta Z_j:=\sum_{(R,z)\in \cK\cap W_j}\delta z.$$
Note that each cone can intersect a finite (bounded) number of the other cones. This is because the cells $W_j$ are at distance $\leq 2\epsilon$ to their corresponding cone.
So in the cone we can always reduce the size of the balls in which we put the electrons, to make them all fit without any overlap. On the contrary to~\cite{HaiLewSol_2-09} where there was a smallest distance between all the nuclei, in our situation the additional electrons cannot always be chosen at a finite distance from all the other charges. However our construction guarantees that they can only overlap with perfectly screened nuclei, never with dipoles.

If we summarize the situation, we have in our system (see Figure~\ref{fig:trial})
\begin{itemize}
\item \emph{electrons in $D'$}, chosen to minimize the energy $\underline\cF(\omega,D')$, with the optimal charges close to the boundary;
\item \emph{nuclei in $D'$}. They have a charge which might be larger than the optimal one when they are close to the boundary of $D'$;
\item \emph{classical dipoles outside of $D'$}, at a finite distance $\leq \epsilon/10$ to the boundary of $D\setminus D'$;
\item \emph{electrons in balls of a fixed radius}, with a charge $\delta Z_j$ used to compensate the charges of some of the nuclei in $W_j\subset D'$. They are at a distance $\leq2\epsilon$ but $\geq \epsilon/5$ to the boundary of $D'$. They can never overlap with the dipoles;
\item \emph{perfectly screened nuclei} living in $D\setminus D'$, at a distance at least $\epsilon/10$ to the boundaries of $D$ and $D'$. They do not interact with anybody, except possibly with the radial electron which we might have added in order to compensate some charges in $D'$.
\end{itemize}

\begin{figure}[h]
\centering
\includegraphics{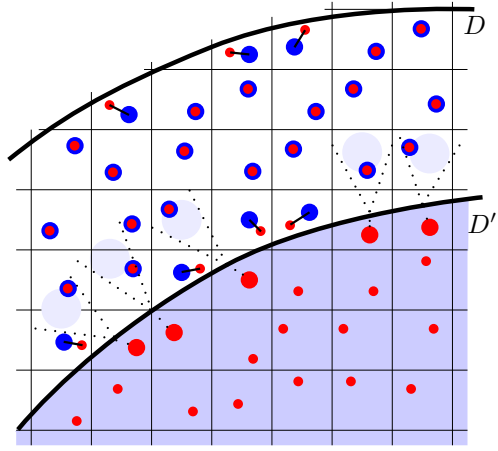}
\caption{The trial state used in the proof of Lemma~\ref{lem:A4}. The picture is here for a cubic lattice with always exactly one nucleus per unit cell.\label{fig:trial}}
\end{figure}

\medskip

\noindent\textit{$(ii)$ Estimates on the energy of the trial state.} 
We call $\cJ$ the set of all the indices such
that $W_j\cap (D\setminus D')\neq\emptyset$, $\partial\cJ$ the set of
indices such that $W_j$ intersects the boundary of $D\setminus D'$, and
$\partial\cJ_0$ the set of indices such that $W_j$ contains a nucleus whose charge has been optimized. For any $j_0\in \partial\cJ_0$, there is a ball $B_{j_0}$ outside of $D'$, containing the additional electrons with total charge $\delta Z_j$.

Now we estimate all the terms. First, we have to pay for the kinetic
energy to put the additional electrons in $D\setminus D'$. Recall that
we have two kinds of electrons, those which are squeezed in small balls
of radii $\delta'_{R,z}$ and those which are used to compensate the
charges of the nuclei close to the boundary of $D'$. The latter live in
a ball of a fixed radius. The total kinetic energy of all these
electrons, which we denote by $K^e_{D\setminus D'}$, is bounded from
above by
\begin{align*}
K_{D\setminus D'}^e&\leq C\sum_{j\in\cJ}\sum_{(R,z)\in\cK\cap W_j}\frac{z^{5/3}}{(\delta'_{R,z})^2}+C\sum_{j\in\cJ_0}\frac{(\delta Z_j)^{5/3}}{\epsilon^2}\\
&\leq C\sum_{j\in\cJ}X'_2(\tau_j\cdot)+C\sum_{j\in\cJ_0}X_0(\tau_j\cdot)^{5/3}.
\end{align*}
This inequality has been obtained by using the first eigenstates of
$-\Delta$ on each ball, computing the associated Hartree-Fock state, and
using it to define the electronic state in each ball.
Note the power $5/3$ which accounts for the fermionic nature of the electrons. Taking the average and using the regularity properties of $D$ and $D'$, we obtain the bound
$$\bE\big(K_{D\setminus D'}^e\big)\leq C\,|D\setminus D'|\; \bE\,X'_2+C\,|D'|^{2/3}\;\bE\,X_0^{5/3}\leq C\big(|D\setminus D'|+|D|^{2/3}\big).$$

Except for the kinetic energy, the nuclei which are completely screened do not participate much in our system, by Newton's theorem. There is only the possibility that they overlap with some electron in a ball $B_j$ with $j\in\cJ_0$. Let us denote by $\rho_j$ the corresponding electronic density in the ball $B_j$ and by $V_R$ the Coulomb potential induced by the nucleus $(R,z)$ together with its electron of size $\delta'_{(R,z)}/4$. We estimate the interaction between them using that each $V_R$ vanishes outside of the ball of radius $\delta'_{R,z}/4$:
\begin{align*}
\int\rho_j\sum_{(R,z)\in\cK\cap W_k}V_R&\leq
C\int_{B_j}\rho_j^{5/3}+C\int_{B_j}\left(\sum_{(R,z)\in\cK\cap B_j\cap W_k}V_R\right)^{5/2}\\
&\leq C\,X_0(\tau_j\omega)^{5/3}+C\sum_{(R,z)\in\cK\cap B_j\cap W_k}\int_{B(\delta'_{R,z})/4}\frac{z^{5/2}}{|x|^{5/2}}\\
&\leq C\,\left(X_0(\tau_j\omega)^{5/3}+\sum_{W_k\cap B_j\neq\emptyset} X_0(\tau_k\omega)\right).
\end{align*}
Here we have used the Lieb-Thirring inequality to control
$\int_{B_j}\rho_j^{5/3}$ by the kinetic energy, which in turn is bounded
by $X_0(\tau_j\omega)^{5/3}.$
We get a similar term for every $j\in\cJ_0$. Summing over such $j$'s, we deduce that the average of this error term is bounded above by a constant times
$|D|^{2/3}(\bE\,X_0^{5/3}+\bE\,X_0)$.

Our conclusion is that, up to an error of the form
$|D\setminus D'|+|D|^{2/3}$, we get the energy of a system in which we
only have the quantum electrons in $D'$, interacting with classical
particles. These are the nuclei in $D'$ (with charges which might be
higher than the optimal ones), as well as classical charges outside of
$D'$. The latter are the dipoles of charges $z$ and $-z$ at a distance
$\leq\epsilon/10$ to the boundaries of $D$ and $D'$, plus the additional
electrons of charges $\delta Z_j$, used to compensate some charges in
$D'$. We can write 
$$
\underline{\cF}(\cdot, D) \leq \cG +
\underline{\cF}(\cdot,D') + C|D\setminus D'| + C |D|^{2/3},
$$
with
\begin{equation}
\cG = \underline{\cF}(\cdot,D')-\int\rho_\Gamma (W_\text{in}+W_\text{out}) + \sum_{(R,z)\in\cK\cap D'}z\big(W_\text{in}(R)+W_\text{out}(R)\big)+I_{\text{class}} ,
\label{eq:form_energy_classical_particles}
\end{equation}
where $W_\text{out}$ is the Coulomb potential induced by all the classical particles sitting outside of $D'$, and 
$$W_\text{in}(x)=\sum_{j\in\cJ_0}\sum_{(R,z)\in\cK\cap W_j}\frac{\delta z}{|R-x|}$$
is the potential corresponding to the excess charges of the nuclei in $D'$. Finally, $I_{\text{class}}$ is the Coulomb interaction between all these classical charges.

In order to estimate $\cG$ in~\eqref{eq:form_energy_classical_particles}, we use the method of~\cite{HaiLewSol_2-09}. We write
\begin{align*}
\cG&=\underline{\cF}(\cdot,D')+\eta\bigg[-\underline{\cF}(\cdot,D')-\frac1\eta\int\rho_\Gamma (W_\text{in}+W_\text{out})\\
&\qquad + \frac1\eta\sum_{(R,z)\in\cK\cap D'}z\big(W_\text{in}(R)+W_\text{out}(R)\big)-\frac{I_{\text{class}}}{\eta^2}\bigg]\\
&\qquad +\eta\, \underline{\cF}(\cdot,D') +\left(1+\frac{1}{\eta}\right)I_{\text{class}}.
\end{align*}
The interpretation of the term in square bracket is that we have multiplied the charge of all the particles outside of $D'$ by a factor $-1/\eta$. Similarly we have changed the charges of the particles in $D'$ to $z_\text{opt}-\delta z/\eta$, instead of $z=z_\text{opt}+\delta z$. Now we claim that there is a stability of matter estimate in the form
\begin{multline}
\bE\,\bigg[\underline{\cF}(\cdot,D')+\frac1\eta\int\rho_\Gamma (W_\text{in}+W_\text{out})\\
 - \frac1\eta\sum_{(R,z)\in\cK\cap D'}z\big(W_\text{in}(R)+W_\text{out}(R)\big)+\frac{I_{\text{class}}}{\eta^2}\bigg]\geq -C|D|-C\frac{|D|^{2/3}}{\eta^{5/2}}.
\label{eq:stability_dipoles}
\end{multline}
We first explain how to use this estimate, before turning to its proof in Step $(iii)$. Inserting this and using that $\underline{\cF}(\cdot,D')\leq C|D|$ by Lemma~\ref{lem:upper_bound}, we get an estimate on the average of $\cG$:
$$\bE(\cG)\leq \bE\big(\underline{\cF}(\cdot,D')\big)+C\eta|D|+C\frac{|D|^{2/3}}{\eta^{3/2}}+C\frac{\bE\,\big(I_{\text{class}}\big)}{\eta}.$$
Using that the dipole-dipole interaction decays like $R^{-3}$ at infinity, the classical interaction term can be estimated following the proof of Lemma~\ref{lem:upper_bound} (see Eq.~\eqref{eq:estim_interaction_dipoles}):
$$\bE\;\big(I_{\text{class}}\big)\leq C|D|^{2/3}\log(|D|).$$
The final estimate on~\eqref{eq:form_energy_classical_particles} is
$$\bE(\cG)\leq \bE\big(\underline{\cF}(\cdot,D')\big)+C\eta|D|+C\frac{|D|^{2/3}}{\eta^{3/2}}+C\frac{|D|^{2/3}\log(|D|)}{\eta}.$$
Now if we optimize in $\eta$, finding that $\eta \propto |D|^{-2/15}$, and put back the other error terms, we arrive at our final estimate
\begin{equation}
\bE\big(\underline{\cF}(\cdot,D)\big)\leq \bE\big(\underline{\cF}(\cdot,D')\big)+C|D\setminus D'|+C|D|^{{13}/{15}}.
\label{eq:final-A4} 
\end{equation}

\medskip

\noindent\textit{$(iii)$ Proof of the stability of matter estimate~\eqref{eq:stability_dipoles}.}
The estimate~\eqref{eq:stability_dipoles} is the equivalent of~\cite[Lemma 10]{HaiLewSol_2-09} but, unfortunately, it does not follow from this result directly, because there it was again assumed that there is a smaller distance between all the nuclei. To cope with this issue we slightly change the argument of~\cite{HaiLewSol_2-09}.

The first step is the same as in~\cite{HaiLewSol_2-09} and it consists in replacing the Coulomb potential $1/|x|$ between all the particles by the Yukawa potential $e^{-|x|}/|x|$. First we know that the Fourier transform of the difference is positive:
$$\frac{1}{|k|^{2}}-\frac{1}{1+|k|^2}\geq0$$
and second we know that $|x|^{-1}-|x|^{-1}e^{-|x|}\to1$ when $|x|\to0$. All this implies that for any $y_i\in\R^3$ and any charges $q_i\in\R$,
$$\sum_{i\neq j}q_iq_j\left(\frac{1}{|y_i-y_j|}-\frac{e^{-|y_i-y_j|}}{|y_i-y_j|}\right)\geq -\sum_i q_i^2.$$
Thus, when we replace the Coulomb potential by the Yukawa potential, we get (in average) an error term of the form
$$-C\left(|D|+\frac{|D|^{2/3}}{\eta^2}\right).$$
The first term is an estimate on the average number of electrons as well as the average number of nuclei in $D'$. The second term is an estimate on the average number of classical particles which are close to the boundaries of $D$ and $D'$, and whose charge has been multiplied by $-1/\eta$.

Now that we have replaced the Coulomb interaction by Yukawa, the second step of the proof consists in dropping all the classical negative charges. In a lower bound we only pay for the interaction with the positive charges. Consider for instance the interaction between the negative classical particles $1/\eta$ inside or outside $D'$ and the nuclei in $D'$ which have the normal charge $z$ or the optimized charge $z_\text{opt}$. Because the Yukawa potential decays very fast, this interaction is easily controlled. It can be estimated similarly as in the proof of Lemma~\ref{lem:upper_bound} by
$$\leq \eta^{-1}\sum_{j\in(\partial\cJ)}\sum_{(R,z)\in W_j}\frac{1}{\delta'_{R,z}}+\eta^{-1}\sum_{j\in(\partial\cJ)}\sum_{k\in\cJ}X_0(\tau_j\omega)X_0(\tau_k\omega)e^{-|j-k|}$$
almost surely. The first term accounts for nuclei which are in neighboring cells, whereas the second one deals with nuclei which are in non-adjacent cells. Hence the average of this term is bounded above by
$\eta^{-1}\big(\bE(X_1)+\bE(X_0^2)\big)\#\partial\cJ\leq C\eta^{-1}|D|^{2/3}$. The argument is the same for the interaction between positive and negative charges $\eta^{-1}$ except that a crude bound gives $C\eta^{-2}|D|^{2/3}$. All in all, we see that when we throw away the negative charges, we make an error which is bounded from below by
$-C{|D|^{2/3}}{\eta^{-2}}$.

Now we have reduced ourselves to a system of electrons interacting with nuclei through the Yukawa potential, up to a total error of the form $-C(|D|+|D|^{2/3}\eta^{-2})$. The nuclei in $D'$ have a normal charge but the ones outside of $D'$ have the charge $z/\eta$. At this step we use the stability of matter with Yukawa, as was proved by Conlon, Lieb and Yau in~\cite{ConLieYau-88} through the Thomas-Fermi Yukawa energy. First we use the Lieb-Thirring inequality and the Lieb-Oxford-type bound 
\cite[Eq. (A.17)]{ConLieYau-88}, and estimate the quantum energy from below by Thomas-Fermi theory. Then we use the lower bound on the form $-CN-C\sum_i z_i^{5/2}$ which is proved in~\cite[Eq. (A.15)]{ConLieYau-88}. Hence, in average we get a lower bound of the form $-C|D|-C|D|^{2/3}\eta^{-5/2}$. Recall that the other error terms are not worse than $|D|^{2/3}\eta^{-2}$, which is itself smaller than $|D|^{2/3}\eta^{-5/2}$ for $\eta\ll1$.

Now that we have proved~\eqref{eq:stability_dipoles}, this concludes the proof of Lemma~\ref{lem:A4}.
\end{proof}

\begin{remark}
Using a Yukawa potential with mass $\mu$ and optimizing with respect to this mass in the end, it is possible to improve the error term $|D|^{13/15}$.
\end{remark}

\subsection*{Step 3. Lower bounds}
We already have one important lower bound on $\cF_{T,\mu}$ and $\underline\cF_{T,\mu}$, the one~\eqref{eq:stability-matter} corresponding to the stability of matter. This lower bound is true independently of $\omega$ and of the shape of the domain $D$, which need not be regular. 
The existence of the thermodynamic limit (for simplices at least) follows from a much more precise lower bound which is stated in the following lemma.

\begin{lemma}[Graf-Schenker type inequality]\label{lem:Graf-Schenker}
Let $\triangle\subset\R^3$ be a fixed simplex (a tetrahedron). 
Then we have the following lower bound
\begin{multline}
\underline{\cF}_{T,\mu}(\omega,D)\geq \left(1-\frac{C}{\ell}\right)\int_G\frac{\underline\cF_{T,\mu}(\omega,D\cap g\ell\triangle)}{|\ell\triangle|}\,dg\\
-\frac{C}\ell \Big(\#\big\{(R,z)\in\cK(\omega)\cap D\big\} + |D|\Big)
\label{eq:Graf-Schenker}
\end{multline}
for every domain $D$, every $\ell\geq1$, and with a universal constant $C$ which only depends on the chosen simplex $\triangle$.
\end{lemma}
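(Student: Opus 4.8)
The plan is to carry over, essentially verbatim, the deterministic argument of~\cite{HaiLewSol_1-09,HaiLewSol_2-09}, where~\eqref{eq:Graf-Schenker} is the abstract property~(A5). The key observation is that~\eqref{eq:Graf-Schenker} is a statement at \emph{fixed} $\omega$, and for almost every $\omega$ the nuclei in $D$ form a finite classical point configuration; thus the randomness enters only through the almost-sure finiteness of $\#(\cK(\omega)\cap D)$, and the constant $C$ stays deterministic. So I would fix such an $\omega$, choose a near-optimal mixed state $\Gamma$ in Fock space for $\underline{\cF}_{T,\mu}(\omega,D)$, and invoke the geometric sliding lemma of Graf and Schenker~\cite{GraSch-95}: for the fixed simplex $\triangle$ there is a group $G$ of rigid motions of $\R^3$, equipped with the measure $dg$ normalized so that $\frac{1}{|\ell\triangle|}\int_G\1_{g\ell\triangle}(x)\,dg=1$ for a.e.\ $x$, such that, setting $\phi_\ell(|x-y|):=\frac{1}{|\ell\triangle|}\int_G\1_{g\ell\triangle}(x)\1_{g\ell\triangle}(y)\,dg$, the radial kernel $W_\ell(t):=\frac{1}{t}\bigl(1-\phi_\ell(t)\bigr)$ is nonnegative and satisfies $\sum_{i<j}q_iq_jW_\ell(x_i-x_j)\ge-\frac{C}{\ell}\sum_i q_i^2$ for any finite family of points $x_i\in\R^3$ and charges $q_i\in\R$, with $C$ depending only on $\triangle$.

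Splitting each Coulomb pair interaction as $\frac{1}{|x-y|}=\frac{\phi_\ell(|x-y|)}{|x-y|}+W_\ell(|x-y|)$ decomposes the electrostatic energy of the whole particle system — electrons of charge $-1$ and nuclei $(R,z)\in\cK(\omega)\cap D$ of charge $z\in[\underline{Z},\bar{Z}]$ — into a part that is localized in the sliding simplices, namely $\frac{1}{|\ell\triangle|}\int_G\bigl(\text{Coulomb energy of the particles lying in }g\ell\triangle\bigr)\,dg$, plus the remainder $\sum_{i<j}q_iq_jW_\ell(x_i-x_j)$, which by the Graf--Schenker bound is $\ge-\frac{C}{\ell}\sum_i q_i^2\ge-\frac{C}{\ell}\bigl(\int_D\rho_\Gamma+\#(\cK(\omega)\cap D)\bigr)$ (using $z\le\bar{Z}$ and reading the electronic part as an operator inequality). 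Next I would localize the kinetic energy by an IMS-type partition of unity subordinate to the family $\{g\ell\triangle\}$ at length scale $\sim\ell$, at a cost $\ge-\frac{C}{\ell^2}\int_D\rho_\Gamma$, and the entropy by subadditivity of the von Neumann entropy, so that $T\,\tr_{\mathscr F}(\Gamma\log\Gamma)\ge\frac{1}{|\ell\triangle|}\int_G T\,\tr_{\mathscr F}(\Gamma_g\log\Gamma_g)\,dg$ for the localized pieces $\Gamma_g$ (with the usual bookkeeping of their traces). For each $g$, the state $\Gamma_g$, supported in $D\cap g\ell\triangle$, together with the nuclei sitting in $D\cap g\ell\triangle$ — and optimizing, only downward, the charges of those nuclei near the boundary of $D\cap g\ell\triangle$, which is precisely why the modified functional $\underline{\cF}$ and not $\cF$ appears here — is an admissible trial state for $\underline{\cF}_{T,\mu}(\omega,D\cap g\ell\triangle)$. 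Collecting the above gives
\[
\underline{\cF}_{T,\mu}(\omega,D)\ \ge\ \frac{1}{|\ell\triangle|}\int_G\underline{\cF}_{T,\mu}(\omega,D\cap g\ell\triangle)\,dg\ -\ \frac{C}{\ell}\Bigl(\#(\cK(\omega)\cap D)+\int_D\rho_\Gamma\Bigr)\ -\ \frac{C}{\ell^2}\int_D\rho_\Gamma .
\]

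It remains to absorb the $\int_D\rho_\Gamma$ appearing in the error. Applying the stability of matter~\eqref{eq:stability-matter} inside each simplex $g\ell\triangle$, but keeping half of the kinetic energy, yields $\frac{1}{|\ell\triangle|}\int_G\underline{\cF}_{T,\mu}(\omega,D\cap g\ell\triangle)\,dg\ge\frac12\,\tr_{\mathscr F}\bigl(\sum_i(-\Delta)_i\,\Gamma\bigr)-C|D|$ up to the same $\ell^{-2}$ localization error; combining this with $\int_D\rho_\Gamma\le\varepsilon\,\tr_{\mathscr F}\bigl(\sum_i(-\Delta)_i\,\Gamma\bigr)+C_\varepsilon|D|$ (Lieb--Thirring followed by Young's inequality, exactly as in the proof of Lemma~\ref{lem:upper_bound}), the term $\frac{C}{\ell}\int_D\rho_\Gamma$ is absorbed into a factor $\bigl(1-\frac{C}{\ell}\bigr)$ in front of $\int_G\underline{\cF}_{T,\mu}(\omega,D\cap g\ell\triangle)\,dg$ plus an extra $\frac{C}{\ell}|D|$. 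This is precisely~\eqref{eq:Graf-Schenker}.

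The main obstacle I expect is the bookkeeping in the localization step: running the IMS and entropy localizations over the \emph{sliding} family of simplices while keeping every error term of order $\ell^{-1}$ per unit volume, coping with the corners of the simplices, and — most delicately — handling the nuclei that straddle a face of some $g\ell\triangle$ or the boundary $\partial D$, which is what forces both the use of $\underline{\cF}$ and the Yukawa stability-of-matter estimate of Conlon, Lieb and Yau~\cite{ConLieYau-88} for the ``frozen'' boundary charges. All of this is, however, essentially the deterministic computation carried out in~\cite{HaiLewSol_2-09}; the genuinely new point is only that at fixed $\omega$ the nuclear configuration is finite almost surely, so no extra uniformity in $\omega$ is needed.
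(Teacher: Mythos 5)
Your proposal takes essentially the same route as the paper: the paper itself does not spell out the argument but simply refers to~\cite{GraSch-95} and~\cite[p.~505--507]{HaiLewSol_2-09}, and your sketch reproduces the steps of that deterministic proof --- the Graf--Schenker decomposition of the Coulomb kernel, IMS localization of the kinetic energy, subadditivity of the entropy, the use of $\underline{\cF}$ to handle nuclei near simplex boundaries, the Conlon--Lieb--Yau Yukawa estimate, and the final absorption of $\int_D\rho_\Gamma$ into $\#(\cK(\omega)\cap D)+|D|$.

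One small imprecision to flag in the absorption step. The inequality
$\frac{1}{|\ell\triangle|}\int_G\underline{\cF}_{T,\mu}(\omega,D\cap g\ell\triangle)\,dg\ge\frac12\tr_{\mathscr F}\big(\sum_i(-\Delta)_i\,\Gamma\big)-C|D|$
cannot hold as written, because the left side is an integral of \emph{infima} over states in each simplex, and knows nothing of your chosen global state $\Gamma$. What is true, and what you actually need, is that the energy of the localized piece $\Gamma_g$ (whose energy dominates $\underline{\cF}_{T,\mu}(\omega,D\cap g\ell\triangle)$) retains the kinetic energy; so you should either keep a small fraction $\lambda\sim\ell^{-1}$ of ``energy of $\Gamma_g$'' aside, bound that fraction by the half-kinetic stability of matter, and bound the remaining $(1-\lambda)$ by $\underline{\cF}_{T,\mu}(\omega,D\cap g\ell\triangle)$; or, equivalently, apply the half-kinetic stability of matter directly to $\underline{\cF}_{T,\mu}(\omega,D)$ on the left-hand side to control $\tr(-\Delta)\Gamma$, and then close the resulting self-referential inequality. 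Either fix delivers the $(1-C/\ell)$ prefactor and the $C\ell^{-1}\big(\#(\cK(\omega)\cap D)+|D|\big)$ error of~\eqref{eq:Graf-Schenker}, so this is a fixable phrasing issue rather than a genuine gap.
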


We recall that $G=\R^3\rtimes SO(3)$ is the group of translations and rotations acting on $\R^3$, endowed with its Haar measure. Since we have by convention $\underline{\cF}_{T,\mu}(\omega,\emptyset)=0$, the integral over $G$ in the right side of~\eqref{eq:Graf-Schenker} can be restricted to a compact set and it is therefore convergent. The proof of~\eqref{eq:Graf-Schenker} is based on an important inequality of Graf and Schenker~\cite{GraSch-95} dealing with the Coulomb interaction of classical charges. This inequality was itself inspired of previous work by Conlon, Lieb and Yau~\cite{ConLieYau-88,ConLieYau-89} and it is recalled in~\cite[Sec. 1.1.2]{HaiLewSol_2-09}.

The proof of Lemma~\ref{lem:Graf-Schenker} is almost identical to that of~\cite{GraSch-95} and~\cite[p. 505--507]{HaiLewSol_2-09}.  We will not detail it again. The term $\#\{(R,z)\in\cK(\omega)\cap D\}$ comes from the control of a localization error term in the Graf-Schenker inequality~\cite{GraSch-95}. This estimate is conveniently done by using a version of stability of matter with Yukawa potentials which was derived in~\cite[Eq. (A5)]{ConLieYau-88}, based on a Thomas-Fermi-type theory. In~\cite{HaiLewSol_2-09} this error term was estimated by $\bar{Z}^{5/2}$ times the number of nuclei, the latter being itself bounded by $C|D|$ (when $D$ is regular). Here we do not have a smallest distance between the nuclei and we just keep the total number of nuclei in $D$. In this proof localizing the system to the small simplices $g\ell\triangle$ induces a small change of the nuclear charges close to the boundary of these simplices. This is why it is convenient to use the modified ener
 gy $\underline{
 \cF}_{T,\mu}$.

\begin{remark}\label{rmk:estim_error_GS}
The error term $\#\{(R,z)\in\cK(\omega)\cap D\}$ is not necessarily bounded uniformly with respect to $\omega$, but it has a limit almost-surely and in $L^p(\R^3)$, when we divide it by the `regularized' volume of $D$. Let us quickly explain this. For any $a,\epsilon>0$, we introduce similarly to~\cite[Eq. (9)]{HaiLewSol_1-09}
\begin{equation}
|D|_{a,\epsilon}:=\inf\big\{|D'|\ :\ D'\in\cR_{a,\epsilon},\ D'\supset D\big\}.
\label{eq:def_regularized_volume}
\end{equation}
One can verify that this volume is of the same order as the volume of the union of the sets $W+j$ with $j\in\ccL$ intersecting $D$.
Hence, for any fixed $a,\epsilon>0$, we have
\begin{equation*}
\#\big\{(R,z)\in\cK(\omega)\cap D\big\}\leq \sum_{\substack{j\in\ccL\ :\\ (W+j)\cap D\neq\emptyset}}\sum_{(R,z)\in \cK(\omega)\cap \overline{W+j}}1=\sum_{\substack{j\in\ccL\ :\\ (W+j)\cap D\neq\emptyset}}X_{0}(\tau_j\omega)
\end{equation*}
(recall~\eqref{eq:def_Delta_p}). We have $X_{0}\in L^{p}(\Omega)$, according to
assumption~\eqref{eq:assumption_nuclei} and Lemma~\ref{lem:borne-X0}.
By the ergodic theorem (Theorem~\ref{thm:Birkhoff} below), this term behaves like $|D|_{a,\epsilon}$ almost-surely and in $L^p(\Omega)$. In average we have the exact inequality
\begin{equation}
\bE\;\sum_{\substack{j\in\ccL\ :\\ (W+j)\cap D\neq\emptyset}}X_{0}(\tau_j\omega)\leq C\, \bE\big(X_{0}\big)\,|D|_{a,\epsilon}.
\label{eq:estim_average_error_GS}
\end{equation}
\end{remark}

The inequality~\eqref{eq:Graf-Schenker} (with a different error term)
was denoted by ${\rm (A5)}$
in~\cite{HaiLewSol_1-09,HaiLewSol_2-09}. There is a more precise but
much more complicated inequality ${\rm (A6)}$ which is explained at
length in these works. This inequality is also true in our case,
provided we take the expectation value.

\begin{lemma}[A more precise lower bound for simplices]
Let $D=gL\triangle$ be a dilated, rotated and translated simplex. Then the property ${\rm (A6)}$ of~\cite{HaiLewSol_1-09,HaiLewSol_2-09} is valid for the averaged free energy $\bE\big(\underline{\cF}_{T,\mu}(\cdot,D)\big)$.
\end{lemma}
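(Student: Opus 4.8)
The plan is to re-run, pointwise in $\omega$, the derivation of property ${\rm (A6)}$ from the Graf--Schenker inequality ${\rm (A5)}$ carried out in~\cite{HaiLewSol_1-09,HaiLewSol_2-09}, and only at the very end to take the expectation $\bE$. Our Lemma~\ref{lem:Graf-Schenker} is precisely the random analogue of ${\rm (A5)}$, valid for every $\ell\geq1$ and almost every $\omega$, the sole difference with the deterministic situation being that the localization error term is no longer $\leq C\ell^{-1}|D|$ but rather
\begin{equation*}
\frac{C}{\ell}\Big(\#\big\{(R,z)\in\cK(\omega)\cap D\big\}+|D|\Big).
\end{equation*}
The construction leading from ${\rm (A5)}$ to ${\rm (A6)}$ in~\cite{HaiLewSol_2-09} is purely geometric and deterministic: one subdivides the simplex $D=gL\triangle$ hierarchically into simplices similar to $\triangle$, applies ${\rm (A5)}$ at each scale, and controls the modification of the nuclear charges produced near the boundaries of the sub-simplices --- this last point being exactly why one must work with the modified energy $\underline{\cF}_{T,\mu}$ rather than with $\cF_{T,\mu}$. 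Since that argument uses nothing about the nuclei beyond ${\rm (A5)}$ (our Lemma~\ref{lem:Graf-Schenker}), the stability of matter (Theorem~\ref{thm:stability}, uniform in $\omega$) and the normalization $\underline{\cF}_{T,\mu}(\omega,\emptyset)=0$, it goes through verbatim and produces, for almost every $\omega$, an inequality of the shape demanded by ${\rm (A6)}$ in which the error term is replaced by $C\ell^{-1}\big(\#\{(R,z)\in\cK(\omega)\cap D'\}+|D|\big)$, where $D'\supset D$ is a fixed neighbourhood of $D$ of comparable volume (the slight enlargement absorbs the nuclei lying just outside the small simplices that touch $\partial D$).

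I would then take the expectation of this pointwise inequality. By linearity of $\bE$, the left-hand side becomes $\bE\big(\underline{\cF}_{T,\mu}(\cdot,D)\big)$ and the right-hand side becomes the corresponding combination of $\bE\big(\underline{\cF}_{T,\mu}(\cdot,\,\cdot\,)\big)$ over the sub-simplices, plus $\bE$ of the error, which is exactly the structure of ${\rm (A6)}$ for the averaged functional. By Remark~\ref{rmk:estim_error_GS} (cf.~\eqref{eq:estim_average_error_GS}) and the stationarity of $\cK$,
\begin{equation*}
\bE\,\#\big\{(R,z)\in\cK(\cdot)\cap D'\big\}\leq \bE\sum_{\substack{j\in\ccL\\ (W+j)\cap D'\neq\emptyset}} X_0(\tau_j\,\cdot)\leq C\,\bE(X_0)\,|D'|_{a,\epsilon},
\end{equation*}
and since $D$ (hence $D'$) is a simplex, $|D'|_{a,\epsilon}\leq C|D|$ by~\eqref{eq:def_regularized_volume}, while $X_0\in L^1(\Omega)$ by assumption~\eqref{eq:assumption_nuclei} and Lemma~\ref{lem:borne-X0}. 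Hence $\bE$ of the error term is $\leq C\ell^{-1}|D|$, which is of the form tolerated in ${\rm (A6)}$, and we conclude that ${\rm (A6)}$ holds for $D\mapsto\bE\big(\underline{\cF}_{T,\mu}(\cdot,D)\big)$.

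The only step requiring genuine care --- and where I expect the real work to lie --- is the bookkeeping of the error terms along the hierarchical subdivision: one must check that at each scale the nuclei entering the various localization errors are, up to bounded multiplicity, nuclei of $D'$, so that after summing over all sub-simplices and all scales the total accumulated error is still $\bE$-bounded by $C\ell^{-1}|D|$, and not, say, by $C\ell^{-1}|D|\log|D|$. In~\cite{HaiLewSol_2-09} this was automatic because the error was already proportional to $|D|$; here it reduces, exactly as in Remark~\ref{rmk:estim_error_GS}, to re-summing the counting functions $X_0(\tau_j\omega)$ over the cells $j$ meeting $D'$ and invoking $X_0\in L^1(\Omega)$. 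Beyond this, no new ingredient is needed, and the lemma can be written as a reduction to~\cite{HaiLewSol_2-09} combined with Lemma~\ref{lem:Graf-Schenker} and Remark~\ref{rmk:estim_error_GS}.
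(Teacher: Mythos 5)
Your proposal mischaracterizes what the passage from $\mathrm{(A5)}$ to $\mathrm{(A6)}$ actually consists of, and as a consequence it misses the one genuine modification the random setting forces. In~\cite{HaiLewSol_1-09,HaiLewSol_2-09}, $\mathrm{(A6)}$ is \emph{not} obtained from $\mathrm{(A5)}$ by a purely geometric hierarchical subdivision of the simplex: it is a separate, stronger assumption whose verification in~\cite{HaiLewSol_2-09} relies essentially on a quantitative bound on the optimal state itself --- namely Eq.~(89) there, which controls the total number of electrons and the kinetic energy of the minimizer in $D$ by $C|D|$. That input has no pointwise analogue here: for fixed $\omega$ the optimal state may have an arbitrarily large electron number and kinetic energy (this is exactly the issue discussed around~\eqref{eq:stability-matter-LY}), and the only available substitute is the averaged bound~\eqref{eq:upper-bd-average} from Lemma~\ref{lem:upper_bound}. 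This is precisely what the paper's proof amounts to: ``the proof is the same as in~\cite{HaiLewSol_2-09}, replacing~[Eq.~(89)] by~\eqref{eq:upper-bd-average}.'' In particular, the strategy you propose --- derive a pointwise $\mathrm{(A6)}$-type inequality for a.e.\ $\omega$ and only then take $\bE$ --- does not go through: the pointwise inequality you would need is not available, because it would require a pointwise version of~\eqref{eq:upper-bd-average}. One has to interleave the expectation with the argument of~\cite{HaiLewSol_2-09}, using~\eqref{eq:upper-bd-average} where the deterministic proof uses Eq.~(89).

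Your secondary observation --- that the localization error term $\#\{(R,z)\in\cK(\omega)\cap D\}$ from Lemma~\ref{lem:Graf-Schenker} must be re-summed over cells and bounded via $\bE(X_0)$ and $|D|_{a,\epsilon}$, as in Remark~\ref{rmk:estim_error_GS} and~\eqref{eq:estim_average_error_GS} --- is a correct and necessary piece of bookkeeping, but it is not the main new ingredient, and it would not by itself rescue the pointwise approach. To make the proposal correct you would need to identify Eq.~(89) of~\cite{HaiLewSol_2-09} as the step that breaks pointwise, and replace it by~\eqref{eq:upper-bd-average}, carrying the rest of the~\cite{HaiLewSol_2-09} argument through under the expectation.
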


The proof is again exactly the same as in~\cite{HaiLewSol_2-09}. Instead of using the inequality~\cite[Eq. (89)]{HaiLewSol_2-09} on the total number of electrons and on the kinetic energy in $D$, one uses~\eqref{eq:upper-bd-average}.

\medskip

From all the previous results we can now deduce that the thermodynamic limit exists for the deterministic function $\bE\big(\underline{\cF}_{T,\mu}\big)$, by simply applying the abstract main theorem of~\cite{HaiLewSol_1-09}.

\begin{corollary}[Thermodynamic limit in average]\label{cor:thermo-limit-average}
There exists a function ${f}(T,\mu)$ such that, for any sequence $(D_n)\subset \cR_{a,\epsilon}$ of regular domains with $a,\epsilon>0$, $|D_n|\to\ii$ and ${\rm diam}(D_n)|D_n|^{-1/3}\leq C$, we have
\begin{equation}
\lim_{n\to\ii}\frac{\bE\big(\cF_{T,\mu}(\,\cdot\,,D_n)\big)}{|D_n|}={f}(T,\mu).
\end{equation}
\end{corollary}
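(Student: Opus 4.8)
The plan is to apply the abstract existence theorem of~\cite{HaiLewSol_1-09} to the deterministic, $\ccL$-periodic functional $\cG(D):=\bE\big(\underline{\cF}_{T,\mu}(\cdot,D)\big)$, and then to transfer the conclusion from $\underline{\cF}_{T,\mu}$ to $\cF_{T,\mu}$. First I would record that $\cG$ is well defined and finite on bounded open sets: by the stability of matter~\eqref{eq:stability-matter} (which holds verbatim for $\underline{\cF}_{T,\mu}$) and by Lemma~\ref{lem:upper_bound} together with~\eqref{eq:relation-opt-charges}, one has $-C(1+T^{5/2}+\mu_+^{5/2})|D|\leq\cG(D)\leq C|D|$ for regular $D$, with $\cG(\emptyset)=0$. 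Periodicity is immediate from the stationarity~\eqref{eq:stat-E-N} of $\underline{\cF}_{T,\mu}$ and the fact that $\tau_k$ preserves $\bP$: $\cG(D-k)=\bE\big(\underline{\cF}_{T,\mu}(\tau_k\cdot,D)\big)=\bE\big(\underline{\cF}_{T,\mu}(\cdot,D)\big)=\cG(D)$ for every $k\in\ccL$. Thus $\cG$ fits exactly into the periodic framework of~\cite{HaiLewSol_1-09,HaiLewSol_2-09}.

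Next I would verify, one by one, the abstract conditions (A1)--(A6) of~\cite{HaiLewSol_1-09} for $\cG$. The normalization and $\ccL$-periodicity give (A1), the stability bound above gives (A2), and the upper bound of Lemma~\ref{lem:upper_bound} together with the a priori control~\eqref{eq:upper-bd-average} on the average electronic density and kinetic energy provides the quantitative bounds (A3). Condition (A4), the control of a charge variation in a boundary layer, is precisely Lemma~\ref{lem:A4}: its error $C|D\setminus D'|+C|D|^{13/15}$ is of the admissible form (a boundary-layer volume plus a quantity $o(|D|)$). Condition (A5) is obtained from the Graf--Schenker type inequality of Lemma~\ref{lem:Graf-Schenker} by taking the expectation: since expectation commutes with the (compactly supported) integral over $G$, and since~\eqref{eq:estim_average_error_GS} bounds the counting error by $\bE\,\#\{(R,z)\in\cK(\cdot)\cap D\}\leq C\,\bE(X_0)\,|D|_{a,\epsilon}\leq C|D|$ for $D\in\cR_{a,\epsilon}$, we obtain
$$\cG(D)\geq\Big(1-\frac{C}{\ell}\Big)\int_G\frac{\cG(D\cap g\ell\triangle)}{|\ell\triangle|}\,dg-\frac{C}{\ell}|D|,$$
which is exactly (A5). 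Finally, (A6) for $\cG$ on dilated, rotated and translated simplices is the content of the lemma stated just above this corollary. With (A1)--(A6) at hand, the abstract theorem of~\cite{HaiLewSol_1-09} produces a constant $f(T,\mu)$ such that $|D_n|^{-1}\cG(D_n)\to f(T,\mu)$ for every sequence $(D_n)\subset\cR_{a,\epsilon}$ with $|D_n|\to\ii$ and ${\rm diam}(D_n)|D_n|^{-1/3}\leq C$.

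It remains to pass from $\cG=\bE(\underline{\cF}_{T,\mu})$ to $\bE(\cF_{T,\mu})$. From $\underline{\cF}_{T,\mu}\leq\cF_{T,\mu}$ (inequality~\eqref{eq:relation-opt-charges}) we get $\liminf_n|D_n|^{-1}\bE\big(\cF_{T,\mu}(\cdot,D_n)\big)\geq f(T,\mu)$. For the matching upper bound, let $D_n'$ be the regular domain obtained by removing from $D_n$ a boundary layer of fixed width (at least the constant of Lemma~\ref{lem:A4}); by the $a$-regularity of $\partial D_n$ one has $|D_n\setminus D_n'|\leq C|D_n|^{2/3}=o(|D_n|)$ and $|D_n'|/|D_n|\to1$, and $D_n'$ lies in some $\cR_{a'',\epsilon''}$. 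Applying Lemma~\ref{lem:A4} to the pair $D_n'\subset D_n$ gives
$$\bE\big(\cF_{T,\mu}(\cdot,D_n)\big)\leq\bE\big(\underline{\cF}_{T,\mu}(\cdot,D_n')\big)+C|D_n\setminus D_n'|+C|D_n|^{13/15}.$$
Dividing by $|D_n|$ and letting $n\to\ii$, the two error terms vanish and $|D_n'|^{-1}\cG(D_n')\to f(T,\mu)$, whence $\limsup_n|D_n|^{-1}\bE\big(\cF_{T,\mu}(\cdot,D_n)\big)\leq f(T,\mu)$. Combining the two bounds proves the corollary.

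The only real obstacle is the bookkeeping of error terms in the random setting: one must check that the exponent $13/15$ in (A4) and, above all, the counting error $\#\{(R,z)\in\cK(\omega)\cap D\}$ in the Graf--Schenker inequality --- which is \emph{not} bounded uniformly in $\omega$ --- still fall into the class of admissible error terms of~\cite{HaiLewSol_1-09} once one passes to expectations. This is exactly where the integrability hypothesis~\eqref{eq:assumption_nuclei}, through $X_0\in L^p(\Omega)$ and~\eqref{eq:estim_average_error_GS}, is used; beyond that, the argument is a verbatim application of the abstract machinery of~\cite{HaiLewSol_1-09,HaiLewSol_2-09}.
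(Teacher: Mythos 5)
Your proof is correct and follows essentially the same route as the paper: verify the abstract conditions (A1)--(A6) of~\cite{HaiLewSol_1-09} for the $\ccL$-periodic deterministic functional(s) $\bE(\underline\cF_{T,\mu})$ and $\bE(\cF_{T,\mu})$ --- identifying (A2) with stability of matter, (A3) with Lemma~\ref{lem:upper_bound}, (A4) with Lemma~\ref{lem:A4}, (A5) with the averaged Graf--Schenker bound of Lemma~\ref{lem:Graf-Schenker} together with~\eqref{eq:estim_average_error_GS}, and (A6) with the lemma preceding the corollary --- and then invoke the abstract existence theorem. The only cosmetic difference is that you re-derive, by hand, the transfer from $\bE(\underline\cF_{T,\mu})$ to $\bE(\cF_{T,\mu})$ via the boundary-layer construction and Lemma~\ref{lem:A4}; the paper obtains this directly from the two-functional formulation of~\cite[Thm.~2]{HaiLewSol_1-09} (and carries out the same transfer again in its ``Step 5''), so the content is identical.
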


\medskip

\begin{proof}
The function $D\mapsto \bE\big(\cF_{T,\mu}(\,\cdot\,,D)\big)$ satisfies all the properties ${\rm (A1)}$--${\rm (A6)}$ of~\cite{HaiLewSol_1-09}. For ${\rm (A3)}$ it is even periodic,
$\bE\big(\cF_{T,\mu}(\,\cdot\,,D+k)\big)=\bE\big(\cF_{T,\mu}(\,\cdot\,,D)\big)$, $\forall k\in\ccL$,
by stationarity. We deduce the result from~\cite[Thm. 2]{HaiLewSol_1-09}. 
\end{proof}

Our task in the next step will be to upgrade the convergence of the average to a convergence in $L^q(\Omega)$. Since we already know from~\eqref{eq:simple_upper_bd} that $|D|^{-1}\underline\cF_{T,\mu}(\cdot,D)$ is bounded in $L^{p/2}(\Omega)$, strong convergence in $L^1(\Omega)$ will imply the result by interpolation.

\subsection*{Step 4. Thermodynamic limit in $L^1(\Omega)$}

This step is now more specific to the stochastic case. We will show that for a sequence $(D_n)$ of regular domains like in the statement of Theorem~\ref{thm:thermo-limit}, $|D_n|^{-1}\underline\cF_{T,\mu}(\cdot,D_n)$ converges strongly in $L^1(\Omega)$ to the same limit as its average, namely ${f}(T,\mu)$. 

We will make use of the following celebrated theorem.

\begin{theorem}[Ergodic theorem~\cite{Tempelman-72}]\label{thm:Birkhoff}
Let $X$ be a random variable in $L^p(\Omega)$ for some $1\leq p<\ii$.
Consider a sequence of sets $D_n\subset \R^3$ such that $|D_n|\to\ii$ and which is regular in the sense of Fisher~\cite{Fisher-64}, that is
\begin{equation}
\forall t\in[0,t_0],\quad \Big|\big\{x\in\R^3\ :\ {\rm d}(x,\partial D_n)\leq t\,|D_n|^{1/3}\big\}\Big|\leq |D_n|\,\eta(t)
\label{eq:Fisher}
\end{equation}
for some $t_0>0$ and some function $\eta$ with $\lim_{t\to0}\eta(t)=0$. Then we have 
$$\lim_{n\to\ii}\bE\left|\frac{1}{|D_n|}\sum_{k\in \ccL\cap D_n}X(\tau_k\omega)-\bE(X)\right|^p=0.$$
If moreover $D_n\subset B_{c|D_n|^{1/3}}$ for some $c>0$ and all $n$, then 
$$\frac{1}{|D_n|}\sum_{k\in \ccL\cap D_n}X(\tau_k\omega)\longrightarrow \bE(X)$$
almost-surely.
\end{theorem}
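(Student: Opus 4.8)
The plan is to reduce the statement to the classical mean and pointwise ergodic theorems for the $\Z^3$-action $\{\tau_k\}_{k\in\ccL}$, and to absorb the geometry of the $D_n$ into a boundary-layer estimate. Since $\ccL$ is a discrete subgroup of $\R^3$ with bounded fundamental domain, it is a full-rank lattice (of covolume $|W|=1$ in the standard normalization), so $\{\tau_k\}$ is an ergodic, measure-preserving $\Z^3$-action. Write $A_n X:=|D_n|^{-1}\sum_{k\in\ccL\cap D_n}X\circ\tau_k$; subtracting $\bE(X)$ we may assume $\bE(X)=0$. Because $k\in\ccL\cap D_n$ forces $(W+k)\cap D_n\neq\emptyset$, the set $\ccL\cap D_n$ has cardinality at most the number of cells $W+j$ meeting $D_n$; the union of those cells lies in a fixed neighbourhood of $D_n$, whose volume, by the Fisher regularity~\eqref{eq:Fisher} applied with $t\sim {\rm diam}(W)|D_n|^{-1/3}\to0$, is $|D_n|(1+o(1))$. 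Hence $\#(\ccL\cap D_n)\le C|D_n|$ and $\norm{A_nX}_{L^p(\Omega)}\le C\norm{X}_{L^p(\Omega)}$ uniformly in $n$, so it suffices to prove convergence on a dense subset of $L^p(\Omega)$.

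For $p=2$ I would run von Neumann's argument: with $U_kY=Y\circ\tau_k$ on $L^2(\Omega)$ one has the orthogonal splitting $L^2=\mathcal I\oplus\overline{\mathcal C}$, where $\mathcal I$ is the space of $\ccL$-invariant functions (the constants, by ergodicity) and $\mathcal C$ is spanned by the coboundaries $Y-U_{e_i}Y$ for a basis $e_1,e_2,e_3$ of $\ccL$. On $\mathcal I$ one simply has $A_n=\bE(\cdot)$. On a coboundary, the sum $\sum_{k\in\ccL\cap D_n}(U_kY-U_{k+e_i}Y)$ telescopes and leaves only the terms indexed by the symmetric difference of $\ccL\cap D_n$ with its $e_i$-translate, a lattice boundary layer of thickness $O(1)$ whose cardinality is $o(|D_n|)$ by~\eqref{eq:Fisher}; thus $A_n(Y-U_{e_i}Y)\to0$ in $L^2$. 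Density of $\mathcal C+\mathcal I$ plus the uniform bound settles $p=2$. For general $1\le p<\infty$, bounded random variables are dense in $L^p(\Omega)$; for bounded $X$ the $L^2$ convergence gives convergence in probability, and $|A_nX|\le\norm{X}_{L^\infty}$ upgrades it to $L^p$ convergence by dominated convergence. A final density argument, again using the uniform bound, gives the $L^p$ statement for all $X\in L^p(\Omega)$.

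For the almost-sure part, the extra hypothesis $D_n\subset B_{c|D_n|^{1/3}}$ together with the Fisher regularity places $(D_n)$ in the class of averaging sequences covered by Tempelman's pointwise ergodic theorem~\cite{Tempelman-72}, giving $A_nX\to\bE(X)$ almost surely for $X\in L^1(\Omega)$. Equivalently, one can inscribe and circumscribe cubes of volume comparable to $|D_n|$ (the comparison being controlled by $D_n\subset B_{c|D_n|^{1/3}}$ and~\eqref{eq:Fisher}), apply Wiener's ergodic theorem along these cubes, and control the remainder by a maximal inequality.

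The main obstacle is the almost-sure statement: pointwise ergodic theorems are sensitive to the shape and growth of the averaging sets, and the two geometric hypotheses on $(D_n)$ are precisely what is needed to invoke an available pointwise theorem; without the containment in a ball of radius $O(|D_n|^{1/3})$ one can in general only hope for convergence in $L^p$ or along a subsequence. The $L^p$ part, by contrast, is soft and follows from the mean ergodic theorem once the boundary-layer estimate is in place.
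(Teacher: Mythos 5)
The paper does not actually prove this theorem: it simply cites Tempel'man~\cite{Tempelman-72}, remarking that Fisher regularity is stronger than Tempel'man's averaging condition $(E_1)$, and pointing to Theorem~6.4$'$ of that paper for the $L^p$ convergence and Theorem~6.1 for the almost-sure convergence. Your proposal takes a genuinely different, more self-contained route for the $L^p$ part: the uniform bound $\norm{A_nX}_{L^p}\le C\norm{X}_{L^p}$ via the cell-counting estimate $\#(\ccL\cap D_n)\le C|D_n|$, the von Neumann decomposition $L^2=\mathcal I\oplus\overline{\mathcal C}$ with telescoping on coboundaries, Fisher regularity applied with $t\sim |D_n|^{-1/3}\to0$ to show the symmetric-difference boundary layer is $o(|D_n|)$, and a density argument (bounded random variables plus bounded convergence) to pass from $L^2$ to general $L^p$. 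This is correct and complete modulo routine bookkeeping, and it also makes visible why $\#(\ccL\cap D_n)/|D_n|\to1$ (i.e.\ $|W|=1$) is needed to match the normalization with $\bE(X)$, a point you correctly flag. For the almost-sure part you defer to Tempel'man's pointwise theorem (or sketch a Wiener-cube reduction with a maximal inequality), which is essentially unavoidable; this is what the paper does as well. In short, your $L^p$ argument is an elementary replacement for the citation of Tempel'man's Theorem~6.4$'$, while the pointwise part in both approaches rests on the same deep input.
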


Theorem~\ref{thm:Birkhoff} can for instance be found in a paper of Tempel'man~\cite{Tempelman-72}. The Fisher regularity assumption is stronger than Tempel'man's condition $(E_1)$. The convergence in $L^p(\Omega)$ is Theorem 6.4$'$ of~\cite{Tempelman-72}, whereas the almost-sure convergence is Theorem~6.1 in~\cite{Tempelman-72}.

\medskip

Our first useful result is the following

\begin{lemma}[Limit of negative part]\label{lem:lower-bound-liminf}
Let $(D_n)\subset\cR_{a,\epsilon}$ be a sequence of regular domains like in the statement of Theorem~\ref{thm:thermo-limit}. Then we have
\begin{equation}
\lim_{n\to\ii}\bE\left[\frac{\underline\cF_{T,\mu}(\omega,D_n)}{|D_n|} - f(T,\mu)\right]_-=0.
\end{equation}
\end{lemma}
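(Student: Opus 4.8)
The plan is to combine the Graf--Schenker-type lower bound (Lemma~\ref{lem:Graf-Schenker}) with the ergodic theorem (Theorem~\ref{thm:Birkhoff}) and the already-established convergence in average (Corollary~\ref{cor:thermo-limit-average}). The key point is that the negative part $[\,\cdot\,]_-$ is a convex function, so applying it to the right-hand side of a lower bound and then averaging over $\Omega$ only makes things larger; this lets us transfer information from the deterministic averaged functional to the random one in the direction we need.

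First, I would fix a simplex $\triangle$ and a scale $\ell\geq 1$, and apply Lemma~\ref{lem:Graf-Schenker} to $D=D_n$. Dividing by $|D_n|$ and rearranging, this gives, for each $\omega$,
\begin{equation*}
\frac{\underline\cF_{T,\mu}(\omega,D_n)}{|D_n|}\geq \left(1-\frac{C}{\ell}\right)\frac{1}{|D_n|}\int_G\frac{\underline\cF_{T,\mu}(\omega,D_n\cap g\ell\triangle)}{|\ell\triangle|}\,dg-\frac{C}{\ell}\left(\frac{\#\{(R,z)\in\cK(\omega)\cap D_n\}}{|D_n|}+1\right).
\end{equation*}
Since $[\cdot]_-$ is nonincreasing and convex, $[x]_-\leq[\,y + z\,]_-\leq [y]_- + |z|$ whenever $x\geq y+z$; using this with $y$ the main integral term (minus $f$) and $z$ the error term, and taking $\bE$, one reduces the lemma to controlling (a) $\bE\big[\,(1-C/\ell)\,|D_n|^{-1}\int_G |\ell\triangle|^{-1}\underline\cF_{T,\mu}(\omega,D_n\cap g\ell\triangle)\,dg - f(T,\mu)\,\big]_-$, and (b) the expectation of the error term, which by Remark~\ref{rmk:estim_error_GS} (in particular~\eqref{eq:estim_average_error_GS}) is bounded by $C/\ell$ uniformly in $n$ since $X_0\in L^1(\Omega)$.

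For term (a), I would split the integral over $G$ into the ``bulk'' rotations/translations $g$ for which $g\ell\triangle\subset D_n$ — so $D_n\cap g\ell\triangle = g\ell\triangle$ and stationarity gives $\underline\cF_{T,\mu}(\omega,g\ell\triangle)$ distributed like $\underline\cF_{T,\mu}(\tau_k\omega, g'\ell\triangle)$ for suitable $k\in\ccL$ — plus a boundary layer of $g$'s whose measure is $O(|D_n|^{2/3}\ell^3/|D_n|)\to 0$ by Fisher regularity of $D_n$, and whose contribution is controlled in $L^1$ using the uniform bound~\eqref{eq:simple_upper_bd} (here $p\geq 2$ is used, giving at least $L^1$). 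On the bulk, the integral becomes essentially an ergodic average $|D_n|^{-1}\sum_{k\in\ccL\cap D_n} Y(\tau_k\omega)$ of a single random variable $Y$ built from $\underline\cF_{T,\mu}$ on translates of $\ell\triangle$ (averaged over the compact rotation part of $G$ and over the finitely many lattice shifts within $\ell\triangle$); since $Y\in L^1(\Omega)$ by~\eqref{eq:simple_upper_bd}, Theorem~\ref{thm:Birkhoff} gives convergence in $L^1(\Omega)$ to $\bE(Y) = (1-C/\ell)^{-1}\cdot(\text{something close to }f)$ — more precisely, to the average over $G$ of $|\ell\triangle|^{-1}\bE(\underline\cF_{T,\mu}(\cdot,\ell\triangle))$, which by Corollary~\ref{cor:thermo-limit-average} applied to the simplex is within $o_\ell(1)$ of $(1-C/\ell)\,f(T,\mu)$. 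Hence $\bE[\text{(a)}]_-\to \big[(1-C/\ell)(f + o_\ell(1)) - f\big]_- \leq C/\ell + o_\ell(1)$ as $n\to\infty$. Combining with (b), $\limsup_n \bE[|D_n|^{-1}\underline\cF_{T,\mu}(\cdot,D_n) - f]_- \leq C/\ell$; letting $\ell\to\infty$ finishes the proof.

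The main obstacle I anticipate is the bookkeeping in term (a): correctly identifying the single $L^1(\Omega)$ random variable $Y$ whose ergodic average emerges after one discretizes the Haar integral over $G$ (the rotation part must be integrated out, and translations must be matched to lattice translations $\tau_k$ using stationarity~\eqref{eq:stat-E-N}, up to a controllable mismatch), and showing that the boundary/off-lattice remainder is $o(1)$ in $L^1$ uniformly — this is exactly where one leans on the uniform $L^{p/2}$-bound from Lemma~\ref{lem:upper_bound} with $p\geq 2$, and on Fisher regularity of the sequence $(D_n)$. The rest is a soft convexity argument plus the two cited black boxes (the ergodic theorem and the deterministic thermodynamic limit).
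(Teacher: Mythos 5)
Your overall plan --- applying the Graf--Schenker inequality at scale $\ell$, feeding the bulk of the $G$-integral and the nuclei-count error term into the $L^1$ ergodic theorem (Theorem~\ref{thm:Birkhoff}), identifying the limit of the simplex energy via Corollary~\ref{cor:thermo-limit-average}, and using monotonicity and sub-additivity of $[\,\cdot\,]_-$ to absorb the error terms before sending $\ell\to\infty$ --- is exactly the route the paper takes. The decomposition of the Haar integral into a lattice sum (parametrizing $g$ by $\tau\in W$, $j\in\ccL$, $R\in SO(3)$ and integrating out $\tau$ and $R$) and the appeal to stationarity~\eqref{eq:stat-E-N} also match.

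There is, however, a genuine gap in your treatment of the boundary layer of $g\in G$ for which $g\ell\triangle$ straddles $\partial D_n$. You propose to control that contribution in $L^1(\Omega)$ via the uniform bound~\eqref{eq:simple_upper_bd}. But~\eqref{eq:simple_upper_bd} is only established for $D\in\cR_{a,\epsilon}$, with a constant depending on $a,\epsilon$; the sets $D_n\cap g\ell\triangle$ are simplices sliced by $\partial D_n$, which can be arbitrarily thin slivers when $\partial D_n$ is nearly tangent to a face of $g\ell\triangle$, and therefore do not lie in any fixed $\cR_{a',\epsilon'}$. So Lemma~\ref{lem:upper_bound} cannot be invoked uniformly over the boundary $g$'s, and your $L^1$ control of the boundary term does not follow as stated. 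The fix, and the step the paper actually uses, exploits that $[\,\cdot\,]_-$ is nonincreasing so only a \emph{lower} bound on the boundary contribution is needed: stability of matter~\eqref{eq:stability-matter} gives $\underline\cF_{T,\mu}\big(\omega,D_n\cap(R\ell\triangle+\tau+j)\big)\geq -C|\ell\triangle|$ almost surely for any bounded open set, with no regularity hypothesis at all, and summing over the $O(|D_n|^{2/3})$ indices $j\in\partial\cG_n$ gives an error $-C|D_n|^{-1/3}\to 0$. With that substitution (and cleaning up the stray $(1-C/\ell)^{-1}$ in your identification of $\bE(Y)$ --- the bulk ergodic limit is the $(\tau,R)$-average of $|\ell\triangle|^{-1}\bE\,\underline\cF_{T,\mu}(\cdot,R\ell\triangle+\tau)$, which is $f+o_\ell(1)$ directly), your argument becomes the paper's proof.
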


\begin{proof}
The proof mainly follows from the Graf-Schenker inequality~\eqref{eq:Graf-Schenker} which is our only precise bound valid almost-surely. We know that for all $\ell\geq1$
\begin{multline*}
\frac{\underline\cF_{T,\mu}(\omega,D_n)}{|D_n|}\geq \frac{1-C/\ell}{|D_n|}\int_G \frac{\underline\cF_{T,\mu}\big(\omega,D_n\cap(g\ell\triangle)\big)}{|\ell\triangle|}\,dg\\
-\frac{C}\ell\left(1+\frac1{|D_n|}\#\big\{(R,z)\in\cK(\omega)\cap D_n\big\}\right).
\end{multline*}
Note that we can write, as mentioned in Remark~\ref{rmk:estim_error_GS} and using the regularity of $D_n$, 
$$\frac1{|D_n|}\#\big\{(R,z)\in\cK(\omega)\cap D_n\big\}\leq \frac{C}{\#\cJ_n}\sum_{j\in\cJ_n} X_{0}(\tau_j\omega)$$
where $\cJ_n$ is the set of all points $j$ of the lattice $\ccL$ such
that $(W+j)\cap D_n\neq\emptyset$. By the ergodic theorem 
(Theorem~\ref{thm:Birkhoff}) and Lemma~\ref{lem:borne-X0}, we know that
\begin{equation}
\lim_{n\to\ii}\bE\left|\frac{1}{\#\cJ_n}\sum_{j\in\cJ_n} X_{0}(\tau_j\omega)-\bE\big(X_{0}\big)\right|=0.
\end{equation}
The limit also holds almost-surely provided that $D_n\subset B_{c|D_n|^{1/3}}$.
This means in particular that
\begin{equation*}
\frac{\underline\cF_{T,\mu}(\omega,D_n)}{|D_n|}\geq \frac{1-C/\ell}{|D_n|}\int_G \frac{\underline\cF_{T,\mu}\big(\omega,D_n\cap(g\ell\triangle)\big)}{|\ell\triangle|}\,dg-\frac{C}\ell(1+Y_n), 
\end{equation*}
where $Y_n\to0$ strongly in $L^1(\Omega)$.
Now we argue similarly as in~\cite[Proof of Thm.~1]{HaiLewSol_1-09}. We write the integral over translations and rotations as
\begin{multline}
\frac{1}{|D_n|}\int_G \frac{\underline\cF_{T,\mu}\big(\omega,D_n\cap(g\ell\triangle)\big)}{|\ell\triangle|}\,dg\\=\frac{1}{|D_n|}\sum_{j\in\ccL}\int_{W}d\tau\int_{SO(3)}dR\;\frac{\underline\cF_{T,\mu}\big(\omega,D_n\cap(R\ell\triangle+\tau+j)\big)}{|\ell\triangle|}. 
\label{eq:decompose-A5}
\end{multline}
We denote by $\cG_n$ the set of all $j\in\ccL$ such that $R\ell\triangle+\tau+j\subset D_n$ for all $\tau\in W$ and all $R\in SO(3)$. Similarly we denote by $\partial\cG_n$ the set of all the indices such that $(R\ell\triangle+\tau)+j\cap \partial D_n\neq\emptyset$ for at least one $R\in SO(3)$ or one $\tau\in W$. Using the regularity of $D_n$ we have $\#\partial\cG_n\leq C|D_n|^{2/3}$ and we obtain that
\begin{align*}
|D_n|&=\int_G \frac{|D_n\cap g\ell\triangle|}{|\ell\triangle|}\,dg\\
&=|W|(\#\cG_n)+\sum_{j\in\partial\cG_n}\int_W\,d\tau\int_{SO(3)}\,dR \frac{|D_n\cap(R\ell\triangle+\tau+j)|}{|\ell\triangle|}\\
&\leq |W|(\#\cG_n)+|W|(\#\partial\cG_n)\leq |W|(\#\cG_n)+C|D_n|^{2/3},
\end{align*}
see~\cite[Eq. (17)]{HaiLewSol_1-09}. Hence $\#\cG_n=|W|^{-1}|D_n|+o(|D_n|)$.
Coming back to~\eqref{eq:decompose-A5} and using the stability of matter
$$\underline\cF_{T,\mu}\big(\omega,D_n\cap(R\ell\triangle+\tau+j)\big)\geq -C|\ell\triangle|$$
for the indices $j\in\partial\cG_n$, we get
\begin{multline}
\frac{1}{|D_n|}\int_G \frac{\underline\cF_{T,\mu}\big(\omega,D_n\cap(g\ell\triangle)\big)}{|\ell\triangle|}\,dg\\
\geq \frac{1}{|W|}\int_{W}d\tau\int_{SO(3)}dR\; \left(\frac{1+o(1)}{\#\cG_n}\sum_{j\in\cG_n}\frac{\underline\cF_{T,\mu}\big(\tau_j\omega,R\ell\triangle+\tau)\big)}{|\ell\triangle|}\right)-C|D_n|^{-1/3}. 
\label{eq:decompose-A5-1}
\end{multline}
Therefore we have proved that
\begin{multline*}
\frac{\underline\cF_{T,\mu}(\omega,D_n)}{|D_n|}
\geq \frac{1-C/\ell}{|W|}\int_{W}d\tau\int_{SO(3)}dR\! \left(\frac{1+o(1)}{\#\cG_n}\sum_{j\in\cG_n}\frac{\underline\cF_{T,\mu}\big(\tau_j\omega,R\ell\triangle+\tau)\big)}{|\ell\triangle|}\right)\\
-C|D_n|^{-1/3}-\frac{C}\ell(1+Y_n).
\end{multline*}
We now subtract $f(T,\mu)$ on both sides, take the negative part and average over $\omega$. We obtain 
\begin{multline*}
\bE\left[\frac{\underline\cF_{T,\mu}(\cdot,D_n)}{|D_n|}-f(T,\mu)\right]_-
\leq C|D_n|^{-1/3}+\frac{C}\ell \big(1+\bE|Y_n|+f(T,\mu)\big)\\
+(1-C/\ell)\frac{1}{|W|}\int_{W}d\tau\int_{SO(3)}dR\; \bE\left|\frac{1+o(1)}{\#\cG_n}\sum_{j\in\cG_n}\frac{\underline\cF_{T,\mu}\big(\tau_j\,\cdot,R\ell\triangle+\tau)\big)}{|\ell\triangle|}-f(T,\mu)\right|.
\end{multline*}
By the ergodic theorem (Theorem~\ref{thm:Birkhoff}), we have 
$$\lim_{n\to\ii}\bE\left|\frac{1+o(1)}{\#\cG_n}\sum_{j\in\cG_n}\frac{\underline\cF_{T,\mu}\big(\tau_j\,\cdot,R\ell\triangle+\tau)\big)}{|\ell\triangle|}- \bE\frac{\underline\cF_{T,\mu}\big(\cdot,R\ell\triangle+\tau)\big)}{|\ell\triangle|}\right|=0.$$
By the dominated convergence theorem, we deduce that
\begin{multline*}
\lim_{n\to\ii}\frac{1}{|W|}\int_{W}d\tau\int_{SO(3)}dR\; \bE\left|\frac{1+o(1)}{\#\cG_n}\sum_{j\in\cG_n}\frac{\underline\cF_{T,\mu}\big(\tau_j\,\cdot,R\ell\triangle+\tau)\big)}{|\ell\triangle|}-f(T,\mu)\right|\\
=\frac{1}{|W|}\int_{W}d\tau\int_{SO(3)}dR\;\left|\bE\frac{\underline\cF_{T,\mu}\big(\cdot,R\ell\triangle+\tau)\big)}{|\ell\triangle|}-f(T,\mu)\right|.
\end{multline*}
Thus,
\begin{multline}
\limsup_{n\to\ii}\bE\left[\frac{\underline\cF_{T,\mu}(\omega,D_n)}{|D_n|}-f(T,\mu)\right]_-\\
\leq \frac{1}{|W|}\int_{W}d\tau\int_{SO(3)}dR\;\left|\bE\frac{\underline\cF_{T,\mu}\big(\cdot,R\ell\triangle+\tau)\big)}{|\ell\triangle|}-f(T,\mu)\right|+\frac{C}{\ell}\big(1+f(T,\mu)\big).
\label{eq:estim_almost_final}
\end{multline}
We have proved the existence of the thermodynamic limit for $\bE\,\underline\cF_{T,\mu}$ in Corollary~\ref{cor:thermo-limit-average}.
On the other hand we have a uniform bound
$$\left|\bE\left(\frac{\underline\cF_{T,\mu}\big(\cdot,R\ell\triangle+\tau)\big)}{|\ell\triangle|}\right)\right|\leq C$$
by Lemma~\ref{lem:upper_bound}. By the dominated convergence theorem, we conclude that the right side of~\eqref{eq:estim_almost_final} tends to 0 when $\ell\to\ii$, hence that
\begin{equation*}
\lim_{n\to\ii}\bE\left[\frac{\underline\cF_{T,\mu}(\omega,D_n)}{|D_n|}-f(T,\mu)\right]_-=0.
\end{equation*}
This concludes the proof of Lemma~\ref{lem:lower-bound-liminf}.
\end{proof}

\begin{remark}
\label{rmk:liminf1}
If we assume that $D_n\subset B_{c|D_n|^{1/3}}$, then we can actually show that
\begin{equation}
\liminf_{n\to\ii} \frac{\underline\cF_{T,\mu}(\omega,D_n)}{|D_n|}\geq f(T,\mu)
\label{eq:liminf_as}
\end{equation}
almost surely. Indeed, we have 
$$\frac{1}{\#\cG_n}\sum_{j\in\cG_n}\frac{\underline\cF_{T,\mu}\big(\tau_j\omega,R\ell\triangle+\tau)\big)}{|\ell\triangle|}\geq -C$$
by the stability of matter (Theorem~\ref{thm:stability}). Coming back
to~\eqref{eq:decompose-A5-1}, we can now use Fatou's Lemma and the
almost-sure ergodic theorem (Theorem~\ref{thm:Birkhoff}) to infer 
\begin{align*}
&\liminf_{n\to\ii}\frac{1}{|D_n|}\int_G \frac{\underline\cF_{T,\mu}\big(\omega,D_n\cap(g\ell\triangle)\big)}{|\ell\triangle|}\,dg\\
&\qquad\qquad\geq \frac{1}{|W|}\int_{W}d\tau\int_{SO(3)}dR\; \liminf_{n\to\ii}\left(\frac{1+o(1)}{\#\cG_n}\sum_{j\in\cG_n}\frac{\underline\cF_{T,\mu}\big(\tau_j\omega,R\ell\triangle+\tau)\big)}{|\ell\triangle|}\right)\\
&\qquad\qquad\geq \frac{1}{|W|}\int_{W}d\tau\int_{SO(3)}dR\;\; \bE\left(\frac{\underline\cF_{T,\mu}\big(\cdot,R\ell\triangle+\tau)\big)}{|\ell\triangle|}\right).
\end{align*}
The proof of~\eqref{eq:liminf_as} is then the same as before.
\end{remark}

Now we can deduce our desired result.

\begin{corollary}\label{cor:CV_L_1}
Let $(D_n)\subset\cR_{a,\epsilon}$ be a sequence of regular domains like in the statement of Theorem~\ref{thm:thermo-limit}. Then we have
\begin{equation}
\lim_{n\to\ii}\bE\left|\frac{\underline\cF_{T,\mu}(\omega,D_n)}{|D_n|} - {f}(T,\mu)\right|^q=0
\end{equation}
for $q=1$ if $p=2$ and all $1\leq q<p/2$ if $p>2$.
\end{corollary}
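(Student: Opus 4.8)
The plan is to combine the three facts already established: the convergence of the mean (Corollary~\ref{cor:thermo-limit-average}), the vanishing of the negative part (Lemma~\ref{lem:lower-bound-liminf}), and the uniform $L^{p/2}$ bound (Lemma~\ref{lem:upper_bound}). First I would set $Y_n:=|D_n|^{-1}\underline\cF_{T,\mu}(\,\cdot\,,D_n)-f(T,\mu)$ and use the elementary pointwise identity $|Y_n|=Y_n+2\,[Y_n]_-$, so that $\bE\,|Y_n|=\bE(Y_n)+2\,\bE\,[Y_n]_-$. The first term tends to $0$ since $\bE(Y_n)=|D_n|^{-1}\bE\big(\underline\cF_{T,\mu}(\,\cdot\,,D_n)\big)-f(T,\mu)\to0$ by Corollary~\ref{cor:thermo-limit-average}, and the second tends to $0$ by Lemma~\ref{lem:lower-bound-liminf}. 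Hence $\bE\,|Y_n|\to0$, which is precisely the $L^1(\Omega)$ statement; this settles the case $q=1$, in particular the full claim when $p=2$.

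For $p>2$ it remains to upgrade this $L^1$ convergence to $L^q$ convergence for every $1\le q<p/2$. By~\eqref{eq:simple_upper_bd} together with~\eqref{eq:relation-opt-charges} we have $\sup_n\norm{Y_n}_{L^{p/2}(\Omega)}<\ii$, the bound being uniform because each $D_n$ lies in the fixed class $\cR_{a,\epsilon}$ (the constant $f(T,\mu)$ of course does not affect this). Choosing $\theta\in(0,1]$ with $1/q=\theta+(1-\theta)/(p/2)$, Hölder's inequality gives $\norm{Y_n}_{L^q(\Omega)}\le\norm{Y_n}_{L^1(\Omega)}^{\theta}\,\norm{Y_n}_{L^{p/2}(\Omega)}^{1-\theta}$, and since the first factor goes to $0$ while the second stays bounded, $\norm{Y_n}_{L^q(\Omega)}\to0$, as wanted.

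I do not expect any substantive obstacle at this stage: all the analytic work has been done in Lemmas~\ref{lem:upper_bound} and~\ref{lem:lower-bound-liminf} and Corollary~\ref{cor:thermo-limit-average}, and what remains is the bookkeeping identity $|Y|=Y+2[Y]_-$ plus a one-line interpolation. The only points worth stating explicitly are that the three inputs all refer to the same modified functional $\underline\cF_{T,\mu}$ and the same limit $f(T,\mu)$, and that it is exactly the $L^{p/2}$ control that caps the admissible exponents at $q<p/2$; passing from $\underline\cF_{T,\mu}$ back to the original free energy $\cF_{T,\mu}$ (hence to Theorem~\ref{thm:thermo-limit}) is a separate matter, handled afterwards through Lemma~\ref{lem:A4} and~\eqref{eq:relation-opt-charges}.
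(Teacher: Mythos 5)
Your proof matches the paper's argument essentially line for line: the decomposition $\bE|Y_n|=\bE(Y_n)+2\,\bE[Y_n]_-$ handled by Corollary~\ref{cor:thermo-limit-average} and Lemma~\ref{lem:lower-bound-liminf}, then Hölder interpolation against the uniform $L^{p/2}(\Omega)$ bound from Lemma~\ref{lem:upper_bound} (via~\eqref{eq:relation-opt-charges} and stability of matter). Your version just makes the interpolation exponent $\theta$ explicit, which the paper leaves implicit; there is nothing substantively different.
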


\medskip

\begin{proof}
We write
\begin{multline*}
\bE\left|\frac{\underline\cF_{T,\mu}(\cdot,D_n)}{|D_n|} - {f}(T,\mu)\right|\\
 =\bE\,\frac{\underline\cF_{T,\mu}(\cdot,D_n)}{|D_n|} - {f}(T,\mu)
+2\bE\left[\frac{\underline\cF_{T,\mu}(\cdot,D_n)}{|D_n|} - {f}(T,\mu)\right]_-.
\end{multline*}
The term on the right converges to 0 by Corollary~\ref{cor:thermo-limit-average} and Lemma~\ref{lem:lower-bound-liminf}.
Thus we get convergence in $L^1(\Omega)$.
In addition, we know that $\underline\cF_{T,\mu}(\cdot,D_n)|D_n|^{-1}$ is bounded in
$L^{p/2}(\Omega)$ by Lemma~\ref{lem:upper_bound}. Hence, by interpolation, we have
convergence in $L^q(\Omega)$ with $q=1$ if $p=2$ and $1\leq q<p/2$ if $p>2$.
\end{proof}

\begin{remark}
\label{rmk:liminf2} 
If $D_n\subset B_{c|D_n|^{1/3}}$, then we actually deduce that
\begin{equation}
\liminf_{n\to\ii} \frac{\underline\cF_{T,\mu}(\omega,D_n)}{|D_n|}= f(T,\mu)
\label{eq:liminf_as2}
\end{equation}
Indeed, we have by Fatou's Lemma (using that there is a uniform lower bound by the stability of matter)
\begin{equation*}
0\leq \bE\left(\liminf_{n\to\ii} \frac{\underline\cF_{T,\mu}(\omega,D_n)}{|D_n|}- f(T,\mu)\right)
\leq \liminf_{n\to\ii}\left(\bE\frac{\underline\cF_{T,\mu}(\omega,D_n)}{|D_n|}- f(T,\mu)\right)=0
\end{equation*}
and~\eqref{eq:liminf_as2} follows.
\end{remark}

\subsection*{Step 5. Back to the original free energy}
Our last step is to come back to the original energy $\cF_{T,\mu}(\omega,D)$ for which we do not optimize over the charges of the nuclei which are close to the boundary. First we have $\cF_{T,\mu}(\omega,D_n)\geq\underline\cF_{T,\mu}(\omega,D_n)$ by definition, hence obviously
$$\bE\left[\frac{\cF_{T,\mu}(\cdot,D_n)}{|D_n|} - f(T,\mu)\right]_-\leq \bE\left[\frac{\underline\cF_{T,\mu}(\cdot,D_n)}{|D_n|} - f(T,\mu)\right]_-\longrightarrow0.$$
by Lemma~\ref{lem:lower-bound-liminf}. Consider now a sequence $(D_n')\subset\cR_{a',\epsilon'}$ such that $D_n'\subset D_n$, $|D_n'|=|D_n|+o(|D_n|)$ and ${\rm d}(\partial D_n',\partial D_n)\geq 4\epsilon$. The sets $D_n'$ can for instance be constructed by considering a tiling of $\R^3$ and taking the union of all the cells in $D_n$ which are at the appropriate distance to the boundary of $D_n$. The regularity of such sets is proved in~\cite[Prop. 2]{HaiLewSol_1-09}. By Lemma~\ref{lem:upper_bound} we have 
$$\bE\left(\frac{\underline{\cF}_{T,\mu}(\cdot,D_n)}{|D_n|}\right)\leq \bE\left(\frac{{\cF}_{T,\mu}(\cdot,D_n)}{|D_n|}\right)\leq \bE\left(\frac{\underline{\cF}_{T,\mu}(\cdot,D'_n)}{|D'_n|}\right)+o(1)$$
which proves that 
$$\lim_{n\to\ii}\bE\left(\frac{{\cF}_{T,\mu}(\cdot,D_n)}{|D_n|}\right)=f(T,\mu).$$
The convergence in $L^q(\Omega)$ follows from the proof of Corollary~\ref{cor:CV_L_1}.
This concludes the proof of Theorem~\ref{thm:thermo-limit}.\qed

\section{Proof of Corollary~\ref{cor:neutrality}}\label{sec:proof_cor_neutrality}

Let us denote by $\Gamma_n(\omega)$ the optimal electronic state in Fock space for $\cF_{T,\mu}(\omega,D_n)$, and by $I_n(\omega)$ the total Coulomb interaction energy between the nuclei and the electrons in $D_n$. These quantities depend on $T$ and $\mu$ but we do not emphasize this in our notation. We have by the variational characterization~\eqref{eq:variational} of $\cF_{T,\mu}$ and by Lemma~\ref{lem:upper_bound}
\begin{equation*}
\bE\left( I_n+\tr\Big(\sum_i(-\Delta)_i-\mu\cN\Big)\Gamma_n+T\tr_\cF\Gamma_n\log\Gamma_n\right)=\bE\;\cF_{T,\mu}(\cdot,D_n)\leq C|D_n|.
\end{equation*}
We know from~\cite[Lemma 9]{HaiLewSol_2-09} that
\begin{align*}
\tr\left(\sum_i(-\Delta)_i-\mu\cN\right)\Gamma_n+T\tr_\cF\Gamma_n\log\Gamma_n
&\geq -T\log\left(\tr_\cF e^{-\big(\sum_i(-\Delta)_i-2\mu\cN\big)/2T}\right)\\
&\geq-C|D_n|.
\end{align*}
By the Lieb-Oxford inequality~\cite{LieOxf-80,LieSei-09} and the estimate~\eqref{eq:upper-bd-average} on $\bE\int\rho_{\Gamma_n}^{5/3}+\rho_{\Gamma_n}$, we have 
\begin{equation*}
\bE\,\tr\left(\sum_{k<\ell}\frac{1}{|x_k-x_\ell|}\right)\Gamma_n\geq \frac12 \bE\;{\rm D}(\rho_{\Gamma_n})-C\,\bE\;\int_{D_n}\rho_{\Gamma_n}^{4/3}\geq \frac12 \bE\;{\rm D}(\rho_{\Gamma_n})-C|D_n|
\end{equation*}
with
$${\rm D}(\rho_{\Gamma_n}):=\int_{D_n}\int_{D_n}\frac{\rho_{\Gamma_n}(x)\rho_{\Gamma_n}(y)}{|x-y|}\,dx\,dy$$
denoting the classical Coulomb energy. We deduce that
\begin{equation*}
\bE\left(\frac12 {\rm D}(\rho_{\Gamma_n}) -\int_{D_n}V_n\,\rho_{\Gamma_n}+U_n \right)\leq C|D_n|
\end{equation*}
where $V_n(\omega,x)$ is the electrostatic potential induced by the nuclei in $D_n$ and $U_n(\omega)$ is their Coulomb repulsion. At this step we replace each nucleus $(R,z)$ in $D_n\cap\cK(\omega)$ by a smooth, spherically symmetric, distribution $\nu_R$ leaving in a ball of radius $\min(1,\delta_{R,z}/3)$, the distance to the closest other nucleus. By Newton's theorem, $U_n$ does not change and the electrostatic potential $V_n(\omega,x)$ is modified only when $x$ is in these balls. The new potential $\tilde{V}_n(\omega,x)$ satisfies 
$$\Big|V_n(\omega,x)-\tilde{V}_n(\omega,x)\Big|\leq \overline{Z}\sum_{(R,z)\in D_n\cap\cK(\omega)}\frac{\1_{B(R,\min(1,\delta_{R,z}/3))}(x)}{|x|}.$$
Using that 
$$\left|\int_{D_n}\rho_{\Gamma_n}\big(\tilde{V}_n-V_n\big)\right|\leq \frac25\int_{D_n}\Big|V_n-\tilde{V}_n\Big|^{5/2}+\frac35\int_{D_n}\rho_{\Gamma_n}^{5/3}$$
and the estimate 
\begin{align*}
\int\Big|V_n-\tilde{V}_n\Big|^{5/2}\;&\leq \overline{Z}^{5/2}\sum_{(R,z)\in D_n\cap\cK(\omega)}\int_{B(R,\min(1,\delta_{R,z}/3))}\frac{dx}{|x|^{5/2}}\\
&\leq C \overline{Z}^{5/2}\sum_{(R,z)\in D_n\cap\cK(\omega)}1,
\end{align*}
we deduce from the Lieb-Thirring bound~\eqref{eq:upper-bd-average} and the fact that $\bE\,X_0<\ii$, that
\begin{equation*}
\frac12\bE\; {\rm D}\left(\rho_{\Gamma_n}-\sum_{(R,z)\in D_n\cap\cK(\cdot)}z\nu_R\right)\leq C|D_n|+ \frac12\bE\left(\sum_{(R,z)\in D_n\cap\cK} z^2\,{\rm D}\left(\nu_R\right)\right).
\end{equation*}
Since $D\left(\nu_R\right)=C/\delta_{R,z}$ and $\bE(X_1)<\ii$ by assumption, the last term is bounded by a constant times $|D_n|$ and we end up with the bound 
\begin{equation*}
\bE\; {\rm D}\left(\rho_{\Gamma_n}-\sum_{(R,z)\in D_n\cap\cK(\cdot)}z\nu_R\right)\leq C|D_n|.
\end{equation*}
At this step we use a capacity estimate, namely
$${\rm D}(f)\geq \left(\int_{D_n}f\right)^2\inf_{\substack{g\in H^{1}_0(D_n)\\ \int g=1}}{\rm D}(g)\geq \left(\int_{D_n}f\right)^2\inf_{\substack{g\in H^{1}_0(B_n)\\ \int g=1}}{\rm D}(g)=C\frac{\left(\int_{D_n}f\right)^2}{{\rm diam}(D_n)}$$
where $B_n$ is the smallest ball containing $D_n+B_1$. Using the regularity of $D_n$ as well as the fact that ${\rm diam}(D_n)\leq C|D_n|^{1/3}$, we get our final estimate
$$\bE\;\left|\frac1{|D_n|}\int_{D_n}\rho_{\Gamma_n} - \frac1{|D_n|}\sum_{(R,z)\in D_n\cap\cK(\cdot)}z\right|^2\leq \frac{C}{|D_n|^{2/3}}.$$
Since $X_0\in L^2(\Omega)$ and $z\in L^\ii(\Omega)$ by assumption, we
have by the ergodic theorem (Theorem~\ref{thm:Birkhoff})
$$\lim_{n\to\ii}\bE\;\left|\frac{1}{|D_n|}\sum_{(R,z)\in D_n\cap\cK(\cdot)}z-Z_{\rm av}\right|^2=0$$
and it follows that
$$\lim_{n\to\ii}\bE\;\left|\frac1{|D_n|}\int_{D_n}\rho_{\Gamma_n}-Z_{\rm av}\right|^2=0.$$
This is valid for any chosen temperature $T\geq0$ and chemical potential $\mu$.

To conclude the proof, we use the variational principle~\eqref{eq:variational} and get
$$\,\cF_{T,0}(\omega,D_n) - \mu\,{N_{T,\mu}(\omega,D_n)}\leq {\cF_{T,\mu}(\omega,D_n)}\leq {\cF_{T,0}(\omega,D_n)} - \mu\,{N_{T,0}(\omega,D_n)},$$
where we recall that $N_{T,\mu}(\omega,D_n)=\int{\rho_{\Gamma_n}}$ is the number of electrons at temperature $T\geq0$ and chemical potential $\mu$.
From the limit of $N_{T,\mu}(\cdot,,D_n)$ in $L^2(\Omega)$, hence in $L^1(\Omega)$, we finally conclude that
$$\lim_{n\to\ii}\bE\frac{\cF_{T,\mu}(\cdot,D_n)}{|D_n|}=\lim_{n\to\ii}\bE\frac{\cF_{T,0}(\cdot,D_n)}{|D_n|}- \mu\, Z_{\rm av}=f(T,0) - \mu\, Z_{\rm av}.$$
It is clear from the variational principle that $T\mapsto \bE\;\cF_{T,0}(\cdot,D_n)$ is concave, hence the limit $f(T,0)$ must also be concave. This concludes the proof of Corollary~\ref{cor:neutrality}.\qed

\bigskip

Let us end this paper by two final remarks.

\begin{remark}[A better convergence for simplices]
Like in~\cite{HaiLewSol_1-09,HaiLewSol_2-09}, it is possible to strengthen the result when $D_n$ is a sequence made of dilated simplices which are possibly rotated and translated. More precisely, we have 
$$\lim_{\ell\to\ii}\norm{\frac{\cF_{T,\mu}(\omega, g\ell\triangle)}{|\ell\triangle|}-f(T,\mu)}_{L^q(\Omega,L^\ii(G))}=0$$
for all $1\leq q<p/2$ if $p>2$ and $q=1$ if $p=2$.
\end{remark}


\begin{remark}[An abstract setting]
Our approach in this paper is general and it can be stated in an abstract fashion, like in~\cite{HaiLewSol_1-09}. Consider a random variable $\cF(\omega,D)$ defined on all bounded open subsets $D$ of $\R^3$. Assume that $\cF$ satisfies the following assumptions:

\medskip

\noindent {\rm (A1)}\qquad  $\cF(\omega,\emptyset)=0$ a.s.;

\medskip

\noindent {\rm (A2)}\qquad  $\cF(\omega,D)\geq -C$ for all $D$ and almost all $\omega$;

\medskip

\noindent {\rm (A3)}\qquad  $\cF(\omega,D+k)=\cF(\tau_k\omega,D)$ for all $D$, all $k\in\ccL$ and almost all $\omega$;

\medskip

\noindent {\rm (A5)}\qquad  $\displaystyle\cF(\omega,D)\geq \frac{1-\alpha(\ell)}{|\ell\triangle|}\int_{G}\cF(\omega,D\cap g\ell\triangle)\,dg - \big(|D|_{\alpha,\epsilon}+R_D(\omega)\big)\alpha(\ell)$ for all $D$ and almost all $\omega$, where $R_D$ is such that $\bE|R_D|\leq C|D|_{\alpha,\epsilon}$ for all $D$;

\medskip

\noindent {\rm (A4-6)}\quad  $D\mapsto\bE\,\cF(\cdot,D)$ satisfies the assumptions {\rm (A4)} and {\rm (A6)} of~\cite{HaiLewSol_1-09}.

\medskip

\noindent Under these conditions, the results of~\cite{HaiLewSol_1-09} as well as the method of the present article show that there exists a constant $f$ such that
$$\lim_{n\to\ii}\bE\left|\frac{\cF(\cdot,D_n)}{|D_n|}-f\right|=0$$
for any growing regular sequence $(D_n)$ like in Theorem~\ref{thm:thermo-limit}. Also
$$\liminf_{n\to\ii}\frac{\cF(\omega,D_n)}{|D_n|}=f\quad\text{a.s.}$$
when $D_n\subset B_{c|D_n|^{1/3}}$.
\end{remark}


\end{document}